\documentclass[9pt]{extarticle}
\usepackage[utf8]{inputenc}
\usepackage{mathtools}
\usepackage[export]{adjustbox}
\usepackage{amssymb}
\usepackage{braket}
\usepackage{bbm}
\usepackage[mathcal]{euscript}
\usepackage[a4paper, total={6.25in, 8in}]{geometry}
\usepackage[title]{appendix}
\usepackage{amsmath}
\usepackage{amsthm}
\usepackage{verbatim}
\usepackage{marvosym}
\usepackage{caption}
\usepackage{subcaption}
\usepackage{relsize}
\usepackage{epigraph}
\newtheorem{lemma}{Lemma}[section]
\newtheorem{definition}{Definition}[section]
\newtheorem{corollary}{Corollary}[section]
\newtheorem*{nonumthm}{Theorem}
\setlength \epigraphwidth {\linewidth}
\setlength \epigraphrule {0pt}
\AtBeginDocument{}

\usepackage{etoolbox}
    \makeatletter
    \newlength\epitextskip
    \pretocmd{\@epitext}{\em}{}{}
    \apptocmd{\@epitext}{\em}{}{}
    \patchcmd{\epigraph}{\@epitext{#1}\\}{\@epitext{#1}\\[\epitextskip]}{}{}
    \makeatother

    \setlength\epigraphrule{0pt}
    \setlength\epitextskip{2ex}
    \setlength\epigraphwidth{.8\textwidth}
\usepackage{fourier}
\usepackage{hyperref}
\hypersetup{
    colorlinks=true,
    linkcolor=blue,
    filecolor=magenta,      
    urlcolor=cyan,
}
\title{On the relativistic quantum mechanics of\\ a photon between two electrons\\ in 1+1 dimensions}
\author{Lawrence Frolov$^*$, Samuel Leigh$^\dagger$, and Shadi Tahvildar-Zadeh$^*$\\[10pt]
{\small $^*$ Department of Mathematics, Rutgers University (New Brunswick)}\\
{\small $^\dagger$ Department of Physics, University of Washington}
}
\date{November 2024}
\numberwithin{equation}{section}
\newtheorem{theorem}{Theorem}[section]
\newtheorem{proposition}{Proposition}[section]
\theoremstyle{definition}
\newtheorem{remark}{Remark}[section]
\begin{document}
\maketitle
\begin{abstract}
A Lorentz-covariant system of wave equations is formulated for a quantum-mechanical three-body  system  in  one  space  dimension,  comprised  of  one photon and two identical massive spin one-half Dirac particles, which can be thought of as two electrons (or alternatively, two positrons). Manifest covariance is achieved using Dirac's formalism of multi-time wave functions, i.e, wave functions $\Psi(\textbf{x}_{\text{ph}},\textbf{x}_{\text{e}_1},\textbf{x}_{\text{e}_2})$ where $\textbf{x}_{\text{ph}},\textbf{x}_{\text{e}_1},\textbf{x}_{\text{e}_2}$ are generic spacetime events of the photon and two electrons respectively. Their  interaction  is  implemented  via  a  Lorentz-invariant  no-crossing-of-paths  boundary  condition  at  the  coincidence  submanifolds $\{\textbf{x}_{\text{ph}}=\textbf{x}_{\text{e}_1}\}$ and $\{\textbf{x}_{\text{ph}}=\textbf{x}_{\text{e}_2}\}$ compatible with conservation of probability current. The corresponding initial-boundary value problem is shown to be well-posed, 
and it is shown that the unique solution can be represented by a convergent infinite sum of Feynman-like diagrams, each one corresponding to the photon bouncing between the two electrons a fixed number of times.
\end{abstract}
\section{Introduction}
\epigraph{\centering The Compton effect, at its discovery, was regarded as a simple collision of two bodies, and yet the  detailed  discussion  at  the  present  time  involves  the  idea  of  the  annihilation  of  one  photon and the simultaneous creation of one among an infinity of other possible ones.  We would like to be able to treat the effect as a two-body problem, with the scattered photon regarded as the same individual as the incident, in just the way we treat the collisions of electrons.}{\textbf{C. G. Darwin,} \textit{Notes on the Theory of Radiation (1932)}}
Despite it being over 90 years since the words above appeared in print, it is still true that "the detailed discussion [of the Compton effect] at the present time involves the idea of annihilation of one photon and the simultaneous creation of one among an infinity of others;" see \cite{Comp1,Comp2,Comp3,Comp4,Comp5}. Not only has Darwin’s goal, of treating electrons and photons on an equal footing
within a quantum-mechanical framework of a fixed number of particles, remained elusive, but modern quantum-field theorists have even come to the conclusion that “a relativistic quantum
theory of a fixed number of particles, is an impossibility;” \cite{Wein}.
\par
The impossible was however  done in \cite{KLTZ} in the $1+1$ dimensional setting, where the authors were able to successfully couple the relativistic quantum-mechanical photon wave equation of \cite{KTZ} with Dirac's relativistic quantum-mechanical electron wave equation in a Lorentz-covariant manner to accomplish
Darwin’s goal: to “treat the [Compton] effect as a two-body problem, with the scattered photon
regarded as the same individual as the incident, in just the way we treat the collisions of electrons”. They did not do this by studying a quantization of the Maxwell-Dirac system, since Maxwell's equations in $1+1$ dimensions do not describe any electromagnetic radiation that can be quantized. Instead, they worked in Dirac's manifestly Lorentz covariant formalism of multi-time wave functions \cite{Dir32}, with the Compton interaction induced on the two body wave function $\Psi^{(2)}(\textbf{x}_\text{ph},\textbf{x}_\text{e})$ by imposing a suitable boundary condition on it at the subset of co-incident events $\{\textbf{x}_\text{e}=\textbf{x}_\text{ph}\}$. 
\par
In this paper we see how, after reformulating their results in terms of Feynman-like diagrams\footnote{As we will see, the main difference between these and the traditional Feynman diagrams is that the vertices in our diagrams have degree 4, not 3.}, that particular boundary condition induces a local pair interaction between the electron and the photon that is worthy of being called "Compton Scattering in $1+1$ dimensions". Of course, a $1+1$ dimensional treatment
of the Compton effect is too simplistic to allow a comparison with empirical data, so our analysis of this toy model should only be seen as a \textit{proof of concept}. We will see that after enough time has passed for the photon to interact with the electron, the two-body wave function becomes a sum of two types of diagrams: diagrams where the electron and photon freely propagate, and diagrams where the pair become incident before freely propagating:
\begin{figure}[htbp]
\centering
$\Psi^{(2)}(\textbf{x}_\text{ph},\textbf{x}_\text{e})$ \qquad $=$ \qquad
\includegraphics[width=1.8cm, height=1cm, valign=c]{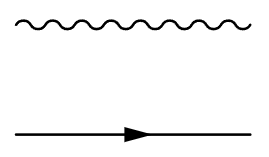}
        \quad $\mathlarger{+} \quad $  \includegraphics[width=1.8cm, height=1cm, valign=c]{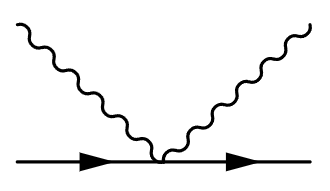}
        \caption{The wave function can be represented as a sum of freely propagating data and data resulting from a collision.}
\end{figure}
\par
The main purpose of this paper is to extend the results of \cite{KLTZ} to the case of two electrons interacting solely through their individual contact interactions with a mediating photon. We expect that placing a photon between the two freely propagating electrons will produce an effective interaction along the electron's light cones as the photon bounces back and forth between them. However, allowing the photon to bounce arbitrarily between the two massive electrons has a serious consequence. So long as the electrons remain sufficiently close together throughout their contact interactions, any number of bounces can occur in a finite time. Since our wave function must be a super-position over all possible interactions, it becomes an infinite sum of diagrams of the following form\footnote{Note that in this and future diagram sums like it, one diagram represents also the ones symmetric to it under the exchange of label for the two electrons.}
\begin{figure}[htbp]
\centering
$\Psi^{(3)}(\textbf{x}_\text{ph},\textbf{x}_{\text{e}_1},\textbf{x}_{\text{e}_2})$ \qquad $=$ \qquad
\includegraphics[width=1.8cm, height=1cm, valign=c]{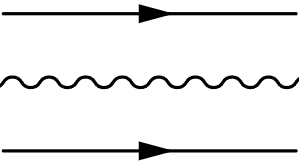} \quad
        $\mathlarger{+} \quad $
        \includegraphics[width=1.78cm, height=1cm, valign=c]{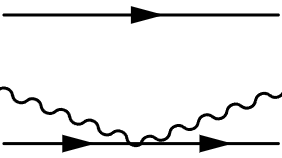}
        \quad 
        $\mathlarger{+} \quad  $
        \includegraphics[width=1.8cm, height=0.9cm, valign=c]{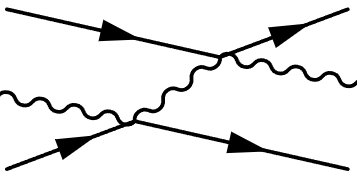}
        \quad
        \\[10pt] $\mathlarger{+}   $
\includegraphics[width=1.8cm, height=0.9cm, valign=c]{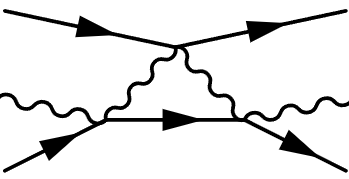}
        $\mathlarger{+} \quad $
\includegraphics[width=1.8cm, height=0.9cm, valign=c]{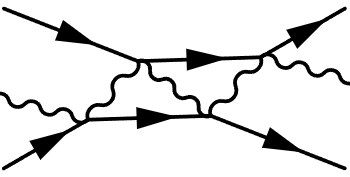} \quad
        $\mathlarger{+} \quad  $
        \includegraphics[width=1.8cm, height=0.9cm, valign=c]{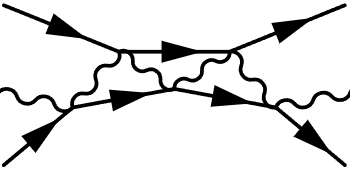} \quad
        $\mathlarger{+} \quad  $
        \includegraphics[width=1.8cm, height=0.9cm, valign=c]{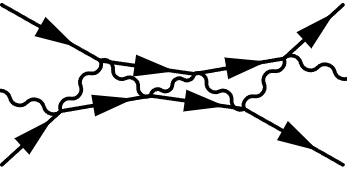}
        $\mathlarger{+} ... $
        \caption{\label{fig:Fey3} The photon could have bounced arbitrarily many times between the electrons before the three propagated to their final destinations.}
\end{figure}
\par
This makes it effectively impossible to write down a closed-form solution for the three-body wave function evolution, in contrast to what was done for the electron-photon wave function in \cite{KLTZ}. Worse, it is not clear that this infinite sum converges in some appropriate function space. 

Our goal here is to take one more step in the direction of a fully rigorous and manifestly Lorentz-covariant quantum theory that incorporates creation, annihilation, and scattering of any number of photons and electrons in one space dimension.  The following is an informal statement of the main result of this paper (for the precise formulation, see Theorem~\ref{thm:main}.)

\bigskip
\noindent{\bf Main Result:} 
   {\em Let $\Psi_\epsilon$ denote the truncated sum where one discards all diagrams in which the photon bounces off one electron and then the other in less than $\epsilon$ time. Then the sequence $\{\Psi_{1/n}\}$ converges in $L^2$ as $n\to \infty$, and the dynamics produced by the limit correspond to the infinite sum of Feynman diagrams in Fig.~\ref{fig:Fey3}.}

\begin{remark}
Although there is no $1+1$ dimensional toy QED for us to compare our $1+1$ dimensional relativistic toy electron-photon quantum mechanics with, our approach is deeply related to a quantum field-theoretical formulation. Historically there is a direct line from Dirac's multi-time formalism to the QFT formalism of Schwinger and Tomonaga, see \cite{MultiQFT}. To see this connection, let $\hat{\Psi}$ stand for the Heisenberg field operators, $\ket{0}$ denote the zero-particle state and $\ket{\Psi_0}$ the Heisenberg state vector. Then, for all spacelike configurations $(\textbf{x}_1,...,\textbf{x}_N)$ of any number $N$ of particles, the $N$-time $N$-body wave function is given by (neglecting spin)
\begin{equation}
    \Psi^{(N)}(\textbf{x}_1,...,\textbf{x}_N)=\frac{1}{\sqrt{N!}}\braket{0|\hat{\Psi}(\textbf{x}_1) ... \hat{\Psi}(\textbf{x}_N)|\Psi_0}.
\end{equation}
So, the multi-time picture should be regarded as a covariant particle-position representation of the quantum state.
Furthermore, for the quantum field-theoretical dynamics, the particle creation/annihilation formalism requires working with wave functions that belong to the multi-time analog of a Fock space, represented by the sequence
\begin{equation}
    \Psi=(\Psi^{(0)},\Psi^{(1)},\Psi^{(2)},...)
\end{equation}
where each $\Psi^{(N)}$ is an $N$-particle multi-time wave function as above. The dynamics for the multi-time wave functions then typically mix the different particle numbers in the interaction terms, i.e, the right hand side of the abstract Schr\"odinger equation
\begin{equation}\label{Multi-Time QFT}
    i\partial_{t_k}\Psi^{(N)}=(H_k \Psi)^{(N)}
\end{equation}
generally contains $\Psi^{(M)}$ for $M \neq N$.
This quantum-field formalism is usually applied with all times synchronized, cf. \cite{Thaller}; but recently a multi-time formulation has been achieved for certain quantum field theories (essentially those with local
interactions and finite propagation speed), see \cite{MultiWFQFT}.

Unique solvability of the system of evolution equations (\ref{Multi-Time QFT}) requires imposing suitable \textit{boundary conditions} at the subset of co-incident events $\{ \textbf{x}_j=\textbf{x}_k\}$. We refer to the set of coincidence points as an \textit{intrinsic boundary}, because it is system-intrinsic and not determined by some extraneous constraint (such as a container wall). In particular, the quantum
field-theoretical creation/annihilation formalism can be implemented by formulating suitable intrinsic boundary conditions on the $N$-body multi-time wave functions which involve, in addition to such boundary points, interior points of the $N-1$ body wave functions, see \cite{Hamiltonians,Avoid} for single time wave functions. This has led to a notion of \textit{interior-boundary conditions}. For an extension to multi-time wave functions, see \cite{LiNi2020}. In this framework, our quantum-mechanical approach can then be viewed as a special case of the quantum field-theoretical approach, since we restrict our attention to boundary conditions in which the $3$-body dynamics are disjoint from all other $M$-body wave functions, so that we only have scattering. 
\end{remark}

The rest of this paper is structured as follows:
Section $2$ collects the basic mathematical ingredients needed to formulate our model.
In Section $3$ we define the three-time three-body wave function and derive the interaction-inducing \textit{intrinsic boundary} condition from probability conservation.
Section $4$ reviews the results of \cite{KLTZ} regarding the multi-time dynamics of a photon-electron system, and reformulates their results in terms of Feynman diagrams.
In Section $5$ we use said Feynman diagram formulation of photon-electron interactions to study the multi-time dynamics for a photon between two electrons, and prove our main result. 
In Section $6$ we provide a summary and an outlook on open questions left for future work.
Finally, in the Appendix we recall various well-known solution formulas, and we provide the outline of the existence proof by \cite{LiNi2020} for multi-particle dynamics using the contraction mapping theorem.
\section{Preliminaries} 
For any $d \in \mathbb{N}$, we let $\boldsymbol{\eta} = (\eta_{\mu \nu})$ denote the Minkowski metric on Minkowski spacetime $\mathbb{R}^{1,d}$:
\begin{equation}
\boldsymbol{\eta} = \boldsymbol{\eta}^{-1} = \text{diag(1,-1,...,-1)}.
\end{equation}
For $d=1$, the complexified spacetime algebra $\mathcal{A}$ is isomorphic to the algebra of $2 \times 2$ complex matrices $\mathcal{A}:=Cl_{1,1}(\mathbb{R})\cong M_2(\mathbb{C})$. We take our basis to be $\{ \mathbb{1},\gamma^0,\gamma^1,\gamma^0 \gamma^1\}$ where
\begin{equation}
\mathbb{1} \coloneqq
\begin{pmatrix}
1 & 0 \\
0 & 1
\end{pmatrix}
,
\hspace{0.2cm}
\gamma^{0} \coloneqq
\begin{pmatrix}
0 & 1 \\
1 & 0
\end{pmatrix}
,
\hspace{0.2cm}
\gamma^{1} \coloneqq
\begin{pmatrix}
0 & -1 \\
1 & 0
\end{pmatrix}
\end{equation}
satisfy the Clifford Algebra relations
\begin{equation}
\gamma^{\mu} \gamma^{\nu} + \gamma^{\nu} \gamma^{\mu} = 2 \eta^{\mu \nu} \mathbb{1}.
\end{equation}
\subsection{Electron wave function and Dirac Equation}
For a single electron in 1+1 dimensions, the wave function for an electron is given by a rank-one spinor field on the electron's configuration space-time $\mathbb{R}^{1,1}_\text{e}$, equipped with coordinates $(t_\text{e},s_\text{e})$:
\begin{equation}
\Psi_{\text{e}} =
\begin{pmatrix}
\psi_- \\
\psi_+
\end{pmatrix}.
\end{equation}
As a rank-one spinor field, the components of $\Psi_e$ transform under the action of $O(1,1)$ as
\begin{equation}
    \psi_{\pm} \xrightarrow{\Lambda} e^{\mp a/2} \psi_{\pm}, \quad \psi_{\pm} \xrightarrow{\mathbf{P}}  \psi_{\mp}, \quad \psi_{\pm} \xrightarrow{\mathbf{T}}  \psi_{\mp}^*
\end{equation}
where $\Lambda$ refers to proper Lorentz transformations, $\mathbf{P}$ space reflections, $\mathbf{T}$ time reflections, and $^*$ refers to the complex conjugate number.
\par
The electron wave function satisfies the massive Dirac equation
\begin{equation}
-i\hbar \gamma^{\mu}\frac{\partial}{\partial x^\mu_\text{e}}\Psi_\text{e} +m_\text{e}\Psi_\text{e}=0
\end{equation}
where $m_e$ is the mass of the electron, $\hbar$ is Planck's constant, and we have set $c$, the speed of light in vacuum, equal to one. In this paper, we will primarily concern ourselves with solving initial-value problems for such equations. Given smooth, rapidly decaying initial data $\mathring{\Psi}:\mathbb{R}^1_\text{e}\to \mathbb{C}^2$, the unique solution to the Dirac equation IVP 
\begin{equation}\label{Dirac IVP Intro}
    \left\{\begin{array}{rcl}
        -i\hbar \gamma^\mu \frac{\partial}{\partial x^\mu_\text{e}} \Psi_\text{e} + m_\text{e} \Psi_\text{e} &= &0
        \\
        \Psi_\text{e} \big{|}_{t_\text{e}=0}&=&\mathring{\Psi}_\text{e}
    \end{array}\right.
    \end{equation}
can be found by recalling that each component of $\Psi$ satisfies a Klein-Gordon equation. Solving the Dirac IVP becomes equivalent to solving the Klein-Gordon IVP \begin{equation}\label{1 El KG IVP}
        \left\{\begin{array}{rcl}
            \Box_\text{e} \psi_\pm+\omega^2 \psi_\pm&=&0
            \\
            \psi_\pm \big{|}_{t_\text{e}=0}&=&A_\pm(s_\text{e})
            \\
            \partial_{t_\text{e}} \psi_\pm \big{|}_{t_\text{e}=0}&=&B_\pm(s_\text{e})
        \end{array}\right.
    \end{equation}
    where $\omega=\frac{m_\text{e}}{\hbar}$ and the initial data $B_{\pm}$ is derived from the Dirac equation and will depend on $A_{\mp}$ and $\partial_1 A_{\pm}$.
    We use this to derive the time evolution operator (i.e. {\em propagator}) for the Dirac IVP, and write the solution as
    \begin{equation} \label{propE}     \Psi_\text{e}(t_\text{e},s_\text{e})=\mathcal{E}(t_\text{e},s_\text{e})\mathring{\Psi}_\text{e}.
    \end{equation}
 The electronic propagator $\mathcal{E}(t_\text{e},s_\text{e})$ acting on smooth rapidly decaying initial data is defined explicitly by equation (\ref{Dirac Solution}) in the appendix. Using the general results on existence and uniqueness of propagators for normally hyperbolic operators, e.g. \cite{BGPbook} (Theorems 3.2.11 and 3.3.1, and Corollary 3.4.3) one obtains the existence and uniqueness of $\mathcal{E}(t_\text{e},s_\text{e})$ acting on distributions.
 \par
Also relevant is a discussion of the Goursat problem for the Dirac equation. The Goursat problem arises when one seeks to solve a Klein-Gordon equation with data specified on two future null rays emanating from some point. So long as this data is consistent with the Dirac equation, it will produce a solution for the Dirac equation as well. Without loss of generality, we take the point to be $(0,0)$, and write the Goursat problem as 
    \begin{equation}
        \begin{cases}
            \Box_\text{e} \Psi+\omega^2 \Psi&=0 \quad \text{for } |s_\text{e}|<t_\text{e}
            \\
            \Psi(b,b)&=F(b)
            \\
            \Psi(c,-c) &=G(c)
        \end{cases}
    \end{equation}
Written explicitly in the Appendix, the solution has us multiply the (smooth, rapidly decaying) functions $F$ and $G$ by certain propagators and integrate over the intersection of the null-rays with the backwards light-cone of $(t_\text{e},s_\text{e})$:
\begin{figure}[htbp]
        \centering
        \includegraphics[width=4.77cm, height=3cm]{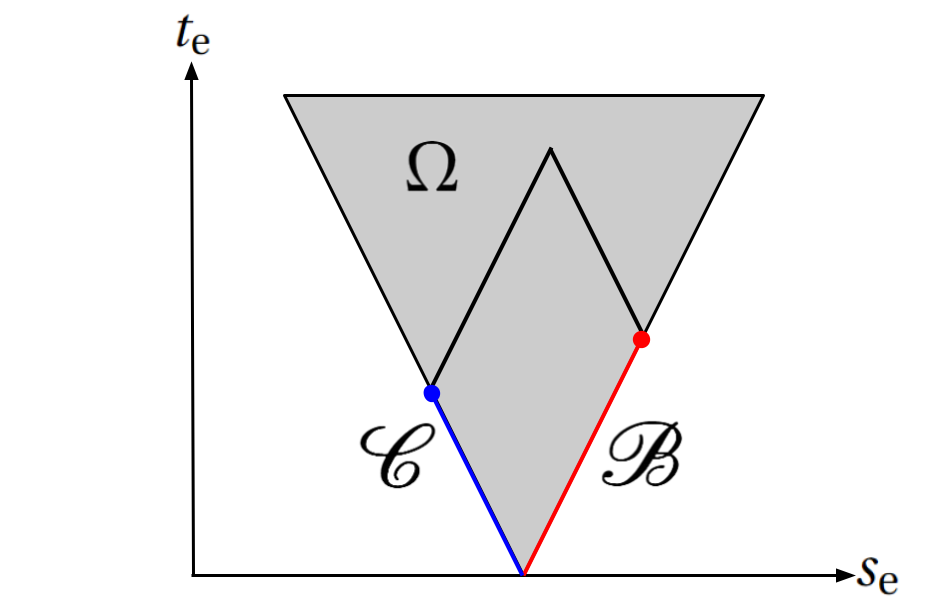}
        \caption{Data along lines $\mathcal{B}$ and $\mathcal{C}$ are sufficient to solve the Goursat problem in $\Omega$}
        \label{fig:Goursat Electron}
    \end{figure}
   \begin{equation}\label{propG}
       \Psi \big{|}_\Omega=G^R(t_\text{e},s_\text{e})F+G^L(t_\text{e},s_\text{e})G,
   \end{equation}
   where $G^R$ and $G^L$ are the right and left time evolution operators of the Goursat problem, their action on smooth functions defined explicitly by equations (\ref{GR Def}), (\ref{GL Def}) in the Appendix, and extended uniquely to distributions in the same manner discussed above for $\mathcal{E}(t_\text{e},s_\text{e})$.
\subsection{The Photon-KTZ Equation}
For a single photon in 1+1 dimensions, the wave function for a photon is shown in \cite{KTZ} to be given by a rank two bi-spinor field defined on the photon's configuration space-time $\mathbb{R}^{1,1}_\text{ph}$, equipped with coordinates $(t_\text{ph},s_\text{ph})$:
\begin{equation}
\Psi_{\text{ph}} = 
\begin{pmatrix}
0 & \chi_- \\
\chi_+ & 0
\end{pmatrix}.
\end{equation}
As a rank-two bi-spinor field, the components of $\Psi_\text{ph}$ transform under the action of $O(1,1)$ as
\begin{equation}
    \chi_{\pm} \xrightarrow{\Lambda} e^{\mp a} \chi_{\pm}, \quad \chi_{\pm} \xrightarrow{\mathbf{P}}  \chi_{\mp}, \quad \chi_{\pm} \xrightarrow{\mathbf{T}}  \chi_{\mp}^*.
\end{equation}
In 1+1 dimensions the KTZ photon wave equation reduces to the massless Dirac equation
\begin{equation}
-i\hbar \gamma^{\mu}\frac{\partial}{\partial x^\mu_\text{ph}}\Psi_\text{ph}=0.
\end{equation}
Given initial data $\mathring{\Psi}_\text{ph}$,
 the solution to the corresponding IVP is given by transports in each component
\begin{equation}\label{Photonic Transport}
    \Psi_\text{ph}(t_\text{ph},s_\text{ph})=\mathcal{P}(t_\text{ph},s_\text{ph})\mathring{\Psi}_\text{ph}:=\begin{pmatrix}
        0 & \mathring{\chi}_-(s_\text{ph}-t_\text{ph})
        \\
        \mathring{\chi}_+(s_\text{ph}+t_\text{ph}) & 0
    \end{pmatrix}.
\end{equation}
$\chi_+$ is transported along \textit{left moving} null rays; while $\chi_-$ is transported along \textit{right moving} null rays.
\subsubsection{Photon probability current}
Similarly to the electron, the photonic wave equation admits a conserved probability current. Given any Killing field $X$ of Minkowski space-time, the manifestly covariant current 
\begin{equation}
    j^\mu_{X}:=\text{tr}(\overline{\Psi}_\text{ph} \gamma^\mu \Psi_\text{ph}\gamma(X))
\end{equation}
is conserved, i.e
\begin{equation}
    \partial_\mu j^\mu_X=0.
\end{equation}
Here $\gamma(X)=\gamma^\mu X_\mu$.
\newline
In \cite{KTZ} it is proven that when $X$ is causal and future-directed, then so is $j_X$, i.e $\eta(j_X,j_X)>0$ and $j^0_X \geq 0$. Thus, this satisfies the appropriate version of the Born rule. However, there is some debate about which time-like Killing field $X$ should be used to define the probability current for a photon. Setting $X=\partial_t$, the time-like unit vector of the "lab frame", returns a probability current formula that looks similar to the well-known conserved current for a spin-half Dirac particle. There are compelling
reasons however for why a quantum probability current for a particle should only depend on the wave function of the particle,
not on some external “observer.” A proposal for an $X$ that only depends on the initial wave function of the system was given in \cite{KTZ}.  For a critique of theories with observer-dependent probability currents, see Struyve
et al. \cite{STRUYVE} and Tumulka \cite{Tumulka_2007}
 (and references therein).
 \par
 Since in this paper we are only interested in solving initial value problems where the wave function is specified on some Cauchy hyperplane given as the zero level-set of a time function $t$ defined on the Minkowski spacetime, we will side-step these arguments by setting up and solving the initial value problem in the fixed frame where $X=\partial_t$. 
\section{The Three Body El-Ph-El System}
\subsection{The 3-body multi-time wave function}
Let $\mathcal{M} = \mathbb{R}_{\text{ph}}^{1,1} \times \mathbb{R}_{\text{e}_1}^{1,1} \times \mathbb{R}_{\text{e}_2}^{1,1}$ be the configuration spacetime, where, as will be the case in the rest of the paper, the subscripts $_{\text{ph}}$, $_{\text{e}_1}$, $_{\text{e}_2}$ refer to the photon, the first electron, and the second electron, respectively. We denote by $(x_{\text{ph}}^{\mu}, x_{\text{e}_1}^{\nu},x_{\text{e}_2}^{\kappa})$ where $\mu,\nu,\kappa \in \{0,1\}$ a global system of rectangular coordinates on $\mathcal{M}$. The multi-time wave function of the electron-photon-electron system in one space dimension is a mapping  $\Psi:\mathcal{S}\subset \mathbb{R}^{1,1}_\text{ph}\times \mathbb{R}^{1,1}_{\text{e}_1} \times\mathbb{R}^{1,1}_{\text{e}_2}  \to \mathbb{B}$, where $\mathcal{S}$ is the subset of pairwise spacelike configurations
\begin{equation}
    \mathcal{S}:=\left\{(\textbf{x}_\text{ph}, \textbf{x}_{\text{e}_1},\textbf{x}_{\text{e}_2})\in \mathbb{R}^{1,1}_\text{ph}\times \mathbb{R}^{1,1}_{\text{e}_1}\times \mathbb{R}^{1,1}_{\text{e}_2} : \eta(\textbf{x}_\text{ph}-\textbf{x}_{\text{e}_1},\textbf{x}_\text{ph}-\textbf{x}_{\text{e}_1}) <0, \quad \eta(\textbf{x}_\text{ph}-\textbf{x}_{\text{e}_2},\textbf{x}_\text{ph}-\textbf{x}_{\text{e}_2}) <0 \right\}.
\end{equation}
Here $\eta=\mbox{diag}\,(1,-1)$ is the Minkowski metric on $\mathbb{R}^{1,1}$ and $\mathbb{B}$ is the vector bundle over $\mathcal{S}$ whose fibers are isomorphic to the set of anti-diagonal matrices in $M_2(\mathbb{C})$ tensored with the vector space $\mathbb{C}^2$ twice.
We take a basis for this bundle by taking tensor products of the basis elements for the photon wave function $E_{\varsigma}$ with the basis elements for the Dirac wave function $e_\varsigma$. In this basis,$\Psi$ can be expanded as 
\begin{equation}\label{3-body-psi}
\Psi(\mathbf{x}_{\text{ph}},\mathbf{x}_{\text{e}_1},\mathbf{x}_{\text{e}_2}) = \sum_{\varsigma_{0},\varsigma_{1},\varsigma_{2}\in \{-,+\}} \psi_{\varsigma_{0} \varsigma_{1} \varsigma_{2}}(\mathbf{x}_{\text{ph}},\mathbf{x}_{\text{e}_1},\mathbf{x}_{\text{e}_2}) E_{\varsigma_{0}} \otimes e_{\varsigma_1} \otimes e_{\varsigma_2}
\end{equation}
where the eight components given by $\psi_{\varsigma_0 \varsigma_1 \varsigma_2}$ are complex valued functions on $\mathcal{M}$. These transform as
\begin{equation}
\psi_{\varsigma_0 \varsigma_1 \varsigma_2} \xrightarrow{\Lambda} e^{-a(\varsigma_0 1 + \varsigma_1 \frac{1}{2} +\varsigma_2 \frac{1}{2})}\psi_{\varsigma_0 \varsigma_1 \varsigma_2}, \hspace{0.2cm} \psi_{\varsigma_0 \varsigma_1 \varsigma_2} \xrightarrow{\mathbf{P}} \psi_{\overline{\varsigma}_0 \overline{\varsigma}_1 \overline{\varsigma}_2}, \hspace{0.2cm} \psi_{\varsigma_0 \varsigma_1 \varsigma_2} \xrightarrow{\mathbf{T}} \psi_{\overline{\varsigma}_0 \overline{\varsigma}_1 \overline{\varsigma}_2}^*
\end{equation}
where $\Lambda$ denotes the proper Lorentz transformations, $\mathbf{P}$ space reflections and $\mathbf{T}$ time reflections. $\overline{\varsigma}$ denotes the sign opposite of $\varsigma$, and $^*$ denotes complex conjugation extended in the natural way for tensor space.
\subsection{The 3-body multi-time wave equations}
Let 
\begin{equation}
\gamma_{\text{ph}}^\mu \coloneqq \gamma^\mu \otimes \mathbbm{1} \otimes \mathbbm{1}, \hspace{1cm}
\gamma_{\text{e}_1}^{\nu} \coloneqq \mathbbm{1} \otimes \gamma^\nu \otimes \mathbbm{1} \hspace{1cm}
\gamma_{\text{e}_2}^\kappa \coloneqq \mathbbm{1} \otimes \mathbbm{1} \otimes \gamma^\kappa
\end{equation}
and
\begin{equation} 
D_{\text{ph}} \coloneqq \gamma_{\text{ph}}^\mu \partial_{x_{\text{ph}}^\mu} \hspace{1.3cm}
D_{\text{e}_1} \coloneqq \gamma_{\text{e}_1}^\nu \partial_{x_{\text{e}_1}^\nu} \hspace{1.3cm}
D_{\text{e}_2} \coloneqq \gamma_{\text{e}_2}^\kappa \partial_{x_{\text{e}_2}^\kappa},
\end{equation}
where repeated indices imply Einstein summation.

The multi-time dynamics of the electron-photon-electron wave function are then defined by the following:
\begin{enumerate}
\item The free \textit{multi-time equations} on $\mathcal{S}$
\begin{align}\label{multi-time equation}
\left\{\begin{array}{rcl}
-i\hbar D_{\text{ph}} \Psi&=&0 \\
-i\hbar D_{\text{e}_1} \Psi + m_\text{e} \Psi&=&0 \\ 
-i\hbar D_{\text{e}_2} \Psi + m_\text{e} \Psi&=&0 
\end{array}\right..
\end{align}
\item \textit{Initial data} specified on the surface
\begin{equation}
\mathcal{I} \coloneqq \left\{ (t_{\text{ph}},s_{\text{ph}},t_{\text{e}_1},s_{\text{e}_1},t_{\text{e}_2},s_{\text{e}_2})\in \overline{\mathcal{S}} : t_{\text{ph}}=t_{\text{e}_1}=t_{\text{e}_2}=0\right\},
\end{equation}
namely
\begin{equation}\label{initial data}
\Psi (0,s_{\text{ph}},0,s_{\text{e}_1},0,s_{\text{e}_2}) = \mathring{\Psi} (s_{\text{ph}},s_{\text{e}_1},s_{\text{e}_2}).
\end{equation}
\item \textit{Boundary conditions} on the coincidence sets $\mathcal{C}_1$ and $\mathcal{C}_2$. These implement contact interactions, and their exact form will be determined in the next section by analyzing the conditions for probability conservation. Since we are interested in studying the dynamics of a photon between two electrons, we will restrict our attention to two disjoint subsets of $\mathcal{S}$
\begin{align}\label{S1 definition}
&\mathcal{S}_1\coloneqq
\left\{(t_{\text{ph}},s_{\text{ph}},t_{\text{e}_1},s_{\text{e}_1},t_{\text{e}_2},s_{\text{e}_2})\in \mathcal{S}:s_{\text{e}_1}<s_{\text{ph}}<s_{\text{e}_2}\right\}\\ &\mathcal{S}_2\coloneqq
\left\{(t_{\text{ph}},s_{\text{ph}},t_{\text{e}_1},s_{\text{e}_1},t_{\text{e}_2},s_{\text{e}_2})\in \mathcal{S}:s_{\text{e}_2}<s_{\text{ph}}<s_{\text{e}_1}\right\},
\end{align}
and the boundary conditions will be linear relations between limits along events in $\mathcal{S}_1$ and $\mathcal{S}_2$ towards the sets $\mathcal{C}_1$ and $\mathcal{C}_2$ where
\begin{equation}\label{C1 and C2 definition}
\begin{split}
\mathcal{C}_1 \coloneqq 
\left\{(t_{\text{ph}},s_{\text{ph}},t_{\text{e}_1},s_{\text{e}_1},t_{\text{e}_2},s_{\text{e}_2})\in \partial\mathcal{S}:s_{\text{e}_1}=s_{\text{ph}}, \hspace{0.2cm} t_{\text{e}_1}=t_{\text{ph}} \right\}\\
\mathcal{C}_2 \coloneqq 
\left\{(t_{\text{ph}},s_{\text{ph}},t_{\text{e}_1},s_{\text{e}_1},t_{\text{e}_2},s_{\text{e}_2})\in \partial\mathcal{S}:s_{\text{e}_2}=s_{\text{ph}}, \hspace{0.2cm} t_{\text{e}_2}=t_{\text{ph}}\right\}
\end{split}
\end{equation}
\end{enumerate}
\subsection{Conserved 3-body currents and probability-preserving boundary conditions}
For a pure product state $\Psi=\psi_\text{ph}\otimes \psi_{\text{e}_1}\otimes \psi_{\text{e}_2}$, we define its \textit{Dirac adjoint} $\overline{\Psi}$ as
\begin{equation}
\overline{\Psi} \coloneqq \overline{\psi_{\text{ph}}} \otimes \overline{\psi_{\text{e}_1}} \otimes \overline{\psi_{\text{e}_2}} = \gamma^{0}\psi_{\text{ph}}^{\dagger}\gamma^{0} \otimes \psi_{\text{e}_1}^{\dagger}\gamma^{0} \otimes \psi_{\text{e}_2}^{\dagger}\gamma^{0}.
\end{equation}
For a general three-time, three-body wave function we extend this definition by conjugate linearity. 
We take the joint probability current of our three-time wave function to be the tensor product of the three particle's individual probability currents. Let $X$ be a time-like Killing field on $\mathbb{R}^{1,1}$, we define the joint probability tensor $j^{\mu \nu \kappa}_X$ as
\begin{equation}\label{3-current}
j^{\mu\nu\kappa}_{X}[\Psi] \coloneqq \frac{1}{4} \text{tr}_{\text{ph}} \left\{ \overline{\Psi} \gamma_{\text{ph}}^{\mu} \gamma_{\text{e}_1}^{\nu} \gamma_{\text{e}_2}^{\kappa} \Psi \gamma_{\text{ph}}(X)\right\},
\end{equation}
where $\text{tr}_{ph} = \text{tr} \otimes \mathbbm{1} \otimes \mathbbm{1}$ is the operation of taking the trace of the photonic component. By a straightforward generalization of the two-particle results in \cite{KLTZ}, namely Propositions 4.2, 4.3, 4.5 and Theorem 4.6, to three-particle systems, one obtains the following:

\begin{enumerate}
    \item 
Let $\Psi$ be a $C^{1}$ solution of (\ref{multi-time equation}) and let $X$ be any constant vector field on $\mathbb{R}^{1,1}$. Then the current $j_{X}$ is jointly conserved, that is:
\begin{equation}\label{j joint-conservation}
\begin{split}
\begin{cases}
\partial_{x_{\text{ph}}^{\mu}} j^{\mu\nu\kappa}_{X} &= 0
\hspace{0.2cm}
\text{for}
\hspace{0.2cm}
\nu,\kappa \in \{0,1\}
\\
\partial_{x_{\text{e}_1}^{\nu}} j^{\mu\nu\kappa}_{X} &= 0
\hspace{0.2cm}
\text{for}
\hspace{0.2cm}
\mu,\kappa \in \{0,1\}
\\
\partial_{x_{\text{e}_2}^{\kappa}} j^{\mu\nu\kappa}_{X} &= 0
\hspace{0.2cm}
\text{for}
\hspace{0.2cm}
\mu,\nu \in \{0,1\}.
\end{cases}
\end{split}
\end{equation}
\item
Let the vector field $X$ be time-like. Then the only Poincaré invariant class of boundary conditions for the components of $\Psi$ that conserves the total probability
\begin{equation}\label{probability conservation}
\int_{\Sigma^3 \cap \mathcal{S}_i} \omega_j=
\int_{\Sigma^3 \cap \mathcal{S}_i} \sum_{\mu,\nu,\kappa \in \{0,1\}} (-1)^{\mu+\nu+\kappa} j_{X}^{\mu \nu \kappa}(\textbf{x}_{\text{ph}},\textbf{x}_{\text{e}_1},\textbf{x}_{\text{e}_2}) dx_{\text{ph}}^{1-\mu} \wedge dx_{\text{e}_1}^{1-\nu} \wedge dx_{\text{e}_2}^{1-\kappa}
\end{equation}
independently of Cauchy surface $\Sigma$ is: 
\begin{equation}\label{psi boundary}
\begin{split}
&\lim_{\varepsilon \to 0}\psi_{-+\varsigma_2}(t,s\pm\varepsilon,t,s\mp\varepsilon,\textbf{x}_{\text{e}_2})=
\lim_{\varepsilon \to 0}e^{\pm i\theta_{1}}\sqrt{\frac{X^0+X^1}{X^0-X^1}}\psi_{+-\varsigma_2}(t,s\pm\varepsilon,t,s\mp\varepsilon,\textbf{x}_{\text{e}_2})
\\
&\lim_{\varepsilon \to 0}\psi_{+\varsigma_1-}(t,s\mp\varepsilon,\textbf{x}_{\text{e}_1},t,s\pm\varepsilon)=
\lim_{\varepsilon \to 0}e^{\pm i\theta_{2}}\sqrt{\frac{X^0-X^1}{X^0+X^1}}\psi_{+\varsigma_1-}(t,s\mp\varepsilon,\textbf{x}_{\text{e}_1},t,s\pm\varepsilon),
\end{split}
\end{equation}
for constant phases $\theta_{1},\theta_{2}  \in [0,2\pi)$. 
The additional anti-symmetry condition imposed by the Pauli exclusion, i.e.
\begin{equation}
\psi_{\varsigma_0\varsigma_1\varsigma_2}(\textbf{x}_\text{ph},\textbf{x}_{\text{e}_1},\textbf{x}_{\text{e}_2})=-\psi_{\varsigma_0\varsigma_2\varsigma_1}(\textbf{x}_\text{ph},\textbf{x}_{\text{e}_2},\textbf{x}_{\text{e}_1}) 
\end{equation}
implies that $\theta_1=\theta_2$.
\end{enumerate}
We now have all the tools needed to set up an initial-boundary value problem for the dynamics of our three body wave function. But before we do so, let us review the results of \cite{KLTZ} concerning the dynamics of a two-body electron-photon wave function.
\section{Review of IBVP for two-body wave functions}
In this section we review some of the results of \cite{LiNi2015}, \cite{LiNi2020}, and \cite{KLTZ} for the initial-boundary value problem (IBVP) for a two-time, two-body wave function that are relevant to our case. The IBVP studied by these authors consists of the free two-time equations on a subset of the two-particle configuration space-time \begin{align}\label{2body S1 definitionL}
&\mathcal{S}_1\coloneqq\left\{(t_{\text{ph}},s_{\text{ph}},t_{\text{e}},s_{\text{e}})\in \mathbb{R}^{(1,1)}_\text{ph}\times \mathbb{R}^{(1,1)}_\text{e}:\eta(\textbf{x}_\text{ph}-\textbf{x}_\text{e},\textbf{x}_\text{ph}-\textbf{x}_\text{e}) <0, \quad s_{\text{ph}}<s_{\text{e}}\right\}
\end{align} combined with a probability conserving boundary condition.  For simplicity, we only state the results in the fixed Lorentz frame where the timelike vectorfield $X$ in \cite{KLTZ} is $\partial_t$. These authors have shown:
\begin{enumerate}
    \item 
(\cite[Lemma 5.1]{LiNi2015}, \cite[Thm. 4.6]{KLTZ})  
The only Poincaré invariant class of boundary conditions for the components of $\Psi$ that preserves the total probability in $\mathcal{S}_1$ is: 
\begin{equation}\label{2 body psi boundaryL}
\lim_{\varepsilon \to 0}\psi_{+-}(t,s\mp\varepsilon,t,s\pm\varepsilon)=
\lim_{\varepsilon \to 0}e^{\pm i\theta}
\psi_{-+}(t,s\mp\varepsilon,t,s\pm\varepsilon).
\end{equation}
for a constant  $\theta  \in [0,2\pi)$. 

\item \label{2Part Well} (\cite[Thm. 5.1]{KLTZ}, \cite[Thm. 4.3]{LiNi2020})
Let $\theta \in [0,2\pi)$. Let $\mathcal{S}_1$ denote the set of space-like configurations in $\mathcal{M}$, as was defined in (\ref{2body S1 definitionL}), let $\mathcal{C} \subset \partial \mathcal{S}_1$ be the coincidence set defined by $\{ \textbf{x}_\text{ph}=\textbf{x}_\text{e}\}$, and let $\mathcal{I}$ be the initial surface. Let  the function $\mathring \Psi : \mathcal{I} \rightarrow \mathbb{C}^4$ be $C^1$ and compactly supported in the half-space $\mathcal{I} \cap \overline{\mathcal{S}_1}$, and suppose that the initial-value function also satisfies the boundary conditions given by (\ref{2 body psi boundaryL}). Then the following initial-boundary value problem for the wave function $\Psi: \mathcal{M} \rightarrow \mathbb{C}^4$,
\begin{equation}\label{two body problem}
\left\{
\begin{array}{rclr} 
-i\hbar D_{\text{ph}} \Psi &= &0  & 
\\
-i\hbar D_{\text{e}} \Psi + m_\text{e} \Psi &= &0  &\text{in } \mathcal{S}_1 
\\
\Psi &= &\mathring \Psi  &\text{on } \mathcal{I} 
\\
\psi_{+-} &= &e^{i\theta}\psi_{-+} \hspace{0.5cm}&\text{on } \mathcal{C}
\end{array}
\right.
\end{equation}
has a unique global-in-time solution that is supported in $\overline{\mathcal{S}_1}$, and depends continuously on the initial data $\mathring \Psi$.
\end{enumerate}
\begin{remark}
    For an outline of the fixed-point argument for existence given by \cite{LiNi2020} we direct the reader to Theorem~\ref{thm:2body} in the Appendix. The fixed-point argument is simpler than the proof given in \cite{KLTZ}, but it is not constructive. The explicit solution constructed in \cite{KLTZ}, as we shall see later, can be elegantly represented using Feynman-like diagrams. On the other hand the proof given in \cite{LiNi2020} applies to a much larger class of initial-boundary value problems, which will be useful when we generalize to the 3-particle case of a photon between two electrons.
\end{remark}
\subsection{Construction of the explicit 2-body solution}
    Here we outline the proof of Theorem 5.1 in \cite{KLTZ}, where they establish that $\Psi$ has a unique global-in-time solution.  They do this by constructing an explicit solution of the IBVP for suitably regular initial data, e.g. data belonging to the Sobolev space $\dot{H}^s(\mathbb{R}^2)$ for $s\geq 2$, which allows for standard energy estimates to be used to prove uniqueness.  The solution can be expressed in terms of the electronic and photonic propagators introduced in \eqref{propE}, \eqref{propG}, and \eqref{Photonic Transport}. Extending the solution operator to distributional data and proving its uniqueness is again possible thanks to the previously cited results in \cite{BGPbook}.
    
    We begin by noting that by finite speed of propagation any data evolving sufficiently far from the boundary $\mathcal{C}$ is not influenced by the boundary condition, and is only governed by the free multi-time equations. So, we begin by decomposing $\mathcal{S}_1$ into two disjoint parts
\begin{equation}
    \mathcal{S}_1 = \mathcal{F}\cup \mathcal{N}
\end{equation}
via
\begin{equation}
    \mathcal{F}:= \left\{(\textbf{x}_\text{ph},\textbf{x}_\text{e}) \in \mathcal{S}_1 : s_\text{ph}+t_\text{ph}\leq s_\text{e}-t_\text{e} \right\},
\end{equation}
\begin{equation}
     \mathcal{N}:= 
     \left\{(\textbf{x}_\text{ph},\textbf{x}_\text{e}) \in \mathcal{S}_1 : s_\text{ph}+t_\text{ph}> s_\text{e}-t_\text{e} \right\},
\end{equation}
and give distinct solution formulas in each region. Data evolving on $\mathcal{F}$ is governed only by the free equations, and its solution formula will be identical to the solution formula for a freely evolving two-body system. This formula can easily be written in terms of the time evolution operators of photonic transport and  electronic Dirac equation. 
Let $\mathcal{P}(t_\text{ph},s_{\text{ph}})$ denote the photonic evolution operator defined by equation (\ref{Photonic Transport}),
and $\mathcal{E}(t_\text{e},s_\text{e})$ denote the electronic evolution operator defined by equation (\ref{Dirac Solution}) in the Appendix.  
In \cite[Eqs. (5.11--5.14)]{KLTZ},  the authors obtain explicit formulas for the unique, freely evolving two-time, two-body wavefunction of the system in terms of one-body photonic transport $\mathcal{P}$  and electronic propagator $\mathcal{E}$, when the initial data is suitably regular. Using standard  density arguments combined with the existence and uniqueness results in \cite{BGPbook} for the fundamental solution of normally hyperbolic operators that were cited in the above, one sees that the free 2-body propagator acting on distributions is the tensor product of the two single body time evolution operators, namely, 
for $(\textbf{x}_\text{ph},\textbf{x}_\text{e})\in \mathcal{F}$ the solution formula for $\Psi$ is given by
\begin{equation}\label{Far SolL}
    \Psi(\textbf{x}_\text{ph},\textbf{x}_\text{e})=\big( \mathcal{P}( \textbf{x}_\text{ph}) \otimes \mathcal{E}(\textbf{x}_\text{e})\big) \mathring{\Psi}.
\end{equation}
Our goal now is to find a solution formula in $\mathcal{N}$. Fix $(\textbf{x}_\text{ph},\textbf{x}_\text{e})\in \mathcal{N}$. Looking closely at the explicit free evolution solution formula given by equation ($\ref{Far SolL}$),  we find that this formula is still valid in the near configuration regime for the $\varsigma_0=-$ components, i.e for $\psi_{--}$ and $\psi_{-+}$. This is because these components transport the photon along right-moving null-rays, so in $\mathcal{S}_1$ this evolution cannot depend on any data along $\mathcal{C}$. Letting $\Pi_{\varsigma_0\varsigma_1 }$ be the projection operator sending $\Psi$ to $\psi_{\varsigma_0 \varsigma_1}$, the solution formula for $\psi_{-\varsigma_1}$ on $\mathcal{N}$ is written as
\begin{equation}\label{Near free SolL}
    \psi_{-\varsigma_1}(\textbf{x}_\text{ph},\textbf{x}_\text{e})=\Pi_{-\varsigma_1}\big( \mathcal{P}( \textbf{x}_\text{ph}) \otimes \mathcal{E}( \textbf{x}_\text{e})\big) \mathring{\Psi}.
\end{equation}
To solve for the remaining components, consider the forwards light-cone emanating from the point $(0,s_\text{ph}+t_\text{ph})\in \mathbb{R}^{1,1}$. For all $\textbf{x}_\text{e}'$ which are on the right edge of this cone, we have data for  $\psi_{+\varsigma_1}(\textbf{x}_\text{ph},\textbf{x}_\text{e}')$ using equation (\ref{Far SolL}). Since $\psi_{+\varsigma_1}$ each satisfy an electronic Klein-Gordon equation, we are motivated to write down and solve a Goursat problem in the electronic variables. For this we will need data along two lines in configuration space-time
\begin{equation}
    \mathcal{B}=\left\{ (\textbf{x}_\text{ph},\textbf{x}_\text{e}') \in \mathcal{S}_1: \textbf{x}_\text{e}'=(b,s_\text{ph}+t_\text{ph}+b), \quad b \in \mathbb{R}\right\}
\end{equation}
\begin{equation}
    \mathcal{C}'=\left\{ (\textbf{x}_\text{ph},\textbf{x}_\text{e}') \in \mathcal{S}_1: \textbf{x}_\text{e}'=(c,s_\text{ph}+t_\text{ph}-c), \quad c \in \mathbb{R}\right\}
\end{equation}
\begin{figure}[htbp]
        \centering
        \includegraphics[width=4.77cm, height=3cm]{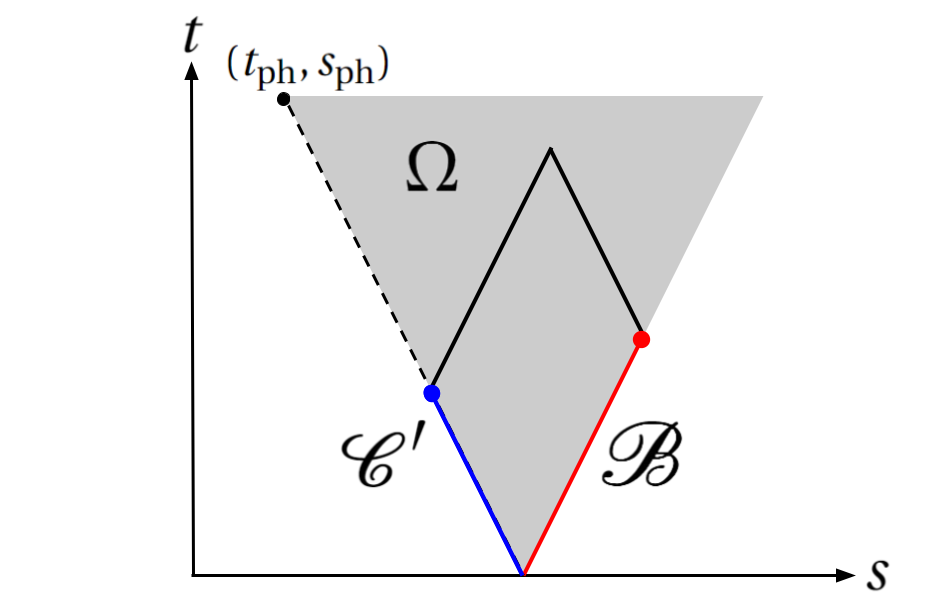}
        \caption{Keeping $\textbf{x}_\text{ph}$ fixed, data for $\textbf{x}_\text{e}$ along the lines $\mathcal{B}$ and $\mathcal{C}$ is sufficient to solve the Goursat problem for all $\textbf{x}_\text{e} \in \Omega$.}
        \label{fig:El-Ph GoursatL}
    \end{figure}
\par
To solve for the data along $\mathcal{C}'$, we appeal to the fact that $\varsigma_0=+$ components are constant as we transport the photonic configuration along left-moving null rays. Letting $(\textbf{x}_\text{ph},\textbf{x}_\text{e}')$ be an arbitrary point in $\mathcal{C}'$, the photonic transport equations imply
\begin{equation}\label{C setupL}
    \psi_{+-}(\textbf{x}_\text{ph},\textbf{x}_\text{e}')=\psi_{+-}(\textbf{x}_\text{e}',\textbf{x}_\text{e}')=e^{i\theta}\psi_{-+}(\textbf{x}_\text{e}',\textbf{x}_\text{e}').
\end{equation}
The second equality of equation (\ref{C setupL}) will be our only application of the contact boundary condition. Abusing notation slightly, we may say that $\psi_{+-}\big{|}_\mathcal{C'}=e^{i\theta}\psi_{-+}\big{|}_\mathcal{C}$. 
\par
Applying the Goursat problem solution formula (\ref{goursat solution}) derived in the appendix,  the solution for $\psi_{+-}$ on near configurations is written as 
\begin{equation}\label{E-Ph GoursatL}
    \psi_{+-}(\textbf{x}_\text{ph},\textbf{x}_\text{e})= G^{R}(t_\text{e},\tilde{s}) \big(\psi_{+-} \big{|}_\mathcal{B} \big)+ G^{L}(t_\text{e},\tilde{s}) \big(e^{i\theta} \psi_{-+} \big{|}_\mathcal{C} \big)
\end{equation}
where $\tilde{s}=s_\text{e}-(s_\text{ph}+t_\text{ph})$. To write this solution in terms of the initial data, we may plug equations (\ref{Far SolL}) and (\ref{Near free SolL}) into 
(\ref{E-Ph GoursatL})
for $\psi_{+-} \big{|}_\mathcal{B}$ and $e^{i\theta} \psi_{-+} \big{|}_\mathcal{C}$ respectively. 
\par
To solve for the fourth and final component $\psi_{++}$, we plug equation (\ref{E-Ph GoursatL}) into the sourced transport equation that this component satisfies and receive
\begin{equation}\label{Ph-El Sourced TransportL}
    \psi_{++}(\textbf{x}_\text{ph},\textbf{x}_\text{e})= \psi_{++}(\textbf{x}_\text{ph},0,s_\text{e}+t_\text{e})-i\omega \int_{0}^{t_\text{e}}\psi_{+-}(\textbf{x}_\text{ph},\tau, s_\text{e}+t_\text{e}-\tau) d\tau
\end{equation}
This completes our algorithm to solve for the two-body wave function everywhere in $\mathcal{S}_1$. To save space, we will refer to this general process as a "Contact Evolution Operator" $C_\theta(\textbf{x}_\text{ph},\textbf{x}_\text{e})$.
\subsection{Feynman-like Diagrams for Photon-Electron Pairs}
The evolution of our interacting two-body system can be more elegantly represented using Feynman diagrams. We found that the evolution of the two-body wave function for "Far" configurations is identical to the free evolution.
\begin{figure}[htbp]
        \centering
        $\Psi\big{|}_\mathcal{F}
        $\qquad $\mathlarger{=}$ \qquad \includegraphics[width=1.8cm, height=1cm, valign=c]{Free_E-Ph_FMP.png} $\qquad$ $\mathlarger{=}$ $\qquad$ $(\mathcal{P}\otimes \mathcal{E}) \mathring{\Psi} $
        \caption{We sum over all diagrams which depicts the electron and photon freely propagating.}
    \end{figure}
\par
On configurations $(\textbf{x}_\text{ph},\textbf{x}_\text{e})$ which are "Near", two of the wave function components no longer satisfy the free evolution. The $\varsigma_0=+$ components become a sum of two terms. The first is non-interacting data: $\varsigma_0=+$ data on $\mathcal{B}$ which has freely propagated to $(\textbf{x}_\text{ph},\textbf{x}_\text{e})$. The other term arose purely from the interaction: $\varsigma_0=-$ data which has propagated through $\mathcal{C}$, but was reborn in the form of $\varsigma_0=+$ and then freely propagated to $(\textbf{x}_\text{ph},\textbf{x}_\text{e})$. \begin{figure}[htbp]
\centering
$\Psi\big{|}_{\mathcal{N}}$ \qquad $=$ \qquad
\includegraphics[width=1.8cm, height=1cm, valign=c]{Free_E-Ph_FMP.png}
        \quad $\mathlarger{+} \quad e^{i\theta}$  \includegraphics[width=1.8cm, height=1cm, valign=c]{Contact_E-Ph_FMP.png}$\qquad$ $\mathlarger{=}$  $\begin{pmatrix}
            \Pi_{--} (\mathcal{P}\otimes \mathcal{E}) \mathring{\Psi}
            \\
            \Pi_{-+} (\mathcal{P}\otimes \mathcal{E}) \mathring{\Psi}
            \\
            G^{R} \big(\psi_{+-} \big{|}_\mathcal{B} \big)
            \\
            \psi_{++}\big{|}_{\mathcal{B}} -i\omega \int_0^{t_e} G^{R} \big(\psi_{+-} \big{|}_\mathcal{B} \big)
        \end{pmatrix} +e^{i \theta} \begin{pmatrix}
            0
            \\
            0
            \\
            G^{L} \big(\psi_{-+} \big{|}_\mathcal{C} \big)
            \\
            -i\omega \int_0^{t_e} G^{L} \big(\psi_{-+} \big{|}_\mathcal{C} \big)
        \end{pmatrix}$
        \caption{There are now two diagrams we must sum over: freely propagated data, and data resulting from a collision.}
\end{figure}
\begin{remark}
Restricting the two-body wave function to $\mathcal{S}_1$ has produced a contact interaction between the photon and the electron that is worthy of being called "Compton Scattering in $1+1$ dimensions". The electron and the photon propagate freely at first, until the photon inevitably reaches contact with the electron at which point the photon is forced to bounce off the electron and propagate away in the opposite direction. The number of photons remains fixed during this interaction, with the photon before the incident being the exact same as the one after, as originally desired by Darwin. 
\end{remark}
In the following sections, we will investigate how these contact boundary interactions generalize to the case of one photon between two electrons. To simplify our proofs we will write our solutions in terms of Feynman-like diagrams. A diagram in which an electron-photon pair become incident and then bounce off means that we are applying a contact evolution operator $C_\theta$ in those variables. This means we are setting up and solving a Goursat problem for the appropriate wave function component, with data on one null line representing freely propagating data while data on the other is given by the boundary condition, and thus represents interacting data. We then plug the Goursat solution into a sourced transport to solve for the other component.
\section{IBVP for three-time three-body wave function}
In this section we consider the initial-boundary value problem (IBVP) for the three-time three-body wave function consisting of equations (\ref{multi-time equation}), (\ref{psi boundary}). For simplicity, we work in the fixed Lorentz frame where $X=\partial_t$.
\begin{theorem}
Let $\theta_1,\theta_2 \in [0,2\pi)$ be two constant phases. Let $\mathcal{S}_1$ denote the set of space-like configurations in $\mathcal{M}$, as was defined in (\ref{S1 definition}), $\mathcal{C}_1 \subset \partial \mathcal{S}_1$ and $\mathcal{C}_2 \subset \partial \mathcal{S}_1$ be the coincidence sets defined by (\ref{C1 and C2 definition}), and $\mathcal{I}$ be the initial surface as was defined in (\ref{initial data}). Let  the function $\mathring \Psi : \mathcal{I} \rightarrow \mathbb{C}^8$ be $C^1$ and compactly supported in the half-space $\mathcal{I} \cap \overline{\mathcal{S}_1}$, and suppose that the initial wave function also satisfies the boundary conditions given by (\ref{psi boundary}). Then the following initial-boundary value problem for the wave function $\Psi: \mathcal{M} \rightarrow \mathbb{C}^8$,
\begin{equation}\label{three body problem}
\left\{
\begin{array}{rclr} 
-i\hbar D_{\text{ph}} \Psi &= &0  & 
\\
-i\hbar D_{\text{e}_1} \Psi + m_{e} \Psi &= &0 & 
\\ 
-i\hbar D_{\text{e}_2} \Psi + m_\text{e} \Psi &= &0  &\text{in } \mathcal{S}_1 
\\
\Psi &= &\mathring \Psi  &\text{on } \mathcal{I} 
\\
\psi_{-+\varsigma_2} &= &e^{i\theta_1}\psi_{+-\varsigma_2} \hspace{0.5cm}&\text{on } \mathcal{C}_1
\\
\psi_{+\varsigma_1-} &= &e^{i\theta_2}\psi_{-\varsigma_1+}  &\text{on } \mathcal{C}_2
\end{array}
\right.
\end{equation}
has a unique global-in-time solution that is supported in $\overline{\mathcal{S}_1}$, is continuous with bounded derivatives, and depends continuously on the initial data $\mathring \Psi$.
\end{theorem}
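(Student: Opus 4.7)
The plan is to construct the solution as a convergent sum of Feynman-like diagrams, each representing a specific sequence of photon bounces between the two electrons, building on the two-body contact evolution operators described in Section 4. First, I would partition $\mathcal{S}_1$ into regions according to how many times the photon's backward null cone from $(\mathbf{x}_{\text{ph}},\mathbf{x}_{\text{e}_1},\mathbf{x}_{\text{e}_2})$ meets $\mathcal{C}_1$ or $\mathcal{C}_2$ along a causally admissible path to the initial surface $\mathcal{I}$. In the region where the backward null cone meets neither coincidence set, the solution is the free three-body evolution $\mathcal{P}\otimes \mathcal{E}\otimes \mathcal{E}$ applied to $\mathring\Psi$; elsewhere I would recursively apply the appropriate two-body contact evolution operator $C_{\theta_j}$, using the freshly-constructed lower-bounce terms as Goursat data along the relevant $\mathcal{C}_j$-type null line and the free evolution as data along the $\mathcal{B}$-type line, exactly as in the review of Section 4.

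Next, I would make this precise by indexing terms by finite alternating words $w=(j_1,j_2,\dots,j_n)$ in $\{1,2\}$ recording the sequence of electrons off which the photon bounces. Each word defines a term $\Psi_w$ obtained by chaining $n$ iterations of the Goursat procedure, with the boundary phases $e^{i\theta_{j_k}}$ entering at each bounce. The truncated sum $\Psi_\epsilon$ is the (necessarily finite) sum of $\Psi_w$ over all words in which consecutive bounces are separated by at least $\epsilon$ in time, which is well-defined thanks to finite speed of propagation and the compact support of $\mathring\Psi$. I would check directly that each $\Psi_\epsilon$ satisfies the free equations (\ref{multi-time equation}) away from the coincidence sets, the initial data (\ref{initial data}), and the boundary conditions (\ref{psi boundary}) at every bounce included in its construction.

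The main obstacle, and the place where the $C^1$ bounded-derivative conclusion is delicate, is proving that $\Psi_{1/n}$ converges as $n\to\infty$ in a function space strong enough to preserve the PDE and boundary conditions in the limit, because in principle the photon can bounce arbitrarily many times in finite time if the electrons remain close. To control this, I would use that (\ref{psi boundary}) was derived precisely so that the current $j_X$ of (\ref{3-current}) is conserved: each contact evolution operator is a partial isometry for the $L^2$ inner product induced by $j_X$ on a suitable Cauchy surface. Summed over diagrams, this gives a tight $L^2$ bound on $\Psi_\epsilon$ uniform in $\epsilon$, and by decomposing the differences $\Psi_{1/n}-\Psi_{1/m}$ into diagrams with at least one short-time bounce I would show the tail is small in $L^2$, giving the Cauchy property. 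For the bounded-derivative statement I would localize and exploit the compact support of $\mathring\Psi$ in $\overline{\mathcal{S}_1}$: this forces any configuration in a prescribed compact subset of $\mathcal{M}$ to be reachable only by words $w$ of bounded length, so on such subsets the sum is finite and the $C^1$ regularity of each summand (inherited from the Goursat and transport formulas) passes to the limit.

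Finally, for uniqueness and continuous dependence, I would use the conserved currents (\ref{j joint-conservation}) to run an energy estimate on the difference of any two solutions: the boundary terms from $\mathcal{C}_1$ and $\mathcal{C}_2$ vanish exactly because of (\ref{psi boundary}), so any two solutions with the same initial data agree in $L^2$, and continuous dependence on $\mathring\Psi$ follows from linearity. As an alternative route I would invoke the fixed-point framework of \cite{LiNi2020} outlined in Theorem~\ref{thm:2body}, which, after verifying that the associated integral operator for the three-body system is a contraction on the appropriate function space (the non-trivial ingredient being the geometric suppression from the small-bounce-interval analysis above), yields existence and uniqueness directly.
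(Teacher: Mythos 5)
Your primary construction has a genuine gap at the point where you claim that compact support of $\mathring{\Psi}$ forces configurations in a fixed compact subset of $\mathcal{M}$ to be reachable only by words $w$ of bounded length, so that the diagram sum is locally finite and $C^1$ regularity passes to the limit. That is false, and it is exactly the difficulty this paper is organized around: the intermediate bounce events are integrated over the coincidence sets, and near the corner $\mathcal{C}_1\cap\mathcal{C}_2$ (where $s_{\text{e}_1}=s_{\text{ph}}=s_{\text{e}_2}$) the two electron coordinates can be arbitrarily close, so zigzags with arbitrarily many bounces fit inside any fixed time interval. Hence at a fixed Coulomb configuration infinitely many terms $\Psi_w$ contribute, and neither the bounded-derivative statement nor even $L^2$ convergence can be obtained by locally reducing to a finite sum. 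Your proposed $L^2$ control is also not sound as stated: the individual diagrams are not mutually orthogonal, the contact evolution operators are not isometries diagram-by-diagram, and current conservation constrains the full solution rather than the tail of diagrams containing a short bounce. In the paper the convergence of the $\epsilon$-truncated (``leaky'') evolutions is a separate and substantially harder result, Theorem~\ref{thm:main}, proved only in the equal-time setting by semigroup methods (Lumer--Phillips, classification of maximally dissipative extensions through contractions between deficiency spaces, and Kurtz's approximation theorem); moreover the paper shows that the truncated boundary condition in (\ref{Leaky Multi-Time}) is inconsistent with the multi-time equations, which undercuts your plan to define $\Psi_\epsilon$ as a multi-time object satisfying the boundary conditions at every included bounce.

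By contrast, your closing fallback is essentially the paper's entire proof of this theorem: well-posedness is obtained from the fixed-point framework of \cite{LiNi2020}, outlined for two bodies in Theorem~\ref{thm:2body} and generalizing to the three-body case. Note, however, that the contraction property there comes from rewriting the IBVP as integral equations along characteristics and using the exponentially weighted norm $\|e^{\gamma t}\cdot\|_{L^\infty}$, with contraction constant of order $\omega/\gamma$ arising from the mass coupling; the boundary conditions merely transfer data between components without amplification, so no ``geometric suppression'' or bounce-interval analysis is needed, contrary to what you suggest. Your uniqueness and continuous-dependence argument via the conserved current and vanishing boundary flux is fine and matches the paper's use of probability conservation. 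As written, though, the diagrammatic route does not prove the theorem; the viable proof is the fixed-point argument you relegate to an afterthought.
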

\begin{proof}
    For a fixed-point argument of existence and uniqueness modeled on the one in \cite{LiNi2020} we direct the reader to Appendix \ref{AppendixLN}. 
\end{proof}
We would like to provide an algorithm for computing the multi-time wave function which admits a Feynman diagram interpretation in the same sense that we had for the photon-electron system. 
\subsection{The solution in terms of Feynman diagrams  }
 Our attempt to construct an algorithm for $\Psi$ will repeat the exact same steps as the ones taken for two-body interacting photon-electron wave function.
We begin by splitting up our configuration space-time into relevant pieces, which are analogues to the "Far" and "Near" configurations from the two-body construction.
    \newline
    \textbf{Free Configurations:} Configurations in $\mathcal{S}_1$ for which none of the backwards light-cones of the particles cross. 
    \begin{equation}
    \mathcal{F}:= \left\{(\textbf{x}_\text{ph},\textbf{x}_{\text{e}_1},\textbf{x}_{\text{e}_2}) \in \mathcal{S}_1 : s_{\text{e}_1}+t_{\text{e}_1}\leq s_\text{ph}-t_\text{ph}, \quad  s_\text{ph}+t_\text{ph}\leq s_{\text{e}_2}-t_{\text{e}_2} \right\}
\end{equation}
\newline
\textbf{Compton Configurations:} Configurations for which at least one of the backwards light cones of the electrons cross the photon's, but the electron's cones do not cross each other.
\begin{equation}
    \mathcal{X} :=\left\{(\textbf{x}_\text{ph},\textbf{x}_{\text{e}_1},\textbf{x}_{\text{e}_2}) \in \mathcal{S}_1 : s_\text{ph}-t_\text{ph}<s_{\text{e}_1}+t_{\text{e}_1}\leq s_{\text{e}_2}-t_{\text{e}_2}\quad \text{or} \quad  s_{\text{e}_1}+t_{\text{e}_1}\leq s_{\text{e}_2}-t_{\text{e}_2} <s_\text{ph}+t_\text{ph}\right\}
\end{equation}
\newline
\textbf{Coulomb Configurations:}
Configurations for which the backwards light-cones of the electrons cross.
\begin{equation}
    \mathcal{S}_1-\big( \mathcal{F} \cup \mathcal{X}\big)= 
    \left\{(\textbf{x}_\text{ph},\textbf{x}_{\text{e}_1},\textbf{x}_{\text{e}_2}) \in \mathcal{S}_1 : s_{\text{e}_1}+t_{\text{e}_1}< s_{\text{e}_2}-t_{\text{e}_2} \right\}
\end{equation}
\subsubsection{Free Configurations}
    Similarly to the two-body case, the solution formula on "Free" configurations is given by the  free three-time evolution operator \begin{equation}
        \Psi(\textbf{x}_\text{ph},\textbf{x}_{\text{e}_1},\textbf{x}_{\text{e}_2})= \big(\mathcal{P}(\textbf{x}_\text{ph})\otimes \mathcal{E}(\textbf{x}_{\text{e}_1}) \otimes  \mathcal{E}(\textbf{x}_{\text{e}_2}) \big) \mathring{\Psi}.
    \end{equation}
    The corresponding diagram is Fig.~\ref{fig:free}.
    \begin{figure}[htbp]
        \centering
        $\Psi\big{|}_\mathcal{F}
        $\qquad $\mathlarger{=}$ \qquad \includegraphics[width=1.8cm, height=1cm, valign=c]{Free_E-Ph-E_FMP.png}
    \caption{\label{fig:free} Diagram corresponding to free configurations}
    \end{figure}
\subsubsection{Compton Configurations}
For "Compton" configurations, there are three relevant cases for the configurations of our three particles. 
\newline
\textbf{Case 1:} The left electron is "Near" the photon, while the right electron is not. The associated diagram for such configurations is given by Fig.~\ref{fig:compton}
\begin{figure}[htbp]
\centering
\includegraphics[width=1.8cm, height=1cm, valign=c]{Free_E-Ph-E_FMP.png}
        \quad $\mathlarger{+} \quad e^{i\theta_1} $  \includegraphics[width=1.8cm, height=1cm, valign=c]{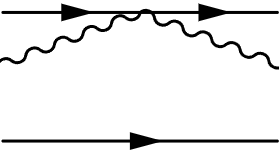}
\caption{\label{fig:compton} Diagram corresponding to Compton configurations}
\end{figure}
\par
From this diagram we determine that the multi-time evolution is given by a contact evolution operator in the photon and electron $1$ variables tensored with the free evolution in the electron $2$ variables
\begin{equation}
    \Psi(\textbf{x}_\text{ph},\textbf{x}_{\text{e}_1},\textbf{x}_{\text{e}_2})= \big(C_{\theta_1}(\textbf{x}_\text{ph},\textbf{x}_{\text{e}_1}) \otimes  \mathcal{E}(\textbf{x}_{\text{e}_2}) \big) \mathring{\Psi}.
\end{equation}
We note that the $\varsigma_0=+$ components of the wave function are freely propagating in this case, and that the $\varsigma_0=-$ components are solved using Goursat evolution operators and sourced transports in electron $1$ variables.
\newline
\textbf{Case 2:} The right electron is "Near" the photon while the left electron is not. In this case electron $1$ freely propagates while the photon undergoes a contact interaction with electron $2$
\begin{equation}
    \Psi(\textbf{x}_\text{ph},\textbf{x}_{\text{e}_1},\textbf{x}_{\text{e}_2})= \big(\mathcal{E}(\textbf{x}_{\text{e}_1}) \otimes C_{\theta_2}(\textbf{x}_\text{ph},\textbf{x}_{\text{e}_2})  \big) \mathring{\Psi}.
\end{equation}
Similarly, we note that the $\varsigma_0=-$ components of the wave function are freely propagating in this case, and that the $\varsigma_0=+$ components is solved by setting up and solving a Goursat problem in the electron $2$ variables.
\newline
\textbf{Case 3:} Configurations in which both backwards light-cones of the electrons cross the backwards light-cone of the photon, but not each other's. On such configurations our wave function is given by a sum of three diagrams, as in Fig.~\ref{fig:contact}: 
\begin{figure}[htbp]
\centering
\includegraphics[width=1.8cm, height=1cm, valign=c]{Free_E-Ph-E_FMP.png}
        \quad $\mathlarger{+} \quad e^{i\theta_1} $  \includegraphics[width=1.8cm, height=1cm, valign=c]{Contact_E-Ph-E_FMP_1.png}
        \quad $\mathlarger{+} \quad e^{i\theta_2} $
        \includegraphics[width=1.8cm, height=1cm, valign=c]{Contact_E-Ph-E_FMP_2.png}
        \caption{\label{fig:contact} Diagrams corresponding to contact configurations}
\end{figure}
\par
In case 3 all components will be given by contact evolution operators. Letting $\Pi_{\varsigma_1 \varsigma_2 \varsigma_3}$ be the projection operator of a three-body wave function onto its relevant component, we may write the evolution of each component in terms of the initial data as
\begin{equation}
\begin{split}
    \psi_{-\varsigma_1 \varsigma_2}(\textbf{x}_\text{ph},\textbf{x}_{\text{e}_1},\textbf{x}_{\text{e}_2})&=\Pi_{- \varsigma_1 \varsigma_2} \bigg( \big(C_{\theta_1}(\textbf{x}_\text{ph},\textbf{x}_{\text{e}_1}) \otimes  \mathcal{E}(\textbf{x}_{\text{e}_2}) \big) \mathring{\Psi}\bigg)
    \\
    \psi_{+\varsigma_1 \varsigma_2}(\textbf{x}_\text{ph},\textbf{x}_{\text{e}_1},\textbf{x}_{\text{e}_2})&=\Pi_{+ \varsigma_1 \varsigma_2} \bigg( \big(\mathcal{E}(\textbf{x}_{\text{e}_1}) \otimes C_{\theta_2}(\textbf{x}_\text{ph},\textbf{x}_{\text{e}_2})  \big) \mathring{\Psi}\bigg).
\end{split}
\end{equation}
\subsubsection{Coulomb Configurations}
In this section we will discuss the wave function on configurations where the electron light cones are allowed to intersect. These configurations are of great interest to us because enough time has passed for the photon to bounce from one electron to the other, which may produce an effective interaction between the two electrons. 
\par
Unfortunately, allowing the photon to bounce off one electron and then the other is not without consequence. When drawing the diagrams associated to these configurations, we must remember the rules: the wave function at $(\textbf{x}_\text{ph},\textbf{x}_{\text{e}_1},\textbf{x}_{\text{e}_2})$ is calculated by summing over all the possible diagrams which end in that state. For "Free" configurations there is only one type of diagram: free evolution of all three particles. On "Compton" configurations we were limited to diagrams with only one bounce. However, when we open the door to two bounces, we are in fact opening the door to any number of bounces. In the time it takes for two bounces to occur, so too could have three bounces, or four, etc, and so our diagram must be a super-position of all such interactions, see Fig.~\ref{fig:Infinite Diagram Sum}.
\begin{figure}[htbp]
\centering
\includegraphics[width=1.8cm, height=1cm, valign=c]{Free_E-Ph-E_FMP.png} \quad
        $\mathlarger{+} \quad  e^{i\theta_2} $
        \includegraphics[width=1.78cm, height=1cm, valign=c]{Contact_E-Ph-E_FMP_2.png}
        \quad 
        $\mathlarger{+} \quad e^{i(\theta_1+\theta_2)} $
        \includegraphics[width=1.8cm, height=0.9cm, valign=c]{Inf_E-Ph-E_FMP_1.png}
        \quad
        $\mathlarger{+} \quad  e^{i(\theta_1+2\theta_2)} $
        \includegraphics[width=1.8cm, height=0.9cm, valign=c]{Inf_E-Ph-E_FMP_2.png}
       \\[10pt] $\mathlarger{+} $
$e^{i2(\theta_1+\theta_2)} $
        \includegraphics[width=1.8cm, height=0.9cm, valign=c]{Inf_E-Ph-E_FMP_3.png} \quad
        $\mathlarger{+} \quad e^{i(2\theta_1+3\theta_2)} $
        \includegraphics[width=1.8cm, height=0.9cm, valign=c]{Inf_E-Ph-E_FMP_4.png} \quad
        $\mathlarger{+} \quad e^{i3(\theta_1+\theta_2)} $
        \includegraphics[width=1.8cm, height=0.9cm, valign=c]{Inf_E-Ph-E_FMP_5.png}
        $\mathlarger{+} ... $
        \caption{The photon could have bounced arbitrarily many times between the electrons before the three propagated to their final destinations.}\label{fig:Infinite Diagram Sum}
\end{figure}
\par
This paints a bleak picture for our endeavor, as it is not clear whether the infinite sum of diagrams converges in some appropriate function space. Even if it does converge, it is not guaranteed that this limit is equal to the solution proven to exist in the Appendix.

\subsection{Convergence of Infinite Diagram Sum}
\subsubsection{Motivation for altered evolution}
In this section we prove convergence of the infinite sum of diagrams of Figure (\ref{fig:Infinite Diagram Sum}) which represents the wave function on "Coulomb" configurations. We consider a slightly altered evolution for our three-body wave function which approaches the original evolution under an appropriate limit. We will find it convenient to set up the altered evolution in the equal time setting $t_\text{ph}=t_{\text{e}_1}=t_{\text{e}_2}$, and later remark why an equal time convergence is enough to construct the wave function for "Coulomb" configurations in the multi-time setting. 
\par
We rectify the infinite bouncing problem by setting the boundary conditions to $0$ on configurations where $|s_{\text{e}_1} -s_{\text{e}_2}|<\epsilon$. By finite speed of propagation in configuration space, this has the desired effect of placing a lower bound of $\frac{\epsilon}{2}$ on the time it takes for data to bounce off one boundary and propagate to the other.
\par
In the language of Feynman diagrams, this altered evolution is equivalent to deleting all terms from our sum in which a photon bounced off an electron while the two electrons were too close. Over any finite time interval of length $T$, this will truncate our sum by deleting all terms which depict more than $N_\epsilon=1+\frac{2T}{\epsilon}$ many photon-electron collisions. By linearity, taking $\epsilon$ smaller re-introduces the deleted diagrams into our sum. Our goal in this section will be to prove the convergence of these "leaky boundary" evolutions in the limit as $\epsilon$ goes to $0$ and show that this limit returns the correct dynamics.
\subsubsection{Leaky Boundary IBVP}
In this section we will prove convergence of the "leaky boundary" evolution. This will be done in the equal time setting, where $t_\text{ph}=t_{\text{e}_1}=t_{\text{e}_2}$. The Hamiltonian of our system is
\begin{equation}\label{Three Part Hamiltonian}
    \hat{H}=\big(\hat{H}_\text{ph} \otimes \mathbb{1} \otimes \mathbb{1} \big) +  \big(\mathbb{1} \otimes \hat{H}_{\text{e}_1} \otimes \mathbb{1}\big)+ \big(\mathbb{1} \otimes \mathbb{1} \otimes \hat{H}_{\text{e}_2}\big), \quad D(\hat{H})=C_c^\infty(\mathcal{S}_1, \mathbb{C}^8)
\end{equation}
where 
\begin{equation}
    \hat{H}_\text{ph}= -i\hbar\gamma^0\gamma^1\partial_{s_\text{ph}}, \quad \hat{H}_{\text{e}_1}=\gamma^0(m_\text{e}-i\hbar\gamma^1\partial_{s_{\text{e}_1}}) \quad \hat{H}_{\text{e}_2}=\gamma^0(m_\text{e}-i\hbar\gamma^1\partial_{s_{\text{e}_2}}).
\end{equation}
The equal time dynamics will take place on a subset of $\mathbb{R}_{\text{ph}}\times \mathbb{R}_{\text{e}_1}\times\mathbb{R}_{\text{e}_2}$:
\begin{align}
&\mathcal{S}_1\coloneqq\left\{(s_{\text{ph}},s_{\text{e}_1},s_{\text{e}_2}):s_{\text{e}_1}<s_{\text{ph}}<s_{\text{e}_2}\right\},
\end{align}
with boundary conditions that are linear relations between limits along events in $\mathcal{S}_1$ towards the coincidence sets $\mathcal{C}_1$ and $\mathcal{C}_2$, defined by
\begin{equation}
\begin{split}
\mathcal{C}_1 &\coloneqq \left\{(s_{\text{ph}},s_{\text{e}_1},s_{\text{e}_2})\in \partial\mathcal{S}_1:s_{\text{e}_1}=s_{\text{ph}} \right\}\\
\mathcal{C}_2 &\coloneqq \left\{(s_{\text{ph}},s_{\text{e}_1},s_{\text{e}_2})\in \partial\mathcal{S}_1:s_{\text{e}_2}=s_{\text{ph}} \right\}.
\end{split}
\end{equation}
\begin{theorem}\label{thm:Leaky}
 Let $\theta_1,\theta_2 \in [0,2\pi)$ be two constant phases, and let $\mu_\epsilon (s):\mathbb{R^+}\rightarrow [0,1]$ be a smooth transition function which is zero on $[0,\epsilon]$ and 1 on $[2\epsilon,\infty)$. Let  the function $\mathring \Psi : \mathcal{S}_1 \rightarrow \mathbb{C}^8$ be bounded and compactly supported in ${\mathcal{S}_1}$. Then the following initial-boundary value problem for the wave function $\Psi: \mathbb{R}^{+}\times \mathcal{S}_1 \rightarrow \mathbb{C}^8$,
\begin{equation}\label{Leaky IVP}
\left\{
\begin{array}{rclr} 
i\hbar \partial_t \Psi &= &\hat{H}\Psi  &\text{in } \mathbb{R}^+ \times \mathcal{S}_1 
\\
\Psi \big{|}_{t=0} &= &\mathring \Psi   
\\
\psi_{-+\varsigma_2} &= &\mu_\epsilon(|s_{\text{e}_1}-s_{\text{e}_2}|)e^{i\theta_1}\psi_{+-\varsigma_2} \hspace{0.5cm}&\text{on } \mathcal{C}_1
\\
\psi_{+\varsigma_1-} &= &\mu_\epsilon(|s_{\text{e}_1}-s_{\text{e}_2}|)e^{i\theta_2}\psi_{-\varsigma_1+}  &\text{on } \mathcal{C}_2
\end{array}
\right.
\end{equation}
has a unique global-in-time solution that is supported in $\overline{\mathcal{S}_1}$, and depends continuously on the initial data $\mathring \Psi$.
\end{theorem}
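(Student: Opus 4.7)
The plan is to prove Theorem~\ref{thm:Leaky} by adapting the contraction-mapping argument of Li and Nickel \cite{LiNi2020}, whose outline for the two-body case is recalled in Appendix~\ref{AppendixLN}. The leaky cutoff $\mu_\epsilon$ enters only as a bounded coefficient with $|\mu_\epsilon| \le 1$ in the boundary couplings, so for the present well-posedness theorem it plays essentially no role; its true purpose is reserved for the $\epsilon \to 0$ analysis in the next subsection.

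I would first reformulate (\ref{Leaky IVP}) as a fixed-point equation $\Psi = \mathcal{T}_\epsilon \Psi$ on the Banach space $X_T := C([0,T];\,L^2(\mathcal{S}_1, \mathbb{C}^8))$. Given a candidate $\Psi \in X_T$, the map $\mathcal{T}_\epsilon$ returns the function whose eight sign components $\psi_{\varsigma_0 \varsigma_1 \varsigma_2}$ are obtained by integrating the sourced transport equations derived from $\hat{H}$ along their respective characteristics, with initial data $\mathring{\Psi}$ at $t = 0$, Dirac mass-term sources read off from $\Psi$ itself, and boundary data on $\mathcal{C}_1$ and $\mathcal{C}_2$ supplied from $\Psi$ through the reflection conditions of (\ref{Leaky IVP}) weighted by $\mu_\epsilon(|s_{\text{e}_1} - s_{\text{e}_2}|)\,e^{i\theta_j}$. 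Because each photon component moves along a null ray of slope $\pm 1$ in $s_{\text{ph}}$, while the electron components satisfy coupled transport equations of the same type, the map $\mathcal{T}_\epsilon$ is given by an explicit and $L^2$-bounded formula, almost identical to that used in the two-body case, only with one additional factor of $\mathcal{E}$ and a second boundary term.

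Next, I would estimate $\mathcal{T}_\epsilon \Psi_1 - \mathcal{T}_\epsilon \Psi_2$ in $X_T$: pure transport is an $L^2$-isometry, the mass source gives a Duhamel factor $CT$, and each boundary reflection contributes an $L^2$-trace term bounded by $\sqrt{T}$ times the $X_T$-norm of $\Psi$ (further damped by $|\mu_\epsilon| \le 1$), exactly as in the two-body argument. This yields a strict contraction for some $T_0 > 0$ independent of the initial data, hence a unique local-in-time fixed point. The energy identity derived from the conserved current (\ref{3-current}), together with $|\mu_\epsilon|^2 \le 1$ at the boundary, then shows that the $L^2$-norm of $\Psi$ is non-increasing in time, so the local solution extends to all of $[0,\infty)$ by iteration on successive intervals of length $T_0$. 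Support in $\overline{\mathcal{S}_1}$ is automatic from finite propagation speed and compact support of $\mathring{\Psi}$, and continuous dependence on $\mathring{\Psi}$ is built into the contraction estimate.

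The main obstacle, compared with the two-body case of \cite{LiNi2020}, is the coupling between the two distinct boundaries $\mathcal{C}_1$ and $\mathcal{C}_2$: a sign component reflected off $\mathcal{C}_1$ can subsequently reach $\mathcal{C}_2$ and be reflected again, so the contraction estimate must absorb boundary-trace contributions from $\mathcal{C}_1$ and $\mathcal{C}_2$ simultaneously, and the bookkeeping across the eight sign components $\psi_{\varsigma_0 \varsigma_1 \varsigma_2}$ must be carried out with some care. The structural core of the argument, however, is identical to that of the two-body IBVP, and the needed trace estimates in $1+1$ dimensions are direct consequences of the characteristic form of the equations.
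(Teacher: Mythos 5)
Your overall instinct---to adapt the Li--Nickel fixed-point scheme recalled in Appendix~\ref{AppendixLN}---is reasonable, but as written the proposal has a genuine gap, and it also misreads what the cutoff is for in this particular theorem. The paper's proof uses no contraction argument here: it exploits $\mu_\epsilon$ to guarantee that data reflected at $\mathcal{C}_1$ needs at least time $\epsilon/2$ to reach $\mathcal{C}_2$ (and vice versa), and then constructs the solution explicitly on a single time step $t\le\epsilon/2$ --- free propagation on free configurations, Goursat evolutions as in (\ref{Leaky Gour Solution}) together with inhomogeneous Klein--Gordon formulas on Compton configurations, where the boundary datum $F_{\varsigma_2}$ involves only freely evolved data precisely because $\mu_\epsilon$ kills the contribution on the region where the electrons are within $\epsilon$ of each other --- and finally iterates this time-$\epsilon/2$ evolution to reach any time. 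So the claim that $\mu_\epsilon$ ``plays essentially no role'' for well-posedness is backwards relative to the paper's argument: the lower bound $\epsilon/2$ on the time between consecutive reflections is the whole mechanism that makes the explicit construction close, and it is also what allows merely bounded (not $C^1$) initial data.

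The concrete gap in your scheme is the choice of function space. For $\Psi\in X_T=C([0,T];L^2(\mathcal{S}_1,\mathbb{C}^8))$ there is no well-defined restriction of $\Psi(t)$ to $\mathcal{C}_1\cup\mathcal{C}_2$, so the step in which $\mathcal{T}_\epsilon$ ``supplies boundary data from $\Psi$ through the reflection conditions,'' and the claimed bound of that reflected term by $\sqrt{T}\,\|\Psi\|_{X_T}$, are not even defined, let alone proved; boundary values of solutions of transport equations are controlled only through the characteristic structure (initial data plus sources integrated along characteristics), not by the interior $L^2$ norm of an arbitrary element of $X_T$. Similarly, ``pure transport is an $L^2$-isometry'' fails on a domain with boundary unless the boundary fluxes are accounted for. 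If you instead work in the weighted $C^1_b$ framework of Theorem~\ref{thm:2body}, traces make sense and the boundary coefficient $|\mu_\epsilon e^{i\theta_j}|\le 1$ costs nothing, but then you need $C^1$ data with compatibility conditions at the boundary, which is more than the theorem assumes, and you must still show that the contraction estimate survives repeated reflections between $\mathcal{C}_1$ and $\mathcal{C}_2$ within one time interval --- exactly the point you defer to ``bookkeeping,'' and exactly the point the paper neutralizes by using $\mu_\epsilon$ to separate consecutive reflections by $\epsilon/2$. Either repair (a hidden-regularity/energy argument for $L^2$ traces, or the paper's stepwise construction) requires substantive work, so the proposal as it stands does not establish the theorem.
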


\begin{proof}
By construction, these boundary conditions ensure that data which bounces off one boundary $\mathcal{C}_1$ takes at least $\frac{\epsilon}{2}$ time to propagate to the other boundary $\mathcal{C}_2$ and vice versa. Because of this, we will find it easier to only solve for the evolution to the leaky boundary IBVP for all times up to $\frac{\epsilon}{2}$. Since $\epsilon$ is fixed, we may repeatedly apply the evolution operator with time step size $\frac{\epsilon}{2}$ until any desired time is reached. 


\par
Fix $t\leq \frac{\epsilon}{2}$. We wish to solve for $\Psi(t)$ on $\mathcal{S}_1$. Similarly to previous interacting systems, we approach this by splitting up our configuration space into regions for which the particle configurations are "Free" or "Compton". 
\begin{equation}
    \mathcal{S}_1= \mathcal{F} \cup \mathcal{X},
\end{equation}
where
\begin{equation}
    \mathcal{F}:=\left \{(s_\text{ph},s_{\text{e}_1},s_{\text{e}_2}) \in \mathcal{S}_1 : s_{\text{e}_1}+t< s_\text{ph}-t, \quad  s_\text{ph}+t< s_{\text{e}_2}-t \right\},
\end{equation}
\begin{equation}
    \mathcal{X}:=\mathcal{S}_1\setminus\mathcal{F}= \left\{(s_\text{ph},s_{\text{e}_1},s_{\text{e}_2}) \in \mathcal{S}_1 : s_{\text{e}_1}+t\geq s_\text{ph}-t \ \mbox{ or } \quad  s_\text{ph}+t\geq s_{\text{e}_2}-t \right\}.
\end{equation}
\paragraph{Proof for free configurations:}
Similarly to the interacting multi-time problems encountered earlier, the solution for free configurations is simply given by the free three body evolution operator
\begin{equation}\label{Leaky Free}
    \Psi(t,s_\text{ph},s_{\text{e}_1}, s_{\text{e}_2})=\big(\mathcal{P}(\textbf{x}_\text{ph})\otimes \mathcal{E}(\textbf{x}_{\text{e}_1}) \otimes  \mathcal{E}(\textbf{x}_{\text{e}_2}) \big) \mathring{\Psi},
\end{equation}
where here $\textbf{x}_\text{ph}=(t,s_\text{ph})$, $\textbf{x}_{\text{e}_1}=(t,s_{\text{e}_1})$ and analogously for $\textbf{x}_{\text{e}_2}$. 

\paragraph{Proof for Compton configurations:}
We begin by splitting up the Compton Configurations into three cases:

\textbf{Case 1:} This case covers all configurations for which the first electron is interacting with the photon, while the second electron is freely evolving. In this regime, the $\varsigma_0=+$ components are freely evolving, and so they are given by
\begin{equation}\label{Leaky Compton Free}
    \psi_{+\varsigma_1 \varsigma_2}=\Pi_{+\varsigma_1 \varsigma_2}\bigg( \big(\mathcal{P}(\textbf{x}_\text{ph})\otimes \mathcal{E}(\textbf{x}_{\text{e}_1}) \otimes  \mathcal{E}(\textbf{x}_{\text{e}_2}) \big) \mathring{\Psi}\bigg),
\end{equation}
where $\Pi$ is the projection operator onto the associated component. 
The $\varsigma_0=-$ components require a different approach.
\par
We begin by expanding the first equation of (\ref{Leaky IVP}) in terms of the components of $\Psi$. This returns sourced transports 
\begin{equation}\label{equal time sourced transports}
    (\partial_t - \varsigma_0 \partial_{s_{\text{ph}}} - \varsigma_1 \partial_{s_{\text{e}_1}} - \varsigma_2 \partial_{s_{\text{e}_2}})\psi_{\varsigma_0 \varsigma_1 \varsigma_2}+i\omega \psi_{\varsigma_0 \overline{\varsigma}_1 \varsigma_2}+i\omega\psi_{\varsigma_0 {\varsigma}_1 \overline{\varsigma}_2}=0.
\end{equation}
Applying the operator $(\partial_t - \varsigma_0 \partial_{s_{\text{ph}}} - \overline{\varsigma}_1 \partial_{s_{\text{e}_1}} - \varsigma_2 \partial_{s_{\text{e}_2}})$ to equation (\ref{equal time sourced transports}) returns
\begin{equation}\label{Equal time KG}
    ((\partial_{\nu} - \varsigma_2 \partial_{s_{\text{e}_2}})^2-\partial_{s_{\text{e}_1}}^2)\psi_{-+\varsigma_2} +2i\omega \partial_{\nu}\psi_{-+ \overline{\varsigma}_2}=0,
\end{equation}
where $\partial_\nu=\partial_t +\partial_{s_\text{ph}}$. It is then easy to verify that the following solution formula for $\psi_{-+\varsigma_2}$ satisfies equation (\ref{Equal time KG}) in the space of Compton configurations. 
\begin{equation}\label{Leaky Gour Solution}
    \psi_{-+\varsigma_2}(t, s_\text{ph},s_{\text{e}_1},s_{\text{e}_2})=G^R(t,s_{\text{e}_1}-(s_\text{ph}-t))F_{\varsigma_2} + G^L(t,s_{\text{e}_1}-(s_\text{ph}-t))G_{\varsigma_2},
\end{equation}
where 
\begin{equation}\label{Leaky Gour Data 1}
\begin{split}
G_{\varsigma_2}(c)&=\Pi_{\varsigma_2}\bigg( \mathcal{E}(t-c,s_{\text{e}_2})\begin{pmatrix}
        \psi_{-++}
        \\
        \psi_{-+-}
    \end{pmatrix}
    (c,\textbf{c},s_{\text{e}_2}) \bigg)
    =\psi_{-+\varsigma_2}(c,\textbf{c},s_{\text{e}_2}+\varsigma_2 (t-c)) 
    \\
    &-\frac{\omega}{2}\int_{s_{\text{e}_2}-(t-c)}^{s_{\text{e}_2}+(t-c)} J_{1}(\omega\sqrt{(t-c)^2-(s_{\text{e}_2}-\sigma)^2})\frac{\sqrt{(t-c)-\varsigma(s_{\text{e}_2}-\sigma)}}{\sqrt{(t-c)+\varsigma(s_{\text{e}_2}-\sigma)}}\psi_{-+\varsigma_2}(c,\textbf{c},\sigma) d\sigma
    \\
    &
    -\frac{i\omega}{2}\int_{s_{\text{e}_2}-(t-c)}^{s_{\text{e}_2}+(t-c)} J_{0}(\omega\sqrt{(t-c)^2-(s_{\text{e}_2}-\sigma)^2}) \psi_{-+\overline{\varsigma}_2}(c,\textbf{c},\sigma) d\sigma,
\end{split}
\end{equation}
and 
$\textbf{c}=(s_\text{ph}-t+c,s_\text{ph}-t-c) \in \mathbb{R}_\text{ph}\times \mathbb{R}_{\text{e}_1}$. We define $F_{\varsigma_2}$ similarly, but taking care to correctly apply our boundary condition
\begin{equation}\label{Leaky Gour Data 2}
    F_{\varsigma_2}(b)= \Pi_{\varsigma_2}\bigg( \mathcal{E}(t-b,s_{\text{e}_2})\begin{pmatrix}
        \mu_\epsilon(|s_{\text{e}_1}-s_{\text{e}_2}|)e^{i\theta_1}\psi_{+-+}
        \\
        \mu_\epsilon(|s_{\text{e}_1}-s_{\text{e}_2}|)e^{i\theta_1}\psi_{+--}
    \end{pmatrix}
    (b,\textbf{b},s_{\text{e}_2}) \bigg),
\end{equation}
To solve for the remaining two components, we apply the operator $(\partial_t - \varsigma_0 \partial_{s_{\text{ph}}} - {\varsigma}_1 \partial_{s_{\text{e}_1}} - \overline{\varsigma}_2 \partial_{s_{\text{e}_2}})$ to equation (\ref{equal time sourced transports}) and find that $\psi_{--\varsigma_2}$ satisfies
\begin{equation}\label{Equal time Sourced KG}
    (\partial_u^2 - \partial_{s_{\text{e}_2}}^2 +\omega^2)\psi_{--\varsigma_2} +i\omega (\partial_u + \overline{\varsigma}_2\partial_{\text{e}_2})\psi_{-+\varsigma_2}+\omega^2\psi_{-+\overline{\varsigma}_2}=0,
\end{equation}
where $\partial_u=\partial_t+\partial_{s_\text{ph}}+\partial_{s_{\text{e}_1}}$. Equation (\ref{Equal time Sourced KG}) is an inhomogenous Klein-Gordon equation with sources that are known to us in terms of the initial data from equation (\ref{Leaky Gour Solution}). We solve for the two remaining components by setting up and solving the initial value problem
\begin{equation}
    \begin{cases}
    (\partial_u^2 - \partial_{s_{\text{e}_2}}^2 +\omega^2)\psi_{--\varsigma_2} +i\omega (\partial_u + \overline{\varsigma}_2\partial_{\text{e}_2})\psi_{-+\varsigma_2}+\omega^2\psi_{-+\overline{\varsigma}_2}=0
    \\
    \psi_{--\varsigma_2}\bigg{|}_{u=0}=\mathring{\psi}_{--\varsigma_2}(s_\text{ph}-t,s_{\text{e}_1}-t,s_{\text{e}_2})
    \\
    \partial_u\psi_{--\varsigma_2}\bigg{|}_{u=0}=\bigg(\varsigma_2 \partial_{\text{e}_2}\mathring{\psi}_{--\varsigma_2} +i\omega \mathring{\psi}_{-+\varsigma_2} +i\omega \mathring{\psi}_{--\overline{\varsigma}_2}\bigg)(s_\text{ph}-t,s_{\text{e}_1}-t,s_{\text{e}_2}).
\end{cases}
\end{equation}
By linearity, we can write the solution for $\psi_{--\varsigma_2}$ as $U_{\varsigma_2} +S_{\varsigma_2}$, where $U_{\varsigma_2}$ represents the freely evolving initial data  and $S_{\varsigma_2}$ the sourced data. These solutions are given as
\begin{eqnarray}
 U_{\varsigma_2}(t,s_\text{ph},s_{\text{e}_1},s_{\text{e}_2})&=&\mathring{\psi}_{--\varsigma_2}(s_\text{ph}-t,s_{\text{e}_1}-t,s_{\text{e}_2}+\varsigma_2 t) 
\nonumber\\
&&\mbox{}-\frac{\omega}{2}\int_{s_{\text{e}_2}-t}^{s_{\text{e}_2}+t} J_{1}(\omega\sqrt{t^2-(s_{\text{e}_2}-\sigma)^2})\frac{\sqrt{t-\varsigma(s_{\text{e}_2}-\sigma)}}{\sqrt{t+\varsigma_2(s_{\text{e}_2}-\sigma)}}\mathring{\psi}_{--\varsigma}(s_\text{ph}-t,s_{\text{e}_1}-t,\sigma) d\sigma
\\
&&\mbox{}-\frac{i\omega}{2}\int_{s_{\text{e}_2}-t}^{s_{\text{e}_2}+t} J_{0}(\omega\sqrt{t^2-(s_{\text{e}_2}-\sigma)^2}) \big(\mathring{\psi}_{-+\varsigma_2}+\mathring{\psi}_{--\overline{\varsigma}_2} \big)(s_\text{ph}-t,s_{\text{e}_1}-t,\sigma) d\sigma
\nonumber
\end{eqnarray}
and
\begin{eqnarray}
    S_{\varsigma_2}(t,s_\text{ph},s_{\text{e}_1},s_{\text{e}_2})&=&
    -\frac{\omega^2}{2}\int_0^t \int_{s_{\text{e}_2}-(t-u)}^{s_{\text{e}_2}+(t-u)} J_{0}(\omega\sqrt{(t-u)^2-(s_{\text{e}_2}-\sigma)^2}) {\psi}_{-+\overline{\varsigma}_2}(u,s_\text{ph}-t+u,s_{\text{e}_1}-t+u,\sigma) d\sigma du
\\
    &-&\frac{i\omega}{2}\int_0^t \int_{s_{\text{e}_2}-(t-u)}^{s_{\text{e}_2}+(t-u)} J_{0}(\omega\sqrt{(t-u)^2-(s_{\text{e}_2}-\sigma)^2}) (\partial_u -\overline{\varsigma}_2\partial_{\text{e}_2}){\psi}_{-+{\varsigma}_2}(u,s_\text{ph}-t+u,s_{\text{e}_1}-t+u,\sigma) d\sigma du,\nonumber
\end{eqnarray}
\newline
\textbf{Case 2:} This case covers all configurations for which the second electron is interacting with the photon, while the first electron is freely evolving. In this regime, the $\varsigma_0=-$ components are freely evolving, and their solution formulas are given by
\begin{equation}\label{Leaky Compton C2 Free}
    \psi_{-\varsigma_1 \varsigma_2}=\Pi_{-\varsigma_1 \varsigma_2}\bigg( \big(\mathcal{P}(\textbf{x}_\text{ph})\otimes \mathcal{E}(\textbf{x}_{\text{e}_1}) \otimes  \mathcal{E}(\textbf{x}_{\text{e}_2}) \big) \mathring{\Psi}\bigg)
\end{equation}
Meanwhile, the $\varsigma_0=+$ components will be given by Goursat type evolutions for $\psi_{+\varsigma_1 -}$, which is then plugged into an inhomogenous Klein-Gordon equation for $\psi_{+\varsigma_1 +}$. 
\newline
\textbf{Case 3:} For the final set of configurations we will construct $\Psi$ on are those for which both electrons are interacting with the photon. This is similar to the third case discussed in the multi-time setting, except now we are not excluding configurations where the two electrons are interacting, previously called Coulomb configurations. Since we are only evolving for this short time, we will find that the solution formulas constructed for this case also extends to Coulomb configurations identically. In this regime, non of the wave function components are freely evolving, their solutions will be given by Goursat evolutions and inhomogenous Klein-Gordon equations which are identical to those found in the previous two cases. 
\par
For example, the solution formula for $\psi_{-+\varsigma_2}$ in this regime is still given by (\ref{Leaky Gour Solution}). To see that this solution formula makes sense, we note that the data necessary to define $G_{\varsigma_2}$ as in (\ref{Leaky Gour Data 1}) is data for which the first electron is freely evolving, while the second electron interacts with the photon. This data for this is given by equation (\ref{Leaky Compton C2 Free}). Of greater interest is the data necessary to define $F_{\varsigma_2}$ as in (\ref{Leaky Gour Data 2}), which written explicitly is
\begin{equation}
\begin{split}
F_{\varsigma_2}(b)&= \Pi_{\varsigma_2}\bigg( \mathcal{E}(t-b,s_{\text{e}_2})\begin{pmatrix}
        \mu_\epsilon(|s_{\text{e}_1}-s_{\text{e}_2}|)e^{i\theta_1}\psi_{+-+}
        \\
        \mu_\epsilon(|s_{\text{e}_1}-s_{\text{e}_2}|)e^{i\theta_1}\psi_{+--}
    \end{pmatrix}
    (b,\textbf{b},s_{\text{e}_2}) \bigg)=\big(\mu_\epsilon e^{i\theta_1}\psi_{+-\varsigma_2}\big)(b,\textbf{b},s_{\text{e}_2}+\varsigma_2 (t-b)) 
    \\
    &-\frac{\omega}{2}\int_{L_\epsilon} J_{1}(\omega\sqrt{(t-b)^2-(s_{\text{e}_2}-\sigma)^2})\frac{\sqrt{(t-b)-\varsigma_2(s_{\text{e}_2}-\sigma)}}{\sqrt{(t-b)+\varsigma_2(s_{\text{e}_2}-\sigma)}}\big(\mu_\epsilon e^{i\theta_1}\psi_{+-\varsigma_2}\big)(b,\textbf{b},\sigma) d\sigma
    \\
    &
    -\frac{i\omega}{2}\int_{L_\epsilon} J_{0}(\omega\sqrt{(t-b)^2-(s_{\text{e}_2}-\sigma)^2}) \big(\mu_\epsilon e^{i\theta_1}\psi_{+-\overline{\varsigma}_2}\big)(b,\textbf{b},\sigma) d\sigma,
\end{split}
\end{equation}
where $\textbf{b}=(s_\text{ph}-t+b,s_\text{ph}-t+b)\in \mathbb{R}_\text{ph} \times \mathbb{R}_{\text{e}_1}$, and 
\begin{equation}
    L_\epsilon:=\left\{\sigma \in \mathbb{R}_{\text{e}_2}: |\sigma - s_{\text{e}_2}|<t-b, \quad |\sigma - (s_\text{ph}-t +b)|\geq \epsilon  \right\}.
\end{equation}
This extra condition on the region of integration comes from the lack of support of $\mu_\epsilon$ for configurations where the two electrons are $\epsilon$ distance away. 
 We find that only freely evolving data is necessary to define $F_{\varsigma_2}$ even on configurations where the two electrons are interacting, since we are only evolving up to $t\leq \frac{\epsilon}{2}$. It follows that for all $t\leq \frac{\epsilon}{2}$, we have constructed a global solution for $\psi_{-+\varsigma_2}$ in $\mathcal{S}_1$ using solution formulas (\ref{Leaky Free}) for free configurations, (\ref{Leaky Compton C2 Free}) for Compton configurations of case 2, and  (\ref{Leaky Gour Solution}) for Compton configuration cases 1 and 3. The remaining six components can be treated similarly, with the formula $\psi_{+\varsigma_1 -}$ also given in terms of Goursat evolution operators while the other four components are solved by studying inhomogeneous Klein-Gordon equations. 
 This concludes our construction of the small time solution to the leaky boundary IBVP, and thus the proof of Theorem~\ref{thm:Leaky}.
\end{proof}
\subsubsection{Main Result}
For $\mathring{\Psi}$ bounded and compactly supported in $\mathcal{S}_1$, let $U_\epsilon(t)$ denote the map $\mathring{\Psi}\to \Psi_\epsilon(t)$, where $\Psi_\epsilon(t)$ solves the IBVP (\ref{Leaky IVP}) with initial data $\mathring{\Psi}$. We will now state and prove the main result of this paper.
\begin{theorem}\label{thm:main}
    Let $\mathring{\Psi}:\mathcal{S}_1 \to \mathbb{C}^8$ bounded and compactly supported. Then taking a limit of the solutions $\Psi_\epsilon(t)$ as $\epsilon \to 0$ returns the equal-time dynamics we originally set out to solve for, i.e.  $\lim_{\epsilon \to 0}U_\epsilon(t)\mathring{\Psi}=e^{\frac{-it}{\hbar} \tilde{H}}\mathring{\Psi}$ where $\tilde{H}$ is the self-adjoint extension of $\hat{H}$ with domain
   \begin{equation}\label{SA dom}
    D(\tilde{H}):=\left\{\Psi \in D(\hat{H}^*): \psi_{-+\varsigma_2} = e^{i\theta_1}\psi_{+-\varsigma_2} \text{ on } \mathcal{C}_1, \quad \psi_{+\varsigma_1-} = e^{i\theta_2}\psi_{-\varsigma_1+}  \text{ on } \mathcal{C}_2 \right\}.
\end{equation}
Hence the Feynman diagram sum depicted in Fig.~\ref{fig:Infinite Diagram Sum} converges and produces the correct dynamics.
\end{theorem}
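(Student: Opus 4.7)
The plan is a semigroup convergence argument in the Trotter--Kato framework. First I identify both $U_\epsilon(t)$ and the putative limit $e^{-it\tilde H/\hbar}$ as $C_0$-semigroups on $L^2(\mathcal S_1,\mathbb C^8)$. Applying the probability-flux identities of Section~3.3 to the modified boundary condition of (\ref{Leaky IVP}), one finds that because $|\mu_\epsilon|\le 1$, the net flux across $\mathcal C_1\cup\mathcal C_2$ is always outgoing; hence the $L^2$ norm is nonincreasing along $U_\epsilon$, and the explicit formulas of Theorem~\ref{thm:Leaky} (built from the $L^2$-bounded propagators $\mathcal P$, $\mathcal E$, $G^R$, $G^L$ together with pointwise multiplication by $\mu_\epsilon$) extend by density from smooth compactly supported data to a $C_0$-contraction semigroup with generator $-i\hat H_\epsilon/\hbar$. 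By Lumer--Phillips, $\hat H_\epsilon$ is a maximal dissipative extension of $\hat H$ whose domain encodes the leaky boundary condition. Conversely, the well-posedness theorem stated at the start of Section~5 produces a global solution to the fully reflecting IBVP, and the corresponding flux identity (now with $|\mu|=1$) gives exact probability conservation, so that solution map extends to a one-parameter \emph{unitary} group $\tilde U(t)$ on $L^2$; Stone's theorem identifies its generator with a self-adjoint operator, which one checks agrees with $\tilde H$ on the domain (\ref{SA dom}).

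With both sides identified as $C_0$-semigroups, the convergence $U_\epsilon(t)\mathring\Psi \to e^{-it\tilde H/\hbar}\mathring\Psi$ reduces, via the Trotter--Kato theorem, to strong resolvent convergence $(\hat H_\epsilon - z)^{-1} \to (\tilde H - z)^{-1}$ for a single $z$ in the lower half-plane. To establish this, I would exhibit a core $\mathcal D \subset D(\tilde H)$ together with approximants $\phi_\epsilon \in D(\hat H_\epsilon)$ satisfying $\phi_\epsilon \to \phi$ and $\hat H_\epsilon\phi_\epsilon \to \tilde H\phi$ in $L^2$ for every $\phi \in \mathcal D$. A natural choice for $\mathcal D$ is the space of $C^\infty_c$ sections over $\mathcal S_1$ obeying the reflecting boundary conditions of (\ref{SA dom}), together with $\phi_\epsilon := (1-\chi_\epsilon)\phi$, where $\chi_\epsilon$ is a smooth cutoff supported in the diagonal strip $\{|s_{\text{e}_1}-s_{\text{e}_2}| \le 2\epsilon\}$. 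On the support of $\phi_\epsilon$ one has $\mu_\epsilon\equiv 1$, so $\phi_\epsilon$ automatically lies in $D(\hat H_\epsilon)$; since the strip shrinks linearly in $\epsilon$, both $\|\phi_\epsilon - \phi\|_{L^2}$ and $\|\hat H(\phi_\epsilon - \phi)\|_{L^2}$ tend to $0$, which yields the required strong resolvent convergence.

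The main obstacle is the first step above: rigorously promoting the classical solution operator of Theorem~\ref{thm:Leaky} to a contraction semigroup on $L^2$ with a cleanly identified dissipative generator. One must verify that the Goursat and inhomogeneous Klein--Gordon representation formulas commute correctly with $L^2$-approximation of initial data, and derive the dissipation bound via a global integration by parts across $\mathcal C_1 \cup \mathcal C_2$ using the structure of the probability tensor $j^{\mu\nu\kappa}_X$ defined in (\ref{3-current}). Once this dissipative structure is secured, the Trotter--Kato step is standard, and as a direct corollary the Feynman-like diagrammatic series depicted in Fig.~\ref{fig:Infinite Diagram Sum} converges in $L^2$ to the unitary evolution $e^{-it\tilde H/\hbar}\mathring\Psi$, as claimed.
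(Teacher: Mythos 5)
Your outline matches the paper's strategy in its first step (flux dissipativity plus Lumer--Phillips gives $C_0$ contraction semigroups $U_\epsilon(t)=e^{-\frac{it}{\hbar}\hat H_\epsilon}$), and reducing the limit to generator convergence on a core is the same idea as the paper's use of Kurtz's theorem (Theorem~\ref{Kurtz}). But the step that actually carries the proof --- exhibiting, for every $\phi$ in a core of $\tilde H$, approximants $\phi_\epsilon\in D(\hat H_\epsilon)$ with $\hat H_\epsilon\phi_\epsilon\to\tilde H\phi$ --- does not work as you have set it up. First, the core is not identified: if your $C_c^\infty$ sections are supported in the open set $\mathcal S_1$, the boundary conditions are vacuous and you are back to $D(\hat H)=C_c^\infty(\mathcal S_1)$, which is \emph{not} a core for $\tilde H$ (the deficiency spaces computed in Section 5 are nontrivial, so $\hat H$ is not essentially self-adjoint); if instead they carry nonvanishing traces, their density in the graph norm of $\tilde H$ is an unproven claim. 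Second, and more concretely, the cutoff estimate fails: with $\chi_\epsilon$ supported in $\{|s_{\text{e}_1}-s_{\text{e}_2}|\le 2\epsilon\}$ one has $|\nabla\chi_\epsilon|\sim\epsilon^{-1}$, while the strip (intersected with $\mathcal S_1$ and a compact set) has measure $O(\epsilon^2)$ since both $s_{\text{e}_2}-s_{\text{e}_1}$ and $s_{\text{ph}}-s_{\text{e}_1}$ are confined to intervals of length $O(\epsilon)$. Hence $\|\hat H(\phi_\epsilon-\phi)\|_{L^2}\gtrsim\|[\hat H,\chi_\epsilon]\phi\|_{L^2}=O(1)$ (it does not tend to $0$) unless $\phi$ vanishes on the corner edge $\mathcal C_1\cap\mathcal C_2=\{s_{\text{e}_1}=s_{\text{ph}}=s_{\text{e}_2}\}$; you neither restrict to such $\phi$ nor show that they form a core. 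A repair is possible (a logarithmic-scale transition region, plus a removability argument for the codimension-two edge), but that is exactly the analysis your write-up omits, and it is the place where the whole difficulty of the problem sits.

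Note how the paper sidesteps this corner issue entirely: it computes the deficiency spaces $\mathrm{Ker}(i\pm\hat H^*)$ explicitly, identifies each leaky generator as $\hat H_{T_\epsilon}$ for an explicit boundary contraction $T_\epsilon$ (Proposition~\ref{T sub epsilons}), and for $\Psi=\Psi_0+T_0\Psi_-+\Psi_-\in D(\tilde H)$ takes the approximant $\Psi_\epsilon=\Psi_0+T_\epsilon\Psi_-+\Psi_-$, on which $\hat H^*$ acts as $\pm i$; no cutoff, no core identification, and graph convergence reduces to $T_\epsilon\to T_0$. The same machinery (Theorem~\ref{Arendt}: $\hat H_T$ self-adjoint iff $T$ unitary) is what proves that the domain \eqref{SA dom} really defines a self-adjoint extension --- a fact your proposal delegates to ``one checks,'' via Stone's theorem applied to a unitary group built from the multi-time classical solutions; making that identification of the generator's exact domain rigorous is comparable in difficulty to the deficiency-space computation you are trying to avoid.
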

The following is an immediate corollary of the above theorem:
\begin{corollary}
    Let $\mathring{\Psi}:\mathcal{S}_1 \to \mathbb{C}^8$ be $C^1$ and compactly supported in $\mathcal{S}_1$. Then $\lim_{\epsilon \to 0} \Psi_\epsilon$ has a representative in $L^2(\mathcal{S}_1)$ which is continuous with bounded derivatives. 
\end{corollary}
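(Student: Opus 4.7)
The plan is to identify the $L^2$ limit $\Psi^{*}(t):=\lim_{\epsilon\to 0}\Psi_\epsilon(t)$ produced by Theorem~\ref{thm:main} with the classical multi-time solution supplied by the three-body IBVP theorem at the start of this section, restricted to the equal-time hyperplane. Since $\mathring\Psi\in C^1$ is compactly supported in the open set $\mathcal{S}_1$, its support lies at positive distance from the coincidence sets $\mathcal{C}_1,\mathcal{C}_2$, so it trivially satisfies the boundary conditions of (\ref{three body problem}). Applying the well-posedness theorem with initial data $\mathring\Psi$ therefore produces a unique classical multi-time solution $\Psi^{\mathrm{cl}}$, supported in $\overline{\mathcal{S}_1}$, continuous with bounded derivatives. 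Its restriction to the equal-time slice $t_\text{ph}=t_{\text{e}_1}=t_{\text{e}_2}=t$ defines a function $\widetilde\Psi^{\mathrm{cl}}(t,\cdot):\mathcal{S}_1\to\mathbb{C}^8$, again continuous with bounded derivatives, and compactly supported for each $t\geq 0$ by finite speed of propagation in the three-body configuration space.

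Next I would verify that $t\mapsto\widetilde\Psi^{\mathrm{cl}}(t,\cdot)$ is a strong $L^2$-solution of the abstract Cauchy problem $i\hbar\partial_t\Psi=\tilde H\Psi$. On the equal-time slice, adding the three free equations (\ref{multi-time equation}) collapses them into $i\hbar\partial_t\widetilde\Psi^{\mathrm{cl}}=\hat H\widetilde\Psi^{\mathrm{cl}}$ pointwise on $\mathcal{S}_1$, and the multi-time boundary relations (\ref{psi boundary}) descend, in the fixed frame where $X=\partial_t$, to exactly the phase conditions defining $D(\tilde H)$ in (\ref{SA dom}). Bounded derivatives together with compact support give strong $L^2$-differentiability in $t$ as well as the required membership $\widetilde\Psi^{\mathrm{cl}}(t,\cdot)\in D(\tilde H)$ for every $t\geq 0$. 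Stone's theorem, applied to the self-adjoint extension $\tilde H$, then identifies the unique such strong solution with $e^{-it\tilde H/\hbar}\mathring\Psi$, and Theorem~\ref{thm:main} in turn identifies the right-hand side with $\Psi^{*}(t)$ in $L^2$.

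Consequently $\widetilde\Psi^{\mathrm{cl}}$ is the sought continuous representative of $\Psi^{*}$ with bounded derivatives, which is the content of the corollary. The one delicate point, and the step I would approach most carefully, is the verification that the equal-time trace of $\Psi^{\mathrm{cl}}$ on $\mathcal{C}_1\cup\mathcal{C}_2$ satisfies the phase-matching relations appearing in (\ref{SA dom}); this reduces to checking that the multi-time boundary conditions (\ref{psi boundary}), specialized to equal times with $X=\partial_t$, coincide with those prescribing $D(\tilde H)$, a short direct computation relying only on the $C^1$-regularity of $\Psi^{\mathrm{cl}}$ up to $\overline{\mathcal{S}_1}$ furnished by the well-posedness theorem.
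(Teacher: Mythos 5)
Your proposal is correct and follows essentially the same route as the paper: invoke the multi-time well-posedness theorem to obtain a classical solution that is continuous with bounded derivatives, restrict it to the equal-time slice, and identify that restriction with $e^{-it\tilde{H}/\hbar}\mathring{\Psi}=\lim_{\epsilon\to 0}\Psi_\epsilon$. The only difference is cosmetic: where the paper compresses the identification into a one-line appeal to uniqueness via probability conservation, you justify it by checking membership in $D(\tilde{H})$ and using uniqueness of strong solutions for the self-adjoint generator, which is a legitimate (indeed more explicit) version of the same step.
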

\begin{proof}
In Theorem~\ref{thm:2body}
we showed that there exists a solution to the multi-time IBVP that is continuous and with bounded derivatives. This solution must be unique by probability conservation. Restricting this solution to the equal-time setting must therefore be a representative of  $e^{\frac{-it}{\hbar}\tilde{H}} \mathring{\Psi}$.
\end{proof}
Here is an outline of the proof of Theorem~\ref{thm:main}.
\begin{itemize}
    \item[STEP 1.] We first show that for each $\epsilon>0$, $U_\epsilon(t)$ extends to a $C_0$ contraction semigroup on $L^2(\mathcal{S}_1)$, which by the theorem of Lumer-Phillips implies $U_\epsilon(t)=e^{\frac{-it}{\hbar}\hat{H}_\epsilon}$ for some maximally dissipative extension $\frac{-i}{\hbar}\hat{H}_\epsilon$ of $\frac{-i}{\hbar}\hat{H}$.
    \item[STEP 2.] We next review the approximation theorems of Kurtz, and give sufficient conditions on the generators $\frac{-i}{\hbar}\hat{H}_\epsilon$ for the evolution mappings $e^{\frac{-it}{\hbar}\hat{H}_\epsilon}$ to converge to another $C_0$ contraction semigroup as $\epsilon \to 0$.
    \item[STEP 3.] We then write down explicit expressions for the generators $\frac{-i}{\hbar}\hat{H}_\epsilon$, prove $\tilde{H}$ is self-adjoint, and conclude by applying the approximation theorems to show $\lim_{\epsilon \to 0}e^{\frac{-it}{\hbar}\hat{H}_\epsilon}=e^{\frac{-it}{\hbar}\tilde{H}}$.
\end{itemize}
\noindent\paragraph{Step 1 in the proof of Thm.~\ref{thm:main}:} We first review the definition of a $C_0$ contraction semigroup.
\begin{definition}
    A $\mathbf{C_0}$ \textbf{contraction semigroup} on $L^2(\mathcal{S}_1)$ is a one-parameter family of linear maps $U(t):L^2(\mathcal{S}_1) \to L^2(\mathcal{S}_1)$ defined for $t\geq 0$ with the following properties
    \begin{enumerate}
        \item[(a)] The maps form a semigroup under composition, i.e $U(t) U(s)= U(t+s)$ for all $t,s \geq 0$, with $U(0)=\mathbb{1}$.
        \item[(b)] They are strongly continuous,   $\lim_{t \to t_0}||U(t) \Psi - U(t_0)\Psi||_{L^2(\mathcal{S}_1)}=0$ for all $\Psi \in L^2(\mathcal{S}_1)$, $t_0 \geq 0$. 
        \item[(c)] They are contractions, $||U(t) \Psi||_{L^2(\mathcal{S}_1)}\leq ||\Psi||_{L^2(\mathcal{S}_1)}$ for all $\Psi \in L^2(\mathcal{S}_1)$, $t_0 \geq 0$.
    \end{enumerate}
\end{definition}
\begin{proposition}
    For each $\epsilon>0$, the family of solution maps $U_\epsilon(t)$ extends to a $C_0$ contraction semigroup on $L^2(\mathcal{S}_1)$.
\end{proposition}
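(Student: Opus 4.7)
The plan is to verify that the family $U_\epsilon(t)$ is a linear contraction on the dense subspace $\mathcal{D}\subset L^2(\mathcal{S}_1)$ of bounded, compactly supported functions, on which Theorem~\ref{thm:Leaky} already provides a well-defined evolution, and then to extend by continuity. Once $L^2$-contractivity is established on $\mathcal{D}$, $U_\epsilon(t)$ admits a unique bounded extension to $L^2(\mathcal{S}_1)$, and the semigroup and strong-continuity properties extend automatically by density.

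First I would prove the contraction estimate $\|U_\epsilon(t)\mathring{\Psi}\|_{L^2(\mathcal{S}_1)}\le \|\mathring{\Psi}\|_{L^2(\mathcal{S}_1)}$ via a probability-current computation. Multiplying each equation in (\ref{equal time sourced transports}) by $\overline{\psi_{\varsigma_0\varsigma_1\varsigma_2}}$ and taking real parts yields the pointwise identity
\begin{equation*}
\partial_t |\psi_{\varsigma_0\varsigma_1\varsigma_2}|^2 = \bigl(\varsigma_0\partial_{s_{\text{ph}}}+\varsigma_1\partial_{s_{\text{e}_1}}+\varsigma_2\partial_{s_{\text{e}_2}}\bigr)|\psi_{\varsigma_0\varsigma_1\varsigma_2}|^2 + 2\omega\,\mathrm{Im}\bigl[\overline{\psi_{\varsigma_0\varsigma_1\varsigma_2}}\bigl(\psi_{\varsigma_0\overline{\varsigma}_1\varsigma_2}+\psi_{\varsigma_0\varsigma_1\overline{\varsigma}_2}\bigr)\bigr].
\end{equation*}
Summing over all eight sign tuples, the mass-induced cross terms cancel pairwise under $\varsigma_i\leftrightarrow\overline{\varsigma}_i$, leaving only a pure spatial divergence. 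Integrating over $\mathcal{S}_1=\{s_{\text{e}_1}<s_{\text{ph}}<s_{\text{e}_2}\}$, with compact support killing contributions at infinity, a slicewise application of the fundamental theorem of calculus reduces the divergence to boundary integrals: the contribution from $\mathcal{C}_1$ collapses to $2\sum_{\varsigma_2}(|\psi_{-+\varsigma_2}|^2-|\psi_{+-\varsigma_2}|^2)$ and that from $\mathcal{C}_2$ to $2\sum_{\varsigma_1}(|\psi_{+\varsigma_1-}|^2-|\psi_{-\varsigma_1+}|^2)$. Plugging in the leaky boundary conditions of (\ref{Leaky IVP}) yields in each case a contribution of the form $2(\mu_\epsilon^2-1)|\cdot|^2\le 0$ since $0\le\mu_\epsilon\le 1$. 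Hence $\tfrac{d}{dt}\|\Psi_\epsilon(t)\|_{L^2(\mathcal{S}_1)}^2\le 0$, and applying this to the difference of two solutions yields contractivity of $U_\epsilon(t)$ on $\mathcal{D}$.

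The semigroup identity $U_\epsilon(t+s)=U_\epsilon(t)U_\epsilon(s)$ is then a direct consequence of time-translation invariance of (\ref{Leaky IVP}) together with the uniqueness half of Theorem~\ref{thm:Leaky}, and $U_\epsilon(0)=\mathbb{1}$ is immediate. For strong continuity, by contractivity and the semigroup property it suffices to verify right-continuity at $t=0$ on $\mathcal{D}$; this follows by inspection of the explicit formulas from the proof of Theorem~\ref{thm:Leaky}, since the constituent operators (free propagators $\mathcal{P}$ and $\mathcal{E}$, Goursat operators $G^R$ and $G^L$, and the Bessel-kernel source integrals) are all strongly continuous in $t$, while finite speed of propagation confines $\mathrm{supp}\,\Psi_\epsilon(t)$ to a fixed compact set on any bounded time interval, upgrading $L^\infty$-continuity to $L^2$-continuity by dominated convergence.

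The main technical obstacle will be justifying the boundary-trace manipulation in the current argument for initial data that is merely bounded. A safe route is to first establish the conservation inequality for $\mathring{\Psi}\in C^1_c(\mathcal{S}_1,\mathbb{C}^8)$ whose boundary values satisfy the leaky conditions, a class preserved by the evolution of Theorem~\ref{thm:Leaky} and for which the solution is classical so integration by parts is rigorous, and then to approximate a general element of $\mathcal{D}$ by such data in $L^2$ via mollification combined with cutoffs enforcing the boundary relations. A secondary point requiring care is verifying the pairwise cancellation of the $\omega$-source cross terms in the sum over $(\varsigma_0,\varsigma_1,\varsigma_2)$, which is a direct three-particle analogue of the standard conservation argument for the bulk Dirac current.
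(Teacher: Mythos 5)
Your argument is correct and, for the two substantive ingredients, follows the same route as the paper: the contraction estimate is exactly the paper's probability-current computation (your componentwise identity, after summing over sign tuples, reproduces the current $j^\mu_\Psi$ used in the text, and the boundary flux on $\mathcal{C}_1\cup\mathcal{C}_2$ reduces via the leaky conditions to terms proportional to $\mu_\epsilon^2-1\le 0$), and the semigroup law is obtained in both cases from time-translation invariance plus uniqueness on a dense class, then extended by boundedness. The one place you genuinely diverge is strong continuity: the paper derives it from an energy estimate, setting $\Phi(t)=(U_\epsilon(t)-\mathbb{1})\mathring\Psi$, noting $\Phi$ solves the leaky IBVP with source $\hat H\mathring\Psi$ and zero data, and concluding $\|\Phi(t)\|_{L^2}\le \tfrac{t}{\hbar}\|\hat H\mathring\Psi\|_{L^2}$ before the density argument; you instead reduce to right-continuity at $t=0$ (legitimate, given contractivity and the semigroup law) and read it off the explicit solution formulas of Theorem~\ref{thm:Leaky} together with finite propagation speed. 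Both work; the paper's estimate is more self-contained and quantitative (it does not lean on the structure of the explicit construction), while yours is softer but adequate. One small caution in your version: for merely bounded, discontinuous data the constituent propagators are not continuous in $L^\infty$ (translation is not $L^\infty$-continuous), so the "upgrade by dominated convergence" should be phrased directly as $L^2$-strong continuity on the regular dense class you already introduce in your final paragraph — which is in effect what the paper does by working with $C_c^\infty(\mathcal{S}_1)$ data, whose support away from the boundary makes the trace manipulations and the classical integration by parts unproblematic.
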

\begin{proof}
    Fix $\epsilon>0$. Linearity of the solution map $U_\epsilon(t)$ follows directly from linearity of the IBVP (\ref{Leaky IVP}).
     We now show the linear maps are contractive in $L^2$ norm when acting on smooth compactly supported functions. Let $\mathring{\Psi} \in C_c^\infty(\mathcal{S}_1)$, and $\Psi(t):=U_\epsilon(t) \mathring{\Psi}$ satisfy (\ref{Leaky IVP}) with initial data $\mathring{\Psi}$. The evolution equation admits a probability current $j^\mu_\Psi$ defined by 
\begin{align}
    &j^0_\Psi:=\sum_{\varsigma_0 ,\varsigma_1 ,\varsigma_2} |\psi_{\varsigma_0 \varsigma_1 \varsigma_2}|^2, \quad  
    &j^1_\Psi := \sum_{ \varsigma_1 ,\varsigma_2} |\psi_{- \varsigma_1 \varsigma_2}|^2 -|\psi_{+ \varsigma_1 \varsigma_2}|^2, 
    \\ 
    &j^2_\Psi  :=\sum_{\varsigma_0, \varsigma_2} |\psi_{\varsigma_0 -\varsigma_2}|^2-|\psi_{\varsigma_0 + \varsigma_2}|^2, \quad 
    &j^3_\Psi  :=\sum_{\varsigma_0, \varsigma_1} |\psi_{\varsigma_0 \varsigma_1 -}|^2-|\psi_{\varsigma_0 \varsigma_1 +}|^2 
\end{align}
which satisfies a continuity equation
\begin{equation}\label{Cont 1}
    \frac{\partial j^0_\Psi}{\partial t}=-\vec{\nabla}\cdot \vec{j_\Psi} \quad \text{on }\mathcal{S}_1.
\end{equation}
Integrating this over $\mathcal{S}_1$ returns
\begin{equation}\label{Cont 2}
    \frac{d}{dt} ||\Psi||_{L^2(\mathcal{S}_1)}=\frac{d}{dt} ||j^0_\Psi||_{L^1(\mathcal{S}_1)}= \int_{\mathcal{C}^1 \cup \mathcal{C}^2}-\vec{j}_\Psi\cdot \vec{n} dS,
\end{equation}
where $dS$ is the surface element of $\mathcal{C}^1 \cup \mathcal{C}^2$ and $\vec{n}$ its normals. We compute that
\begin{equation}\label{Cont 3}
    -\vec{j}_\Psi\cdot \vec{n}=\frac{1}{\sqrt{2}}\begin{cases}
    j^1_\Psi-j^2_\Psi=|\psi_{-+-}|^2+|\psi_{-++}|^2-|\psi_{+--}|^2-|\psi_{+-+}|^2 \text{ on } \mathcal{C}^1
    \\
    j^3_\Psi-j^1_\Psi=|\psi_{+--}|^2+|\psi_{++-}|^2-|\psi_{--+}|^2-|\psi_{-++}|^2 \text{ on } \mathcal{C}^2.
    \end{cases}
\end{equation}
Since $\Psi$ satisfies the boundary conditions of (\ref{Leaky IVP}), this reduces to
\begin{equation}\label{Cont 4}
    -\vec{j}_\Psi\cdot \vec{n}=\frac{1}{\sqrt{2}}
    \begin{cases}
    ((\mu_\epsilon)^2 - 1)\big(|\psi_{+--}|^2+|\psi_{+-+}|^2\big) \text{ on } \mathcal{C}_1 \\
    ((\mu_\epsilon)^2 - 1)\big(|\psi_{--+}|^2+|\psi_{-++}|^2\big) \text{ on } \mathcal{C}_2.
    \end{cases}
\end{equation}
The probability current for $\Psi$ satisfies $-\vec{j}_\Psi\cdot \vec{n}\leq 0$ everywhere on $\mathcal{C}_1 \cup \mathcal{C}_2$, hence $||U_\epsilon(t) \mathring{\Psi}||_{L^2(\mathcal{S}_1)}\leq ||\mathring{\Psi}||_{L^2(\mathcal{S}_1)}$ for all $t\geq 0$. The subspace $C_c^\infty(\mathcal{S}_1)$ is dense in $L^2(\mathcal{S}_1)$, so $U_\epsilon(t)$ extends to a one parameter family of linear contractions on $L^2(\mathcal{S}_1)$. 
\par
For smooth compactly supported initial data $\mathring{\Psi}$ and $s>0$, it is easy to verify that $U_\epsilon(t+s)\mathring{\Psi}$ solves the IBVP (\ref{Leaky IVP}) with initial data $U_\epsilon(s) \mathring{\Psi}$, so $U_\epsilon(t)U_\epsilon(s)\mathring{\Psi}=U_\epsilon(t+s)\mathring{\Psi}$. Hence the bounded linear operators $U_\epsilon(t)U_\epsilon(s)$ and $U_\epsilon(t+s)$ agree on a dense subspace of $L^2(\mathcal{S}_1)$, and therefore the semigroup property $U_\epsilon(t)U_\epsilon(s)=U_\epsilon(t+s)$ holds.
\par
To prove continuity in time, we will once again let $\mathring{\Psi}\in C_c^\infty(\mathcal{S}_1)$, and let $\Phi(t):=(U_\epsilon(t) - \mathbb{1})\mathring{\Psi}$. $\Phi$ satisfies the IBVP
 \begin{equation}
\left\{
\begin{array}{rclr} 
i\hbar \partial_t \Phi &= &\hat{H}(\Phi+\mathring{\Psi})  &\text{in } \mathbb{R}^+ \times \mathcal{S}_1 
\\
\Phi \big{|}_{t=0} &= &0
\\
\phi_{-+\varsigma_2} &= &\mu_\epsilon(|s_{\text{e}_1}-s_{\text{e}_2}|)e^{i\theta_1}\phi_{+-\varsigma_2} \hspace{0.5cm}&\text{on } \mathcal{C}_1
\\
\phi_{+\varsigma_1-} &= &\mu_\epsilon(|s_{\text{e}_1}-s_{\text{e}_2}|)e^{i\theta_2}\phi_{-\varsigma_1+}  &\text{on } \mathcal{C}_2
\end{array}
\right.
\end{equation}
Integrating the continuity equation for this IBVP returns
\begin{equation}
    \frac{d}{dt}||\Phi||^2_{L^2(\mathcal{S}_1)}=2\text{Re}\langle \frac{-i}{\hbar}\hat{H}\mathring{\Psi},\Phi \rangle_{L^2(\mathcal{S}_1)}+\int_{\mathcal{C}_1 \cup \mathcal{C}_2}-\vec{j}_\Phi \cdot \vec{n}dS 
\end{equation}
where $\vec{j}_\Phi$ is the probability current defined similarly for $\Phi$. This current also satisfies $\vec{j}_\Phi\cdot \vec{n}\leq 0$ everywhere on $\mathcal{C}_1 \cup \mathcal{C}_2$ since $\Phi$ satisfies the same leaky boundary conditions, therefore
\begin{equation}
    \frac{d}{dt}||\Phi||^2_{L^2(\mathcal{S}_1)}\leq 2\text{Re}\langle\frac{-i}{\hbar} \hat{H}\Psi,\Phi \rangle_{L^2(\mathcal{S}_1)}\leq \frac{2}{\hbar}||\hat{H}\mathring{\Psi}||_{L^2(\mathcal{S}_1)}||\Phi||_{L^2(\mathcal{S}_1)} \quad \Rightarrow \quad ||\Phi(t)||_{L^2(\mathcal{S}_1)}\leq \frac{t}{\hbar}||\hat{H}\mathring{\Psi}||_{L^2(\mathcal{S}_1)}.
\end{equation}
So $||(U_\epsilon(t)-\mathbb{1})\mathring{\Psi}||_{L^2(\mathcal{S}_1)}\to 0$ as $t \to 0$. To show this convergence holds for any $\mathring{\Psi}\in L^2(\mathcal{S}_1)$, let $\mathring{\Psi}_n \in C_c^\infty(\mathcal{S}_1)$ approximate $\mathring{\Psi}$. Then
\begin{equation}
    ||(U_\epsilon(t)-\mathbb{1}) \mathring{\Psi}||_{L^2(\mathcal{S}_1)}\leq ||(U_\epsilon(t)-\mathbb{1}) \mathring{\Psi}_n||_{L^2(\mathcal{S}_1)}+ ||(U_\epsilon(t)-\mathbb{1}) (\mathring{\Psi}-\mathring{\Psi}_n)||_{L^2(\mathcal{S}_1)}\leq ||(U_\epsilon(t)-\mathbb{1}) \mathring{\Psi}_n||_{L^2(\mathcal{S}_1)}+ 2|| \mathring{\Psi}-\mathring{\Psi}_n||_{L^2(\mathcal{S}_1)}.
\end{equation}
Taking a limit of both sides as $t \to 0$ returns
\begin{equation}
    \lim_{t \to 0}||(U_\epsilon(t)-\mathbb{1})\mathring{\Psi}||_{L^2(\mathcal{S}_1)}\leq 2||\mathring{\Psi}-\mathring{\Psi}_n||_{L^2(\mathcal{S}_1)}.
\end{equation}
Since this holds for all $n$, we must have $\lim_{t \to 0}||(U_\epsilon(t)-\mathbb{1})\mathring{\Psi}||_{L^2(\mathcal{S}_1)}=0$ as desired. This concludes our proof that $U_\epsilon(t)$ extends to a $C_0$ contraction semigroup on $L^2(\mathcal{S}_1)$.
\end{proof}
To better understand the evolution mappings $U_\epsilon(t)$, let us recall a well-known result of Phillips and Lumer (originally proven in \cite{Phi59} in the Hilbert space setting, then generalized by \cite{LumPhi61} to Banach space setting) that $C_0$ contraction semigroups are generated by densely defined maximally dissipative operators.
\begin{nonumthm}\label{Lumer-Phillips}
     \cite[Theorem 1.1.3]{Phi59} Let $U(t)$ be a $C_0$ contraction semigroup on $L^2(\mathcal{S}_1)$. Then there exists a linear operator $B$ such that
    \begin{enumerate}
        \item $D(B):= \{ \Psi \in \mathcal{S}_1:\lim_{t \to 0^+}\frac{U(t) \Psi - \Psi}{t} \text{ exists in }L^2(\mathcal{S}_1) \}$ is dense in $L^2(\mathcal{S}_1)\}$
        \item $U(t)=e^{tB}$ i.e.  $\frac{d U(t) \Psi}{dt}\big{|}_{t_0}:=\lim_{t \to t_0}\frac{U(t) \Psi - U(t_0)\Psi}{t-t_0}= B U(t_0) \Psi$ for all $\Psi \in D(B)$, and all time $t_0\geq 0$.
        \item $B$ dissipative, i.e \em$\text{Re}\langle B \Psi, \Psi \rangle_{L^2(\mathcal{S}_1
        )}\leq 0$ \em, and maximal in the class of dissipative operators, i.e if $A$ extends $B$ and $A$ is dissipative then $B=A$.
    \end{enumerate}
    The converse is also true, any densely defined maximally dissipative operator $B$ on $L^2(\mathcal{S}_1)$ generates a $C_0$ contraction semigroup $e^{tB}$.
\end{nonumthm}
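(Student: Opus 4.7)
The plan is to prove the two directions of the Lumer--Phillips characterization separately: the generator construction from the semigroup, then the semigroup construction from a maximally dissipative operator via Yosida regularization. Throughout I would use only the contraction bound, strong continuity, and the semigroup identity, plus standard Hilbert-space tools (Bochner integrals, orthogonal complements, power-series exponentials of bounded operators).

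For the forward direction, I first define $B\Psi := \lim_{t\to 0^+}(U(t)\Psi-\Psi)/t$ with $D(B)$ the subspace on which this limit exists in $L^2(\mathcal{S}_1)$. Density of $D(B)$ follows from the averaging trick: for $\Psi \in L^2(\mathcal{S}_1)$ and $r>0$, the Bochner integral $\Psi_r := r^{-1}\int_0^r U(s)\Psi\,ds$ makes sense by strong continuity, and manipulating $U(t)\Psi_r = r^{-1}\int_t^{r+t}U(s)\Psi\,ds$ shows $\Psi_r \in D(B)$ with $B\Psi_r = r^{-1}(U(r)\Psi-\Psi)$; another application of strong continuity gives $\Psi_r \to \Psi$ as $r\to 0^+$. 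Differentiability and the identity $\tfrac{d}{dt}U(t)\Psi = BU(t)\Psi$ on $D(B)$ come from the semigroup identity $U(t+h) = U(t)U(h) = U(h)U(t)$ combined with contractivity. Dissipativity is immediate: $\text{Re}\langle t^{-1}(U(t)\Psi-\Psi),\Psi\rangle \leq t^{-1}(\|U(t)\Psi\|\,\|\Psi\|-\|\Psi\|^2) \leq 0$, and passing to the limit yields $\text{Re}\langle B\Psi,\Psi\rangle\leq 0$.

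The main obstacle on the forward side is maximality, which I would establish by constructing the resolvent explicitly. For $\lambda>0$, define $R(\lambda)\Phi := \int_0^\infty e^{-\lambda t}U(t)\Phi\,dt$, a bounded operator by contractivity. A direct manipulation using the semigroup identity yields $R(\lambda)\Phi \in D(B)$ together with $(\lambda I-B)R(\lambda) = I$ and $R(\lambda)(\lambda I-B) = I|_{D(B)}$; in particular $\text{Range}(I-B) = L^2(\mathcal{S}_1)$. Given any dissipative extension $A\supseteq B$ and any $\Psi\in D(A)$, dissipativity of $A$ makes $I-A$ injective via $\|(I-A)\Psi\|\,\|\Psi\|\geq \text{Re}\langle(I-A)\Psi,\Psi\rangle \geq \|\Psi\|^2$; choose $\Phi\in D(B)$ with $(I-B)\Phi = (I-A)\Psi$ and apply injectivity of $I-A$ to $\Psi-\Phi$ to conclude $\Psi=\Phi\in D(B)$, hence $A=B$.

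For the converse, the hard part is constructing the exponential of an unbounded $B$. I first upgrade maximality to the range condition $\text{Range}(\lambda I - B) = L^2(\mathcal{S}_1)$ for all $\lambda>0$: if this failed at some $\lambda$, a nonzero $\eta$ orthogonal to the range would produce, via a rank-one enlargement of the graph of $B$, a strictly larger dissipative extension, contradicting maximality. I then form the Yosida approximations $B_n := nB(nI-B)^{-1} = n^2(nI-B)^{-1}-nI$, which are bounded, dissipative, and mutually commuting; their exponentials $e^{tB_n}$ are defined by norm-convergent power series and are contractions. The key analytic estimates are $B_n\Psi \to B\Psi$ for $\Psi\in D(B)$ (from $\|n(nI-B)^{-1}\|\leq 1$ plus a resolvent identity) and, via the telescoping identity $e^{tB_n}-e^{tB_m} = \int_0^t \tfrac{d}{ds}(e^{sB_n}e^{(t-s)B_m})\Psi\,ds$, the bound $\|(e^{tB_n}-e^{tB_m})\Psi\|\leq t\|B_n\Psi-B_m\Psi\|$. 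Together these show $\{e^{tB_n}\Psi\}$ is Cauchy on $D(B)$ uniformly on compact $t$-intervals, and density of $D(B)$ lets me define $U(t)\Psi := \lim_n e^{tB_n}\Psi$ on all of $L^2(\mathcal{S}_1)$. Routine verification that $U$ is a $C_0$ contraction semigroup whose generator is a dissipative extension of $B$, combined with the maximality hypothesis, forces the generator to equal $B$, completing the proof.
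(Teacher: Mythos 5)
The paper does not actually prove this statement: it is imported wholesale as a classical result, with a citation to Phillips (1959) (and, in the surrounding discussion, to Lumer--Phillips (1961) for the Banach-space version), and no argument is given in the text or the appendix. Your proposal supplies the standard textbook proof of the Hille--Yosida/Lumer--Phillips theorem in the Hilbert-space setting, and it is essentially correct: the averaging trick for density, the Laplace-transform resolvent for surjectivity of $I-B$ and hence maximality, and the Yosida regularization $B_n = n^2(nI-B)^{-1}-nI$ with the telescoping estimate for the converse are all the right ingredients, and each step you sketch goes through. The one place where you should add a sentence is the passage from maximal dissipativity to the range condition $\mathrm{Range}(\lambda I - B)=L^2(\mathcal{S}_1)$: the rank-one-extension argument you describe only yields \emph{dense} range, and to upgrade this to full range (which you need for $(nI-B)^{-1}$ to be an everywhere-defined bounded map into $D(B)$) you must combine the lower bound $\|(\lambda I-B)\Psi\|\geq \lambda\|\Psi\|$ with closedness of $B$; the latter follows because a densely defined dissipative operator is closable with dissipative closure, so maximality forces $B=\overline{B}$. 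This is a routine fill, not a genuine gap. Since the paper treats the theorem as a black box, your write-up is strictly more than the paper offers; if you keep it, it would belong in the appendix alongside the other recalled solution formulas.
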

\begin{corollary}
    For each $\epsilon>0$, the solution mappings $U_\epsilon(t)$ are generated by some densely defined maximally dissipative operator which we denote $\frac{-i}{\hbar}\hat{H}_\epsilon$, and this operator extends $\frac{-i}{\hbar}\hat{H}$.
\end{corollary}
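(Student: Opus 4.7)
The first assertion is essentially a direct citation of the theorem of Lumer--Phillips recalled just above. Since the preceding proposition establishes that $U_\epsilon(t)$ is a $C_0$ contraction semigroup on $L^2(\mathcal{S}_1)$, Lumer--Phillips produces a densely defined maximally dissipative operator whose exponential equals $U_\epsilon(t)$. We christen this generator $\frac{-i}{\hbar}\hat{H}_\epsilon$; by construction its domain is
\[
D(\hat{H}_\epsilon)=\Bigl\{\,\Psi\in L^2(\mathcal{S}_1)\;:\;\lim_{t\to 0^+}\tfrac{1}{t}\bigl(U_\epsilon(t)\Psi-\Psi\bigr)\text{ exists in }L^2(\mathcal{S}_1)\,\Bigr\},
\]
and on this domain the limit equals $\frac{-i}{\hbar}\hat{H}_\epsilon\Psi$.

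To prove the extension claim, fix an arbitrary $\Psi\in D(\hat{H})=C_c^\infty(\mathcal{S}_1,\mathbb{C}^8)$; my goal is to show $\Psi\in D(\hat{H}_\epsilon)$ with $\hat{H}_\epsilon\Psi=\hat{H}\Psi$. Because $\mathrm{supp}(\Psi)$ is a compact subset of the open set $\mathcal{S}_1$, there is a positive distance $\delta>0$ from $\mathrm{supp}(\Psi)$ to $\mathcal{C}_1\cup\mathcal{C}_2$. By the finite speed of propagation inherent in the explicit construction of $U_\epsilon(t)$ given in the proof of Theorem~\ref{thm:Leaky} (Dirac and transport operators propagate information at unit speed in configuration space), for every $t<\delta/2$ the support of $U_\epsilon(t)\Psi$ remains at positive distance from $\mathcal{C}_1\cup\mathcal{C}_2$. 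Consequently, the leaky boundary conditions are inert on this time interval, and $U_\epsilon(t)\Psi$ coincides with the smooth, compactly supported classical solution of the free equal-time evolution $i\hbar\partial_t\Psi=\hat{H}\Psi$ with initial data $\Psi$.

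For such a classical, smooth, compactly supported solution one obtains a pointwise Taylor expansion
\[
U_\epsilon(t)\Psi(\mathbf{x})=\Psi(\mathbf{x})+t\,\tfrac{-i}{\hbar}\hat{H}\Psi(\mathbf{x})+R(t,\mathbf{x}),
\]
where the remainder $R(t,\mathbf{x})$ is $O(t^2)$ uniformly in $\mathbf{x}$ and supported inside a common compact set for all $t<\delta/2$. Since the common supports are uniformly bounded in Lebesgue measure and the pointwise $O(t^2)$ bound is uniform, the remainder satisfies $\|R(t,\cdot)/t\|_{L^2(\mathcal{S}_1)}=O(t)\to 0$ as $t\to 0^+$. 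Therefore the $L^2$-limit defining $\hat{H}_\epsilon$ exists at $\Psi$ and equals $\frac{-i}{\hbar}\hat{H}\Psi$, which is exactly the statement that $\Psi\in D(\hat{H}_\epsilon)$ and $\hat{H}_\epsilon\Psi=\hat{H}\Psi$.

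The main (and only substantive) obstacle is the justification of the classical-to-$L^2$ passage in the last paragraph, i.e.\ that the pointwise Taylor remainder can be controlled in $L^2$ uniformly in $t$. This is purely a matter of invoking finite propagation speed to keep the supports inside a fixed compact set and using smoothness of the free solution on a neighborhood of $t=0$; no estimate at the boundary is needed because the support never reaches it. The remaining statements (density of the domain, maximal dissipativity) are absorbed into the application of Lumer--Phillips and require no additional work.
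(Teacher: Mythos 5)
Your proposal is correct and follows essentially the same route as the paper: existence of the generator $\frac{-i}{\hbar}\hat{H}_\epsilon$ comes straight from Lumer--Phillips applied to the contraction semigroup, and the extension property is obtained by differentiating $U_\epsilon(t)\mathring{\Psi}$ at $t=0$ for $\mathring{\Psi}\in C_c^\infty(\mathcal{S}_1)$. Your finite-speed-of-propagation reduction to the free flow, with the uniform Taylor-remainder estimate, merely supplies in detail the strong $L^2$-differentiability at $t=0$ that the paper asserts directly from the fact that $U_\epsilon(t)\mathring{\Psi}$ solves the leaky IBVP.
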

\begin{proof}
    Let $\mathring{\Psi}\in C_c^\infty(\mathcal{S}_1)=D(\hat{H})$. Since $U_\epsilon(t)\mathring{\Psi}$ is defined to solve (\ref{Leaky IVP}) we have $i\hbar \frac{d}{dt}(U_\epsilon(t) \mathring{\Psi})\big{|}_{t=0}=\hat{H}\Psi$, hence $D(\frac{-i}{\hbar}\hat{H})\subset D(\frac{-i}{\hbar}\hat{H}_\epsilon)$ and $\frac{-i}{\hbar}\hat{H}_\epsilon\mathring{\Psi}=\frac{-i}{\hbar}\hat{H} \mathring{\Psi}$.
\end{proof} Later we will discuss how these maximally dissipative extensions $\frac{-i}{\hbar}\hat{H}_\epsilon$ of $\frac{-i}{\hbar}\hat{H}$ relate to the placement of the leaky boundary conditions on $\mathcal{C}_1 \cup \mathcal{C}_2$. Before this, we will first state the approximation theorem of Kurtz \cite{Kur69}, which gives us a necessary and sufficient condition on the sequence of generators $H_\epsilon$ to guarantee that $U_\epsilon(t)=e^{-\frac{it}{\hbar}\hat{H}_\epsilon}$ converges to another $C_0$ contraction semigroup as $\epsilon \to 0$.

\paragraph{Step 2 in the proof of Thm.~\ref{thm:main}:} We begin by introducing the extended limit operator of the family of generators $\frac{-i}{\hbar}\hat{H}_\epsilon$.
\begin{definition}
    For $\epsilon>0$, let $\frac{-i}{\hbar}\hat{H}_\epsilon$ be a family of densely defined maximally dissipative operators on $L^2(\mathcal{S}_1)$. We denote the \textbf{extended limit} of $\frac{-i}{\hbar}\hat{H}_\epsilon$ as  $\frac{-i}{\hbar}\hat{H}_0$, and define its domain $D(  \frac{-i}{\hbar}\hat{H}_0)$ to be the set of all $\Psi\in L^2(\mathcal{S}_1)$ for which there is a sequence $\Psi_\epsilon \in D(\frac{-i}{\hbar}\hat{H}_\epsilon)$ and an $\eta \in L^2(\mathcal{S}_1)$ such that 
    \begin{equation}
        \lim_{\epsilon \to 0}\Psi_\epsilon=\Psi, \quad \text{and}\quad 
        \lim_{\epsilon \to 0} \frac{-i}{\hbar}\hat{H}_\epsilon \Psi_\epsilon=\eta.
    \end{equation}
    The extended limit operator $\frac{-i}{\hbar}\hat{H}_0$ then acts on elements in its domain according to $\frac{-i}{\hbar}\hat{H}_0\Psi:=\eta$.
\end{definition}
In general the extended limit operator may not be single-valued, although the lemma below gives a sufficient condition for this to be true.
\begin{lemma}\label{Single Valued}\cite[Lemma 1.1]{Kur69}
    If $\frac{-i}{\hbar}\hat{H}_\epsilon$ is a family of maximally dissipative extensions of some densely defined operator $\frac{-i}{\hbar}\hat{H}$, then the extended limit operator $\frac{-i}{\hbar}\hat{H}_0$ is single-valued and also a dissipative extension of $\frac{-i}{\hbar}\hat{H}$.
\end{lemma}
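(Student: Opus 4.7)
The plan is to follow the standard limit-operator argument, exploiting the dissipativity inequality $\text{Re}\langle \frac{-i}{\hbar}\hat{H}_\epsilon \Psi, \Psi\rangle \leq 0$ enjoyed by every member of the family together with the fact that each $\frac{-i}{\hbar}\hat{H}_\epsilon$ extends the common densely defined operator $\frac{-i}{\hbar}\hat{H}$. The proof splits naturally into three steps: single-valuedness of $\frac{-i}{\hbar}\hat{H}_0$, the extension property, and dissipativity.

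For single-valuedness, I would assume a $\Psi \in D(\frac{-i}{\hbar}\hat{H}_0)$ admits two candidate images $\eta_1, \eta_2$ via approximating sequences $\Psi_\epsilon^{(1)}, \Psi_\epsilon^{(2)} \in D(\frac{-i}{\hbar}\hat{H}_\epsilon)$, and set $\Phi_\epsilon := \Psi_\epsilon^{(1)} - \Psi_\epsilon^{(2)}$ and $\zeta := \eta_1 - \eta_2$, so that $\Phi_\epsilon \to 0$ and $\frac{-i}{\hbar}\hat{H}_\epsilon \Phi_\epsilon \to \zeta$ in $L^2(\mathcal{S}_1)$. The goal is to force $\zeta = 0$. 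For any $\Phi \in D(\frac{-i}{\hbar}\hat{H})$ and any $t \in \mathbb{R}$, the vector $\Phi + t\Phi_\epsilon$ lies in $D(\frac{-i}{\hbar}\hat{H}_\epsilon)$ (since $\frac{-i}{\hbar}\hat{H}_\epsilon$ extends $\frac{-i}{\hbar}\hat{H}$), and dissipativity of $\frac{-i}{\hbar}\hat{H}_\epsilon$ yields
\[
0 \geq \text{Re}\left\langle \tfrac{-i}{\hbar}\hat{H}\Phi + t\,\tfrac{-i}{\hbar}\hat{H}_\epsilon \Phi_\epsilon,\; \Phi + t\Phi_\epsilon\right\rangle.
\]
Letting $\epsilon \to 0$ and using the strong convergence of both $\Phi_\epsilon$ and $\frac{-i}{\hbar}\hat{H}_\epsilon \Phi_\epsilon$, the two $\Phi_\epsilon$-dominated terms drop out, leaving $\text{Re}\langle \frac{-i}{\hbar}\hat{H}\Phi, \Phi\rangle + t\,\text{Re}\langle \zeta, \Phi\rangle \leq 0$ for every real $t$. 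A real affine function of $t$ is bounded above on $\mathbb{R}$ only if its slope vanishes, so $\text{Re}\langle \zeta, \Phi\rangle = 0$. Repeating the argument with $i\Phi \in D(\frac{-i}{\hbar}\hat{H})$ in place of $\Phi$ eliminates the imaginary part, and density of $D(\frac{-i}{\hbar}\hat{H})$ in $L^2(\mathcal{S}_1)$ then forces $\zeta = 0$.

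The extension property is immediate from the definition: for $\Psi \in D(\frac{-i}{\hbar}\hat{H})$ the constant sequence $\Psi_\epsilon \equiv \Psi$ qualifies and produces $\frac{-i}{\hbar}\hat{H}_0 \Psi = \frac{-i}{\hbar}\hat{H}\Psi$. Dissipativity of $\frac{-i}{\hbar}\hat{H}_0$ likewise follows by passing to the limit in $\text{Re}\langle \frac{-i}{\hbar}\hat{H}_\epsilon \Psi_\epsilon, \Psi_\epsilon\rangle \leq 0$ using the strong convergences built into the definition of $D(\frac{-i}{\hbar}\hat{H}_0)$.

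The main obstacle is the single-valuedness step: the polynomial-in-$t$ dissipativity trick is the heart of the argument, and it crucially uses both the shared core $D(\frac{-i}{\hbar}\hat{H})$ on which every $\frac{-i}{\hbar}\hat{H}_\epsilon$ agrees and the density of that core in $L^2(\mathcal{S}_1)$. Once this step is in hand, the remaining two claims are routine continuity-in-the-limit statements.
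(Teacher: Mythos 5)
Your argument is correct. The paper itself gives no proof of this lemma — it is quoted as Lemma 1.1 of Kurtz \cite{Kur69} — so what you have supplied is a self-contained Hilbert-space substitute for that citation, and it holds up: for two approximating sequences with the same limit $\Psi$, the difference $\Phi_\epsilon$ tends to $0$ while $\frac{-i}{\hbar}\hat{H}_\epsilon\Phi_\epsilon\to\zeta$; since each $\frac{-i}{\hbar}\hat{H}_\epsilon$ agrees with $\frac{-i}{\hbar}\hat{H}$ on the common core and is dissipative on its whole domain, testing dissipativity on $\Phi+t\Phi_\epsilon$, passing to the limit, and using that an affine function of $t$ bounded above on $\mathbb{R}$ has zero slope gives $\text{Re}\langle\zeta,\Phi\rangle=0$; the substitution $\Phi\mapsto i\Phi$ and density of $D(\hat{H})=C_c^\infty(\mathcal{S}_1,\mathbb{C}^8)$ then force $\zeta=0$, and the extension and dissipativity claims follow from constant sequences and strong-strong convergence of the inner products, exactly as you say. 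The route differs from Kurtz's: his lemma is set in general Banach spaces (indeed with varying spaces and projections), where dissipativity is expressed through the resolvent-type inequality $\|(\lambda-B)x\|\geq\lambda\|x\|$, which passes to the extended limit and yields single-valuedness by choosing core elements approximating the putative image and letting $\lambda\to\infty$; your version trades that generality for an elementary inner-product computation that is all the $L^2(\mathcal{S}_1)$ setting of the paper requires. Two small observations: your proof uses only dissipativity, not maximal dissipativity, of the extensions (so it is in fact slightly more general than the statement), and for completeness one should note that single-valuedness also makes $\frac{-i}{\hbar}\hat{H}_0$ linear on its domain, since sums and scalar multiples of admissible approximating sequences are again admissible — a one-line remark worth adding.
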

\begin{theorem}\label{Kurtz}\cite[Theorem 2.1]{Kur69} For each $\epsilon>0$, let $U_\epsilon(t)$ be a $C_0$ contraction semigroup on $L^2(\mathcal{S}_1)$ with infinitesimal generators $\frac{-i}{\hbar}\hat{H}_\epsilon$.
    Then there exists a $C_0$ contraction semigroup $U_0(t)$ such that $\lim_{\epsilon \to 0}U_\epsilon(t) \mathring{\Psi}=U_0(t) \mathring{\Psi}$ if and only if $\frac{-i}{\hbar}\hat{H}_0$ is a densely defined maximally dissipative operator. If so then $U_0(t)=e^{-\frac{it}{\hbar}\hat{H}_0}$.
\end{theorem}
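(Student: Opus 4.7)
The proof decomposes into the two implications, with Trotter--Kato style resolvent convergence as the central mechanism.

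For the sufficient direction ($\Leftarrow$), I would assume $B_0:=\frac{-i}{\hbar}\hat{H}_0$ is densely defined and maximally dissipative, so that by the Lumer--Phillips theorem it generates a $C_0$ contraction semigroup $U_0(t):=e^{tB_0}$. The plan is to first prove strong resolvent convergence $R_\epsilon(\lambda):=(\lambda - B_\epsilon)^{-1}\to R_0(\lambda):=(\lambda - B_0)^{-1}$ for a fixed $\lambda>0$, and then invoke the Trotter--Kato approximation theorem to deduce $U_\epsilon(t)\to U_0(t)$ strongly, uniformly on compact time intervals. For the resolvent step, given $f\in L^2(\mathcal{S}_1)$, set $\Phi:=R_0(\lambda)f\in D(B_0)$; by the very definition of the extended limit, there exist $\Phi_\epsilon\in D(B_\epsilon)$ with $\Phi_\epsilon\to\Phi$ and $B_\epsilon\Phi_\epsilon\to B_0\Phi$. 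Letting $g_\epsilon:=(\lambda-B_\epsilon)\Phi_\epsilon\to f$, the identity $R_\epsilon(\lambda)f - R_0(\lambda)f = R_\epsilon(\lambda)(f-g_\epsilon) + (\Phi_\epsilon-\Phi)$ together with the uniform contraction bound $\|R_\epsilon(\lambda)\|\leq 1/\lambda$ forces both terms to zero. Since the range of $(\lambda-B_0)$ is all of $L^2(\mathcal{S}_1)$ by maximal dissipativity, this suffices to give strong convergence of the resolvents on the whole space.

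For the necessary direction ($\Rightarrow$), I would observe that the strong limit $U_0$ of $C_0$ contraction semigroups is itself a $C_0$ contraction semigroup, with generator $B$ that is densely defined and maximally dissipative by Lumer--Phillips. From the Laplace representation $R_\epsilon(\lambda)g=\int_0^\infty e^{-\lambda t}U_\epsilon(t)g\,dt$ together with $\|U_\epsilon(t)g\|\leq\|g\|$ and dominated convergence, we obtain strong convergence $R_\epsilon(\lambda)\to(\lambda-B)^{-1}$. For any $\Psi\in D(B)$, the sequence $\Psi_\epsilon:=R_\epsilon(\lambda)(\lambda-B)\Psi$ then lies in $D(B_\epsilon)$, converges to $\Psi$, and satisfies $B_\epsilon\Psi_\epsilon=\lambda\Psi_\epsilon-(\lambda-B)\Psi\to B\Psi$; hence $\Psi\in D(B_0)$ with $B_0\Psi=B\Psi$, i.e., $B\subseteq B_0$. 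Since $B_0$ is dissipative by Lemma~\ref{Single Valued} and $B$ is already maximally dissipative, the inclusion $B\subseteq B_0$ forces $B_0=B$, which is then densely defined and maximally dissipative with $U_0(t)=e^{tB_0}$ as required.

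The main obstacle I expect is the Trotter--Kato step converting resolvent convergence into strong semigroup convergence. The cleanest route uses the Yosida approximants $B_\epsilon^\lambda:=\lambda B_\epsilon R_\epsilon(\lambda)=\lambda^2 R_\epsilon(\lambda)-\lambda\,\mathbbm{1}$, which are bounded dissipative operators generating explicit semigroups $e^{tB_\epsilon^\lambda}$. One estimates $\|e^{tB_\epsilon}\Psi-e^{tB_\epsilon^\lambda}\Psi\|$ uniformly in $\epsilon$ for $\Psi\in D(B_\epsilon)$ using the contractive bound, controls $\|e^{tB_\epsilon^\lambda}\Psi-e^{tB_0^\lambda}\Psi\|$ through a Duhamel-type expansion fed by the resolvent convergence just established, and closes with a triangle inequality together with the $\lambda\to\infty$ limit of the Yosida approximation. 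The uniform contraction estimate is indispensable throughout, since it prevents error terms from accumulating as $\lambda$ or $t$ grows.
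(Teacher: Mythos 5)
Your proposal is correct, but it cannot be compared against an internal argument: the paper does not prove this statement at all — it is quoted verbatim from Kurtz \cite[Theorem 2.1]{Kur69}, and the surrounding text only uses it as a black box (together with Lemma~\ref{Single Valued}) in Step 2 of the proof of Theorem~\ref{thm:main}. What you have written is a self-contained proof of the cited result by the classical route: in the sufficient direction you pass from the definition of the extended limit to strong resolvent convergence $(\lambda-B_\epsilon)^{-1}\to(\lambda-B_0)^{-1}$ (using surjectivity of $\lambda-B_0$ from maximal dissipativity and the uniform bound $\|(\lambda-B_\epsilon)^{-1}\|\le 1/\lambda$), and then upgrade to strong semigroup convergence by Trotter--Kato via Yosida approximants; in the necessary direction you Laplace-transform the assumed convergence to get resolvent convergence, produce the approximating sequence $\Psi_\epsilon=(\lambda-B_\epsilon)^{-1}(\lambda-B)\Psi$ showing $B\subseteq B_0$, and close by maximality. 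This is essentially the same circle of ideas as in Kurtz's original paper (extended limits are designed precisely to encode resolvent convergence), so you are reproving the external ingredient rather than diverging from the paper's logic; the benefit is that it makes Step 2 of the paper self-contained, at the cost of importing the Trotter--Kato machinery.

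Two small points to tighten. First, your appeal to Lemma~\ref{Single Valued} for dissipativity (and single-valuedness) of $B_0$ uses the hypothesis that the $\frac{-i}{\hbar}\hat H_\epsilon$ are all maximally dissipative extensions of one densely defined operator $\frac{-i}{\hbar}\hat H$; this is true in the paper's application but is not part of the theorem as quoted. It can be avoided: dissipativity of the (a priori multivalued) extended limit follows directly by passing to the limit in $\operatorname{Re}\langle B_\epsilon\Psi_\epsilon,\Psi_\epsilon\rangle\le 0$, and an m-dissipative operator is maximal even among dissipative linear relations, so $B\subseteq B_0$ still forces $B_0=B$. Second, Kurtz's statement (and the Trotter--Kato conclusion you use) gives convergence uniformly on bounded $t$-intervals; since the paper's version only asserts pointwise-in-$t$ strong convergence, your argument covers it, but it is worth stating which mode of convergence you prove in each direction so the two implications match.
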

Our goal now is to prove that the family of maximally dissipative operators $\frac{-i}{\hbar}\hat{H}_\epsilon$ admit a maximally dissipative extended limit operator. We will do this by writing down an explicit description for all maximally dissipative extensions of $\frac{-i}{\hbar}\hat{H}$. We first note that all dissipative extensions of a densely defined skew-symmetric operator are restrictions of a particular maximal operator.
\begin{definition}
    For a densely defined operator $B$ on $L^2(\mathcal{S}_1)$, we define its adjoint $B^*$
    \begin{equation}
        D(B^*):= \{\Psi \in L^2(\mathcal{S}_1): \text{there exists some } \eta\in L^2(\mathcal{S}_1) \text{ s.t } \langle \Psi,B \Phi \rangle_{L^2(\mathcal{S}_1)}= \langle \eta,\Phi \rangle_{L^2(\mathcal{S}_1)}, \forall \Phi \in D(B) \}, \quad B^*\Psi:=\eta.
    \end{equation}
\end{definition}
\begin{lemma}\cite[Theorem 2.5]{Arendt2023}
    For $\frac{-i}{\hbar}\hat{H}$ a densely defined skew-symmetric operator and $B$ a dissipative operator which extends $\frac{-i}{\hbar}\hat{H}$, we have that $D(B) \subset D(\frac{-i}{\hbar}\hat{H}^*)$ with $B= \frac{-i}{\hbar}\hat{H}^* \big{|}_{D(B)}$.
\end{lemma}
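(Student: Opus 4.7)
The plan is to use the dissipativity of $B$ together with the skew-symmetry of $\hat{A} := \frac{-i}{\hbar}\hat{H}$ to force an integration-by-parts identity that puts every $\psi \in D(B)$ into $D(\hat{A}^*)$. I first note the notational point that the paper's $\frac{-i}{\hbar}\hat{H}^*$ means $\frac{-i}{\hbar}$ times the Hilbert-space adjoint of $\hat{H}$, which as an operator equals $-\hat{A}^*$ (since $\hat{A}^* = (\frac{-i}{\hbar}\hat{H})^* = \frac{i}{\hbar}\hat{H}^*$). With this identification, the target claim is: for every $\psi \in D(B)$, one has $\psi \in D(\hat{A}^*)$ and $\hat{A}^*\psi = -B\psi$.

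I would fix $\psi \in D(B)$ and $\phi \in D(\hat{A}) \subseteq D(B)$, and apply dissipativity of $B$ to the one-parameter family $\psi + \alpha\phi$ for $\alpha \in \mathbb{C}$:
\[
0 \;\geq\; \mathrm{Re}\,\langle B(\psi+\alpha\phi),\,\psi+\alpha\phi\rangle_{L^2(\mathcal{S}_1)}.
\]
Expanding and using $B\phi = \hat{A}\phi$ (since $B$ extends $\hat{A}$), this becomes
\[
\mathrm{Re}\,\langle B\psi,\psi\rangle \;+\; \mathrm{Re}\!\big(\alpha\,[\langle \hat{A}\phi,\psi\rangle + \langle \phi, B\psi\rangle]\big) \;+\; |\alpha|^2\,\mathrm{Re}\,\langle \hat{A}\phi,\phi\rangle \;\leq\; 0.
\]
The crucial cancellation is that the $|\alpha|^2$ coefficient vanishes: skew-symmetry gives $\langle \hat{A}\phi,\phi\rangle = -\overline{\langle \hat{A}\phi,\phi\rangle}$, so this scalar is purely imaginary. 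What remains is an inequality that is affine in $\alpha \in \mathbb{C}$. This can hold for every $\alpha$ only if the bracket vanishes --- otherwise a choice $\alpha = R e^{i\theta}$ with $\theta$ phase-aligning the bracket and $R\to\infty$ would violate the bound.

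Hence $\langle \hat{A}\phi,\psi\rangle = -\langle \phi, B\psi\rangle$ for every $\phi \in D(\hat{A})$, which is exactly the defining identity for $\psi \in D(\hat{A}^*)$ with $\hat{A}^*\psi = -B\psi$. Translating back, $B\psi = -\hat{A}^*\psi = \frac{-i}{\hbar}\hat{H}^*\psi$, giving both $D(B) \subset D(\frac{-i}{\hbar}\hat{H}^*)$ and $B = \frac{-i}{\hbar}\hat{H}^*\big|_{D(B)}$. The only delicate point is recognizing that the $|\alpha|^2$ term drops out; it is this cancellation that promotes a one-parameter family of inequalities into the rigid equality needed for the adjoint identification. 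Without it the dissipativity bound would produce only a quadratic estimate, and one could extract no exact identity from it.
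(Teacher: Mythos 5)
Your proof is correct. Note that the paper gives no argument of its own for this lemma --- it simply cites \cite[Theorem 2.5]{Arendt2023} --- so there is no internal proof to compare against; the quadratic-expansion argument you give is precisely the standard proof of that cited fact (going back to Phillips' work on dissipative operators). All the essential points are in place: $\psi+\alpha\phi\in D(B)$ because $D(\hat{A})\subset D(B)$ and $D(B)$ is a linear subspace; skew-symmetry makes $\langle \hat{A}\phi,\phi\rangle$ purely imaginary, so the $|\alpha|^{2}$ coefficient vanishes; and an inequality affine in $\alpha\in\mathbb{C}$ can hold for all $\alpha$ only if the linear coefficient vanishes, which is exactly the defining relation $\langle \hat{A}\phi,\psi\rangle=-\langle\phi,B\psi\rangle$ for all $\phi\in D(\hat{A})$, i.e.\ $\psi\in D(\hat{A}^{*})$ with $\hat{A}^{*}\psi=-B\psi$ (well-defined since $\hat{A}$ is densely defined by hypothesis). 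Your preliminary notational remark is also the correct reading of the statement: since $\bigl(\tfrac{-i}{\hbar}\hat{H}\bigr)^{*}=\tfrac{i}{\hbar}\hat{H}^{*}$, the conclusion $B\psi=-\hat{A}^{*}\psi$ is the same as $B\psi=\tfrac{-i}{\hbar}\hat{H}^{*}\psi$, consistent with how $\hat{H}^{*}$ is used elsewhere in Section 5 of the paper (e.g.\ in the definition of the extensions $\hat{H}_{T}$).
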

This maximal operator is best understood using a well known decomposition lemma.
\begin{lemma}\cite[Lemma 2.5]{Wegner2017}
    For $\hat{H}$ be a densely defined symmetric operator on $L^2(\mathcal{S}_1)$, the domain of the adjoint can be decomposed as \em
    \begin{equation}
        D(\hat{H}^*)=D(\overline{{H}})\oplus \text{Ker}(i - \hat{H}^*)\oplus \text{Ker}(i+\hat{H}^*).
    \end{equation}
    \em where here $\overline{H}$ denotes the operator closure of $\hat{H}$.
\end{lemma}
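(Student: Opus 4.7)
The plan is to follow the classical von Neumann strategy: equip $D(\hat{H}^*)$ with its graph inner product, verify that the three subspaces on the right-hand side are pairwise orthogonal and closed in that inner product, and then argue that their orthogonal direct sum exhausts $D(\hat{H}^*)$.

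First I would endow $D(\hat{H}^*)$ with the graph inner product
\[
\langle u, v\rangle_* := \langle u, v\rangle_{L^2} + \langle \hat{H}^* u, \hat{H}^* v\rangle_{L^2},
\]
under which it becomes a Hilbert space, since $\hat{H}^*$ is always closed as the adjoint of a densely defined operator. In this topology, $D(\overline{\hat{H}})$ is closed by construction, while $\text{Ker}(\hat{H}^* \mp i)$ are closed as preimages of $\{0\}$ under the map $\hat{H}^* \mp i : (D(\hat{H}^*), \|\cdot\|_*) \to L^2(\mathcal{S}_1)$, which is bounded in the graph norm.

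Next I would verify pairwise orthogonality in $\langle \cdot, \cdot \rangle_*$. For $u \in D(\overline{\hat{H}})$ and $v \in \text{Ker}(\hat{H}^* \mp i)$, the standard fact $(\overline{\hat{H}})^* = \hat{H}^*$ (valid for any densely defined closable symmetric operator) yields
\[
\langle u, v\rangle_* = \langle u, v\rangle + \langle \overline{\hat{H}} u, \pm i v\rangle = \langle u, v\rangle \pm i \langle u, \hat{H}^* v\rangle = \langle u, v\rangle + (\pm i)(\pm i)\langle u, v\rangle = 0,
\]
and the two kernels are mutually orthogonal by the analogous identity $\langle u, v\rangle_* = \langle u, v\rangle + \langle iu, -iv\rangle = 0$.

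The main step, which I expect to be the main obstacle, is to show that any $w \in D(\hat{H}^*)$ orthogonal in $\langle \cdot, \cdot \rangle_*$ to all three subspaces must vanish. The condition $w \perp_* D(\overline{\hat{H}})$ reads $\langle \hat{H} u, \hat{H}^* w\rangle = -\langle u, w\rangle$ for every $u \in D(\hat{H})$, and by the very definition of the adjoint this is precisely the assertion that $\hat{H}^* w \in D(\hat{H}^*)$ with $(\hat{H}^*)^2 w = -w$. Setting $z_{\pm} := \tfrac{1}{2}\bigl(w \mp i \hat{H}^* w\bigr)$, a direct computation using the relation $(\hat{H}^*)^2 w = -w$ gives $\hat{H}^* z_{\pm} = \pm i z_{\pm}$, so $w = z_+ + z_-$ with $z_{\pm} \in \text{Ker}(\hat{H}^* \mp i)$. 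Combining the hypothesis $w \perp_* \text{Ker}(\hat{H}^* \mp i)$ with the previously established orthogonality $z_+ \perp_* z_-$ forces $\|z_{\pm}\|_*^2 = \langle z_\pm, w\rangle_* - \langle z_\pm, z_\mp\rangle_* = 0$, so $w = 0$. The delicate point is ensuring throughout that $(\hat{H}^*)^2 w = -w$ holds in the operator sense, with $\hat{H}^* w$ genuinely back in $D(\hat{H}^*)$ rather than as a formal identity; this is exactly what the orthogonality hypothesis encodes, and is where the identification $(\overline{\hat{H}})^* = \hat{H}^*$ cannot be bypassed.
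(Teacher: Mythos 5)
Your proof is correct: it is the classical von Neumann graph-inner-product argument (orthogonality of $D(\overline{H})$ and the two deficiency spaces in $\langle\cdot,\cdot\rangle_*$, plus the computation showing an element $*$-orthogonal to all three satisfies $(\hat{H}^*)^2w=-w$ and hence splits as $z_++z_-=0$), and the delicate points — closedness of the summands in the graph norm, the use of $(\overline{H})^*=\hat{H}^*$, and reading the orthogonality condition as membership of $\hat{H}^*w$ in $D(\hat{H}^*)$ — are all handled properly. The paper does not prove this statement but simply cites it (Wegner 2017, Lemma 2.5), and your argument is essentially the standard proof underlying that reference, so there is nothing further to reconcile.
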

We will frequently apply the decomposition lemma to write any element of $D(\hat{H}^*)$ as $\Psi=\Psi_0+\Psi_+ +\Psi_-$ where $\Psi_0 \in D(\overline{H})$, $\hat{H}^*\Psi_\pm=\pm i\Psi_\pm$. Equipping the subspaces $\text{Ker}(i\pm\hat{H}^*)$ with the $L^2(\mathcal{S}_1)$ norm, let $T:\text{Ker}(i+\hat{H}^*)\to \text{Ker}(i-\hat{H}^*)$ be a linear contraction and consider the extension of $\frac{-i}{\hbar}\hat{H}$ defined by
\begin{equation}
    D(\frac{-i}{\hbar}\hat{H}_T):=\left\{\Psi=\Psi_0 + \Psi_+ + \Psi_- \in D(\hat{H}^*): \Psi_+=T\Psi_- \right\}, \quad \frac{-i}{\hbar}\hat{H}_T \Psi:=\frac{-i}{\hbar}\hat{H}^* \Psi.
\end{equation}
All such extensions are dissipative since for any $\Psi \in D(\frac{-i}{\hbar}\hat{H}_T)$
\begin{equation}
    \text{Re}\langle \frac{-i}{\hbar}\hat{H}_T\Psi, \Psi \rangle_{L^2(\mathcal{S}_1)}=\text{Re}\langle \Psi_+ - \Psi_-, \Psi_+ + \Psi_-\rangle_{L^2(\mathcal{S}_1)}=||\Psi_+||_{L^2(\mathcal{S}_1)}- ||\Psi_-||_{L^2(\mathcal{S}_1)}=||T\Psi_-||_{L^2(\mathcal{S}_1)}- ||\Psi_-||_{L^2(\mathcal{S}_1)}\leq 0.
\end{equation}
It is a recent result (originally proven for most cases in \cite{Wegner2017} then generalized to all cases by \cite{Arendt2023}) that all maximally dissipative extensions of the densely defined skew-symmetric operator $\frac{-i}{\hbar}\hat{H}$ are of the form $\frac{-i}{\hbar}\hat{H}_T$ for some linear contraction $T:\text{Ker}(i+\hat{H}^*)\to \text{Ker}(i-\hat{H}^*)$.
\begin{theorem}\label{Arendt}
\cite[Theorem 3.10, Theorem 4.2]{Arendt2023}
    Let $\frac{-i}{\hbar}\hat{H}$ be a densely defined skew-symmetric operator on $L^2(\mathcal{S}_1)$. Then $B$  is a maximally dissipative extension of $\frac{-i}{\hbar}\hat{H}$ if and only if
     there exists a linear contraction \em $T:\text{Ker}(i+\hat{H}^*) \to \text{Ker}(i-\hat{H}^*)$ \em such that $B=\frac{-i}{\hbar}\hat{H}_T$. In addition, $\hat{H}_T$ is self-adjoint if and only if $T$ is unitary. 
\end{theorem}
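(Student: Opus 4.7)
The theorem is a dissipative analogue of the classical von Neumann extension calculus for symmetric operators, in which \emph{partial isometry} is relaxed to \emph{contraction} and \emph{self-adjoint} is relaxed to \emph{maximally dissipative}. Both directions of the equivalence I would prove by exploiting the Green-type identity
\[
\text{Re}\langle \tfrac{-i}{\hbar}\hat{H}^{*} \Psi, \Psi\rangle_{L^{2}(\mathcal{S}_{1})} = \|\Psi_{+}\|_{L^{2}}^{2} - \|\Psi_{-}\|_{L^{2}}^{2},
\]
which holds for every $\Psi = \Psi_{0} + \Psi_{+} + \Psi_{-} \in D(\hat{H}^{*})$ in the decomposition of the previous lemma. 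This is precisely the computation already displayed in the excerpt, stripped of the constraint $\Psi_{+} = T\Psi_{-}$, and it ties the dissipative behaviour of any extension of $-\tfrac{i}{\hbar}\hat{H}$ entirely to the relation that its domain imposes between the two deficiency subspaces.

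\textbf{Sufficiency.} Given a contraction $T:\text{Ker}(i+\hat{H}^{*}) \to \text{Ker}(i-\hat{H}^{*})$, dissipativity of $\tfrac{-i}{\hbar}\hat{H}_{T}$ is an immediate consequence of the identity combined with $\Psi_{+}=T\Psi_{-}$, exactly as verified in the excerpt. For maximality I would verify that $\text{Ran}(I - \tfrac{-i}{\hbar}\hat{H}_{T})=L^{2}(\mathcal{S}_{1})$: given $\eta\in L^{2}$, decompose $\eta = \eta_{0}+\eta_{+}+\eta_{-}$ and search for $\Psi = \Psi_{0}+\Psi_{+}+\Psi_{-}$ with $\Psi_{+}=T\Psi_{-}$ solving $\Psi - \tfrac{-i}{\hbar}\hat{H}_{T}\Psi = \eta$. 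The equation decouples into a problem on $D(\overline{H})$, solved by the resolvent of the closure $\overline{H}$, and a problem on the deficiency subspaces that reduces to inverting a bounded operator of the form $I \pm T$ whose invertibility is assured by $\|T\|\leq 1$.

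\textbf{Necessity.} Given a maximally dissipative extension $B$ of $-\tfrac{i}{\hbar}\hat{H}$, the cited lemma yields $D(B)\subset D(\hat{H}^{*})$ and $B = -\tfrac{i}{\hbar}\hat{H}^{*}\big|_{D(B)}$. I would define $T$ on the subspace $\{\Psi_{-}:\exists\,\Psi_{0},\Psi_{+}\text{ with }\Psi_{0}+\Psi_{+}+\Psi_{-}\in D(B)\}$ of $\text{Ker}(i+\hat{H}^{*})$ by $T\Psi_{-}:=\Psi_{+}$. Single-valuedness follows by applying the Green identity to the difference of two decompositions with the same $\Psi_{-}$: that difference belongs to $D(B)\cap \text{Ker}(i-\hat{H}^{*})$, and dissipativity forces its $\Psi_{+}$-component to vanish. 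The same identity gives the contraction bound $\|T\Psi_{-}\|\leq \|\Psi_{-}\|$. The crucial step is showing that $T$ is defined on \emph{all} of $\text{Ker}(i+\hat{H}^{*})$: if the domain of $T$ were a proper subspace, one could prescribe $T$ arbitrarily on a complementary direction using a Hahn--Banach-type selection, thereby extending $B$ to a strictly larger dissipative operator and contradicting maximality.

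\textbf{Self-adjointness criterion and principal obstacle.} The self-adjointness clause follows from the same identity: $\hat{H}_{T}$ is symmetric iff the real part above vanishes on $D(\hat{H}_{T})$, i.e. iff $\|T\Psi_{-}\|=\|\Psi_{-}\|$ for all $\Psi_{-}$, i.e. iff $T$ is isometric; surjectivity of $T$ onto $\text{Ker}(i-\hat{H}^{*})$ (making it unitary) then follows either by running the sufficiency argument with $T^{-1}$ in place of $T$ or by invoking the symmetry between the two deficiency indices. The principal obstacle in the entire argument is the maximality step of the necessity direction: rigorously showing that the partially defined map $\Psi_{-}\mapsto\Psi_{+}$ extracted from $D(B)$ is \emph{globally defined} on the full deficiency subspace requires a careful invocation of maximality of $B$ beyond purely algebraic manipulations, and this is the main technical content of \cite{Arendt2023}.
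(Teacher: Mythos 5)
The paper does not prove this statement at all: it is imported verbatim from Arendt--Chalendar--Eymard (their Theorems 3.10 and 4.2), so there is no in-paper argument to measure yours against. Your overall architecture --- the Green-type identity $\text{Re}\langle \tfrac{-i}{\hbar}\hat{H}^*\Psi,\Psi\rangle=\tfrac{1}{\hbar}\left(\|\Psi_+\|^2-\|\Psi_-\|^2\right)$, dissipativity of $\hat{H}_T$ from $\Psi_+=T\Psi_-$, a range condition for maximality, extraction of $T$ from $D(B)$ in the converse, and maximality to force $T$ to be globally defined --- is indeed the standard route and matches the cited proof in spirit.

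However, the maximality step of your sufficiency direction has a genuine gap as written. First, a general $\eta\in L^2(\mathcal{S}_1)$ does not decompose as $\eta_0+\eta_++\eta_-$: that splitting is of $D(\hat{H}^*)$, not of $L^2$. The decomposition actually needed is the orthogonal one, $L^2=\text{Ran}(\overline{H}-i)\oplus\text{Ker}(i+\hat{H}^*)$, since the deficiency space is exactly the orthogonal complement of the closed range of $\overline{H}-i$. Second, and more seriously, the claim that invertibility of $I\pm T$ is ``assured by $\|T\|\leq 1$'' is false: a contraction can have $\pm 1$ in its spectrum, e.g.\ $T$ unitary --- which is precisely the case $T_0$ that the paper applies the theorem to. The standard repair avoids both issues: test surjectivity of $\lambda-\tfrac{-i}{\hbar}\hat{H}_T$ at the matched value $\lambda=\tfrac{1}{\hbar}$, where $\left(\tfrac{1}{\hbar}-\tfrac{-i}{\hbar}\hat{H}_T\right)(\Psi_0+T\Psi_-+\Psi_-)=\tfrac{i}{\hbar}(\overline{H}-i)\Psi_0+\tfrac{2}{\hbar}\Psi_-$; the $T$-term is annihilated and surjectivity follows immediately from the orthogonal splitting above, with no inversion of $I\pm T$ required. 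Two smaller points: in the converse, the difference of two elements of $D(B)$ with the same $\Psi_-$ lies in $D(\overline{H})\oplus\text{Ker}(i-\hat{H}^*)$, not in $\text{Ker}(i-\hat{H}^*)$ (though your conclusion via dissipativity, that its $\Psi_+$-part vanishes, is the right one); and for the self-adjointness clause, ``symmetry between the deficiency indices'' is not available in general --- the clean argument applies the same classification to $+\tfrac{i}{\hbar}\hat{H}$ (equivalently to $B^*$), producing a contraction $S:\text{Ker}(i-\hat{H}^*)\to\text{Ker}(i+\hat{H}^*)$ inverse to $T$, so that an isometric $T$ with contractive inverse is unitary.
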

It follows that associated to our family of generators $\frac{-i}{\hbar}\hat{H}_\epsilon$ we must have some family of linear contractions $T_\epsilon:\text{Ker}(i+\hat{H}^*)\to \text{Ker}(i-\hat{H}^*)$. It is a consequence of the Kurtz approximation theorem that a sufficient condition for $U_\epsilon(t)$ to converge to another $C_0$ contraction semigroup is that the associated sequence of linear contractions $T_\epsilon$ converges to some linear contraction $T_0$.
\begin{proposition}
    Let \em $T_\epsilon:\text{Ker}(i+\hat{H}^*)\to \text{Ker}(i-\hat{H}^*)$ \em be a one parameter family of linear contractions, and $U_\epsilon(t)$ denote the associated $C_0$ contraction semigroups with generator $\frac{-i}{\hbar}\hat{H}_\epsilon$ defined via
    \begin{equation}
        D(\frac{-i}{\hbar}\hat{H}_\epsilon):= \{\Psi \in D(\hat{H}^*): \Psi_+=T_\epsilon \Psi_-\}, \quad \frac{-i}{\hbar}\hat{H}_\epsilon \Psi:=\frac{-i}{\hbar}\hat{H}^*\Psi .
    \end{equation}
    If the linear contractions $T_\epsilon$ converge to another linear contraction $T_0$ as $\epsilon \to 0$, then there exists a $C_0$ contraction semigroup $U_0(t)$ such that $\lim_{\epsilon \to 0}U_\epsilon(t) \mathring{\Psi}=U_0(t) \mathring{\Psi}$.  $U_0(t)$ is generated by the maximally dissipative operator $\frac{-i}{\hbar}\hat{H}_{T_0}$ defined as
    \begin{equation}
        D(\frac{-i}{\hbar}\hat{H}_{T_0}):= \{\Psi \in D(\hat{H}^*): \Psi_+=T_0 \Psi_-\}, \quad \frac{-i}{\hbar}\hat{H}_{T_0} \Psi:=\frac{-i}{\hbar}\hat{H}^*\Psi .
    \end{equation}
\end{proposition}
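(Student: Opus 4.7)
The plan is to invoke the Kurtz approximation theorem (Theorem~\ref{Kurtz}), which converts the semigroup-convergence question into one about the extended limit operator $\frac{-i}{\hbar}\hat{H}_0$. Specifically, I would show that $\frac{-i}{\hbar}\hat{H}_0$ coincides with $\frac{-i}{\hbar}\hat{H}_{T_0}$ as operators on $L^2(\mathcal{S}_1)$. Once this is established, Theorem~\ref{Arendt} immediately yields that $\frac{-i}{\hbar}\hat{H}_{T_0}$ is maximally dissipative (since $T_0$ is a linear contraction) and, being an extension of the densely defined $\frac{-i}{\hbar}\hat{H}$, is itself densely defined; Kurtz then delivers a unique $C_0$ contraction semigroup $U_0(t)=e^{-\frac{it}{\hbar}\hat{H}_{T_0}}$ that is the strong limit of $U_\epsilon(t)$. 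Lemma~\ref{Single Valued} guarantees from the outset that $\frac{-i}{\hbar}\hat{H}_0$ is single-valued and dissipative, so only the set-theoretic equality $D(\hat{H}_0)=D(\hat{H}_{T_0})$, with matching action, remains to be verified.

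For the inclusion $D(\hat{H}_{T_0})\subseteq D(\hat{H}_0)$, given $\Psi=\Psi_0+T_0\Psi_-+\Psi_-$ with $\Psi_0\in D(\overline{H})$ and $\Psi_-\in\text{Ker}(i+\hat{H}^*)$, I would choose the approximating sequence
\begin{equation}
\Psi_\epsilon := \Psi_0 + T_\epsilon \Psi_- + \Psi_-,
\end{equation}
which lies in $D(\hat{H}_\epsilon)$ by construction. Strong convergence $T_\epsilon\to T_0$ evaluated at $\Psi_-$ gives $\Psi_\epsilon\to\Psi$ in $L^2(\mathcal{S}_1)$. Using $\hat{H}^* T_\epsilon\Psi_-=i T_\epsilon\Psi_-$ and $\hat{H}^*\Psi_-=-i\Psi_-$ then yields
\begin{equation}
\tfrac{-i}{\hbar}\hat{H}_\epsilon\Psi_\epsilon = \tfrac{-i}{\hbar}\overline{H}\Psi_0 + \tfrac{1}{\hbar}T_\epsilon\Psi_- - \tfrac{1}{\hbar}\Psi_- \;\longrightarrow\; \tfrac{-i}{\hbar}\hat{H}_{T_0}\Psi,
\end{equation}
so $\Psi\in D(\hat{H}_0)$ with $\hat{H}_0\Psi=\hat{H}_{T_0}\Psi$.

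For the reverse inclusion, let $\Psi\in D(\hat{H}_0)$ with an approximating sequence $\Psi_\epsilon\in D(\hat{H}_\epsilon)$ such that $\Psi_\epsilon\to\Psi$ and $\hat{H}^*\Psi_\epsilon=\hat{H}_\epsilon\Psi_\epsilon\to\eta$ in $L^2$. Closedness of the adjoint $\hat{H}^*$ gives $\Psi\in D(\hat{H}^*)$ and $\hat{H}^*\Psi=\eta$. Writing the von Neumann decompositions $\Psi_\epsilon=\Psi_0^\epsilon+T_\epsilon\Psi_-^\epsilon+\Psi_-^\epsilon$ and $\Psi=\Psi_0+\Psi_++\Psi_-$, I would exploit that this direct sum is orthogonal in the graph inner product on $D(\hat{H}^*)$ and that $\Psi_\epsilon\to\Psi$ in the graph norm to deduce $L^2$-convergence of each summand. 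The estimate
\begin{equation}
\|T_\epsilon\Psi_-^\epsilon - T_0\Psi_-\|_{L^2(\mathcal{S}_1)} \leq \|\Psi_-^\epsilon-\Psi_-\|_{L^2(\mathcal{S}_1)} + \|(T_\epsilon-T_0)\Psi_-\|_{L^2(\mathcal{S}_1)}
\end{equation}
then forces $\Psi_+=T_0\Psi_-$, placing $\Psi$ in $D(\hat{H}_{T_0})$.

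The main technical obstacle I anticipate is the step requiring $L^2$-convergence of the individual von Neumann components of $\Psi_\epsilon$. This does not follow from $L^2$-convergence of $\Psi_\epsilon$ alone and must be extracted via the orthogonality of the decomposition in the graph inner product, which in turn rests on the boundedness (in graph norm) of the projections onto $D(\overline{H})$, $\text{Ker}(i-\hat{H}^*)$, and $\text{Ker}(i+\hat{H}^*)$. Once this convergence of components is in hand, everything else follows cleanly from Theorems~\ref{Arendt} and~\ref{Kurtz}.
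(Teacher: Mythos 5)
Your proof is correct, and its first half is exactly the paper's argument: reduce via Theorem~\ref{Kurtz} to identifying the extended limit $\frac{-i}{\hbar}\hat{H}_0$ with $\frac{-i}{\hbar}\hat{H}_{T_0}$, and prove $\hat{H}_{T_0}\subset\hat{H}_0$ by the approximating sequence $\Psi_\epsilon=\Psi_0+T_\epsilon\Psi_-+\Psi_-$, using $\hat{H}^*\Psi_\pm=\pm i\Psi_\pm$ and strong convergence of the contractions. Where you diverge is in closing the argument: the paper never proves the reverse inclusion $D(\hat{H}_0)\subseteq D(\hat{H}_{T_0})$ at all. It observes that $\frac{-i}{\hbar}\hat{H}_0$ is single-valued and dissipative (Lemma~\ref{Single Valued}) while $\frac{-i}{\hbar}\hat{H}_{T_0}$ is \emph{maximally} dissipative (Theorem~\ref{Arendt}), so the one inclusion $\hat{H}_{T_0}\subset\hat{H}_0$ already forces equality — maximality does the work of the second inclusion for free. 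Your direct proof of the reverse inclusion is sound: from $\Psi_\epsilon\to\Psi$ and $\hat{H}^*\Psi_\epsilon\to\eta$ you get graph-norm convergence, and since the von Neumann decomposition $D(\hat{H}^*)=D(\overline{H})\oplus\mathrm{Ker}(i-\hat{H}^*)\oplus\mathrm{Ker}(i+\hat{H}^*)$ is orthogonal with respect to the graph inner product of $\hat{H}^*$, the componentwise convergence you need does follow, and your contraction estimate then yields $\Psi_+=T_0\Psi_-$. So the step you flag as the main technical obstacle can be carried out — but it is unnecessary, and the paper's maximality shortcut is the cleaner route; your version buys a self-contained identification of $D(\hat{H}_0)$ that does not lean on the abstract maximality classification, at the cost of the extra graph-orthogonality machinery.
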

\begin{proof}
    By Theorem \ref{Kurtz}, it suffices to show that the extended limit operator $\frac{-i}{\hbar}\hat{H}_0$ is equal to $\frac{-i}{\hbar}\hat{H}_{T_0}$. The extended limit operator is single valued and dissipative by Lemma \ref{Single Valued}, and $\frac{-i}{\hbar}\hat{H}_{T_0}$ is maximally dissipative by Theorem \ref{Arendt}, so it suffices to show $ \hat{H}_0$ extends $\hat{H}_{T_0}$ . Let $\Psi= \Psi_0 + T_0\Psi_- + \Psi_- \in D(\hat{H}_{T_0})$, and consider the approximating sequence $\Psi_\epsilon= \Psi_0 + T_\epsilon \Psi_- + \Psi_- \in D(\hat{H}_\epsilon)$. Clearly $\Psi_\epsilon \to \Psi$ as $\epsilon \to 0$, since $T_\epsilon \to T_0$. It is also easy to show that $\hat{H}_\epsilon \Psi_\epsilon \to \hat{H}_{T_0} \Psi$, since
    \begin{equation}
        \hat{H}_\epsilon \Psi_\epsilon= \overline{H}\Psi_0 + \hat{H}^* T_\epsilon \Psi_- + \hat{H}^* \Psi_-=\overline{H}\Psi_0 + i T_\epsilon \Psi_- -i \Psi_- \xrightarrow{\epsilon \to 0} \overline{H}\Psi_0 + i T_0 \Psi_- -i \Psi_-= \hat{H}^* \Psi.
    \end{equation}
    Hence $\Psi \in D(\hat{H}_0)$, and we have $\hat{H}_{T_0} \subset \hat{H}_0$ which implies $\hat{H}_{T_0}=\hat{H}_0$ as desired.
\end{proof}
Our primary goal now is to identify the linear contractions $T_\epsilon$ associated with our leaky boundary evolutions $U_\epsilon(t)$, and prove that these contractions $T_\epsilon$ converge to some $T_0$. It is not obvious how one would do this, since our leaky boundary evolutions came from setting certain boundary conditions along $\mathcal{C}_1$ and $\mathcal{C}_2$, while setting $\Psi_+=T_\epsilon \Psi_-$ is a condition on how $\Psi \in D(\frac{-i}{\hbar}\hat{H}_\epsilon)$ decomposes. It is a remarkable fact that this decomposition condition is equivalent to setting a boundary condition.

\paragraph{Step 3 in the proof of Thm.~\ref{thm:main}:} We begin by computing the deficiency subspaces $\text{Ker}(i \pm \hat{H}^*)$ and show they are uniquely determined by the traces of certain wave function components on $\mathcal{C}_1$ and $\mathcal{C}_2$. For convenience we do this in the case that $m_\text{e}=0$, since there is a one to one correspondence between maximally dissipative extensions of 
$$\frac{-i}{\hbar}\left(\hat{H}- (\mathbb{1}\otimes m_\text{e} \gamma^0 \otimes \mathbb{1})- (\mathbb{1}\otimes \mathbb{1}\otimes m_\text{e} \gamma^0)\right)$$
and maximally dissapative extensions of $\frac{-i}{\hbar}\hat{H}$.
\begin{proposition}
    Let $m_\text{e}=0$, so that $\hat{H}$ acts on wave functions via
    \begin{equation}
        (\hat{H}\Psi)_{\varsigma_0 \varsigma_1 \varsigma_2}=i\hbar (\varsigma_0 \partial_{s_{\text{sph}}}+ \varsigma_1 \partial_{s_{\text{e}_1}}+\varsigma_2 \partial_{s_{\text{e}_2}})\psi_{\varsigma_0 \varsigma_1 \varsigma_2}.
    \end{equation}
    Working in relative coordinates
    \begin{equation}\label{relative coordinates}
        s:=\frac{s_\text{ph}-s_{\text{e}_1}}{2}, \quad \tilde{s}:=\frac{s_{\text{e}_2} - s_\text{ph}}{2}, \quad s_p:=s_\text{ph} 
    \end{equation}
    so that $\mathcal{S}_1=\{ (s_p, s, \tilde{s}) \in \mathbb{R}^3: s>0, \tilde{s}>0)\}$, and the boundaries can be simply written as $\mathcal{C}_1=\{s=0\}$ (parametrized by $s_p,\tilde{s}$) and $\mathcal{C}_2=\{ \tilde{s}=0\}$ (parametrized by $s_p,s$).
    Then the elements of \em $\text{Ker}(i \pm \hat{H}^*)$ \em are fully determined by the (well-defined) traces of their nonzero components along $\mathcal{C}_1$ and $\mathcal{C}_2$, in that
    \begin{eqnarray*}
        \text{Ker}(i - \hat{H}^*) & = & 
        \left\{\Psi \in (L^2(\mathcal{S}_1))^8 \ |\ \Psi(s_p,s,\tilde{s}) = \left(
            0,
           0,
            e^{-\frac{s}{\hbar}} g_2(s_p-s,\tilde{s}),
            e^{-\frac{s}{\hbar}} g_3(s_p-s,\tilde{s}+s),
            e^{-\frac{\tilde{s}}{\hbar}} g_4(s_p+\tilde{s},s+\tilde{s}),
            0,
            e^{-\frac{\tilde{s}}{\hbar}} g_6(s_p+\tilde{s},s),
            0\right)^T
        \right\},\\
        \text{Ker}(i + \hat{H}^*) & = & \left\{\Psi \in (L^2(\mathcal{S}_1))^8 \ |\ \Psi(s_p,s,\tilde{s}) = \left(
            0,e^{-\frac{\tilde s}{\hbar}} f_1(s_p+\tilde s,s),
            0,
            e^{-\frac{\tilde s}{\hbar}} f_3(s_p+\tilde s,s+\tilde{s}),
            e^{-\frac{{s}}{\hbar}} f_4(s_p-s,\tilde{s}+s),
            e^{-\frac{s}{\hbar}} f_5(s_p-{s},\tilde s ),
            0,
            0\right)^T
        \right\},
    \end{eqnarray*}
    where $f_1,f_3,f_4,f_5$ and $g_2,g_3,g_4,g_6$ are arbitrary functions in $L^2(\mathbb{R}\times\mathbb{R}_+)$.
\end{proposition}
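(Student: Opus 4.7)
The plan is to exploit the fact that, in the massless case, the formal differential operator for $\hat H$ is diagonal in the eight-component basis, so the eigenvalue problem $\hat H^* \Psi = \mp i \Psi$ decouples into eight independent first-order scalar PDEs that I can solve by characteristics. First I would identify $D(\hat H^*)$ in the standard way: since $\hat H$ is defined on $C_c^\infty(\mathcal{S}_1)^8$, integration by parts against a test function vanishing on $\partial\mathcal{S}_1$ shows that $\Psi \in D(\hat H^*)$ iff $\Psi \in L^2(\mathcal{S}_1)^8$ and each distributional derivative $i\hbar(\varsigma_0 \partial_{s_{\text{ph}}} + \varsigma_1 \partial_{s_{\text{e}_1}} + \varsigma_2 \partial_{s_{\text{e}_2}})\psi_{\varsigma_0\varsigma_1\varsigma_2}$ lies in $L^2$, with $\hat H^*$ acting by the same differential expression. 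Thus $\Psi \in \text{Ker}(i\mp \hat H^*)$ is equivalent to the eight decoupled PDEs
\[
\hbar(\varsigma_0 \partial_{s_{\text{ph}}} + \varsigma_1 \partial_{s_{\text{e}_1}} + \varsigma_2 \partial_{s_{\text{e}_2}})\psi_{\varsigma_0\varsigma_1\varsigma_2} \;=\; \mp\, \psi_{\varsigma_0\varsigma_1\varsigma_2}.
\]

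Next I would switch to the relative coordinates $(s_p,s,\tilde s)$ of \eqref{relative coordinates}, in which $\mathcal{S}_1 = \mathbb{R}\times\mathbb{R}_+\times\mathbb{R}_+$ and the transport operator becomes
\[
\varsigma_0 \partial_{s_p} + \tfrac{1}{2}(\varsigma_0-\varsigma_1)\partial_s + \tfrac{1}{2}(\varsigma_2-\varsigma_0)\partial_{\tilde s}.
\]
Because the coefficient of $\partial_{s_p}$ is $\varsigma_0 = \pm 1$ and $s_p$ ranges over all of $\mathbb{R}$, any solution growing or decaying exponentially in $s_p$ is excluded on $L^2$ grounds; the $s_p$-dependence of an admissible solution must enter only through first integrals of the characteristic flow. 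Integrating along the characteristics and using that $s,\tilde s\ge 0$, I would then split into three regimes by sign pattern. When $\varsigma_0=\varsigma_1=\varsigma_2$ (patterns $(-,-,-)$ and $(+,+,+)$), only $\partial_{s_p}$ survives and the only solutions are $e^{\pm s_p/\hbar} F(s,\tilde s)$, which are not $L^2$; this gives the two zero entries in positions $1$ and $8$ of both tuples. When exactly one of $\varsigma_1,\varsigma_2$ disagrees with $\varsigma_0$, only one of $\partial_s,\partial_{\tilde s}$ is active, and $L^2$ integrability in $s$ or $\tilde s$ selects exactly one sign of $\mp$; this places $\psi_{-,-,+},\psi_{+,-,+}$ in $\text{Ker}(i+\hat H^*)$ and $\psi_{-,+,-},\psi_{+,+,-}$ in $\text{Ker}(i-\hat H^*)$, explaining the $f_1,f_5$ and $g_2,g_6$ entries. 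When $\varsigma_1=\varsigma_2=-\varsigma_0$ (patterns $(-,+,+)$ and $(+,-,-)$), both half-line coordinates are active and each sign of $\mp$ admits an $L^2$ solution — one decaying in $s$ and one decaying in $\tilde s$ — so these two components appear in \emph{both} kernels, accounting for the pairs $(g_3,f_3)$ and $(g_4,f_4)$.

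Finally, to verify the ``fully determined by traces'' assertion, I would observe that each surviving solution has the explicit form $e^{-s/\hbar} \Phi$ or $e^{-\tilde s/\hbar}\Phi$, where $\Phi$ depends only on the two remaining first integrals, which serve as coordinates on $\mathcal{C}_1=\{s=0\}$ or $\mathcal{C}_2=\{\tilde s=0\}$ respectively. By Fubini the map $\Psi \mapsto \Phi$ is an isometric isomorphism (up to the constant $\sqrt{\hbar/2}$ from the $\tilde s$- or $s$-integral) from the asserted kernel subspace onto a product of copies of $L^2(\mathbb{R}\times\mathbb{R}_+)$, and the trace at $s=0$ (resp.\ $\tilde s=0$) is literally $\Phi$; hence the free functions $f_j$ and $g_j$ are exactly those boundary traces, and they may be chosen arbitrarily in $L^2(\mathbb{R}\times\mathbb{R}_+)$. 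The only real obstacle is bookkeeping: matching each of the sixteen (sign pattern, kernel) pairs to the correct exponential factor, the correct first integrals, and the correct boundary component. Once this table is written out, no analytic difficulty remains.
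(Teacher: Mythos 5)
Your proposal is correct and follows essentially the same route as the paper: decouple the eight components, pass to the relative coordinates $(s_p,s,\tilde s)$, integrate the resulting first-order equations along characteristics, and use $L^2$-integrability on $\mathbb{R}\times\mathbb{R}_+\times\mathbb{R}_+$ to kill the $(-,-,-)$ and $(+,+,+)$ components and to select, for each remaining sign pattern, the exponentially decaying branch whose free function is the trace on $\mathcal{C}_1$ or $\mathcal{C}_2$ (the paper works out $\psi_{---}$ and $\psi_{-+-}$ and says the rest are analogous, while you add the identification of $D(\hat H^*)$ and the Fubini bookkeeping). One small caveat: your ``isometric isomorphism up to $\sqrt{\hbar/2}$'' is literally exact only for the components whose exponential variable ranges over the full half-line independently of the arguments of the free function (e.g.\ $g_2,g_6,f_1,f_5$); for $g_3,g_4,f_3,f_4$ the $s$- or $\tilde s$-integration is cut off at $s+\tilde s$, giving a bounded weight rather than a constant, which is the same level of informality as the paper's own treatment.
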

\begin{proof}
    Let $\Psi \in \text{Ker}(i-\hat{H}^*)$. Then the components of $\Psi$ satisfy decoupled first order equations
    \begin{equation}
        \psi_{\varsigma_0 \varsigma_1 \varsigma_2}=\hbar (\varsigma_0 \partial_{s_{\text{sph}}}+ \varsigma_1 \partial_{s_{\text{e}_1}}+\varsigma_2 \partial_{s_{\text{e}_2}})\psi_{\varsigma_0 \varsigma_1 \varsigma_2}.
    \end{equation}
    It is easiest to solve these differential equations in the relative coordinates (\ref{relative coordinates}) using the formulas
    \begin{equation}
        \partial_{s_\text{ph}}=\partial_{s_p}+\frac{1}{2}\partial_s - \frac{1}{2}\partial_{\tilde s}, \quad \partial_{s_{\text{e}_1}}= -\frac{1}{2}\partial_s, \quad \partial_{s_{\text{e}_2}}=\frac{1}{2}\partial_{\tilde s}.
    \end{equation}
    In these coordinates the equation for $\psi_{---}$ reduces to 
    \begin{equation}
        \psi_{---}=-\hbar \partial_{s_p}\psi_{---}. 
    \end{equation}
    The only $L^2$ function which satisfies this equation is $0$, so this component must vanish. 
    \par
    The equation for $\psi_{-+-}$ does admit non-trivial solutions
    \begin{equation}
        \psi_{-+-}=\hbar (-\partial_{s_p}-\partial_s )\psi_{-+-} \quad \Rightarrow \quad \partial_s \psi_{-+-}=\frac{-1}{\hbar}\psi_{-+-} + (-\partial_{s_p})\psi_{-+-}.
    \end{equation}
    It is easy to verify that all solutions to this equation are of the form $\psi_{-+-}=e^{-\frac{s}{\hbar}}(\psi_{-+-})\big{|}_{s=0}(s_p-s, \tilde{s})$, where $(\psi_{-+-})\big{|}_{s=0}\in L^2(\mathbb{R}\times \mathbb{R}_+)$. Repeating these computations returns all solutions for the remaining components.
\end{proof}
\begin{proposition}\label{T sub epsilons}
    Let $m_\text{e}=0$. For $\epsilon \geq 0$, the operators $\frac{-i}{\hbar}\hat{H}_\epsilon$ associated to the leaky boundary conditions
    \begin{equation}
        D(\frac{-i}{\hbar}\hat{H}_\epsilon):=\left\{\Psi \in D(\hat{H}^*): \psi_{-+\varsigma_2} = e^{i\theta_1}\mu_\epsilon \psi_{+-\varsigma_2} \text{ on } \mathcal{C}_1, \quad \psi_{+\varsigma_1-} = e^{i\theta_2}\mu_\epsilon \psi_{-\varsigma_1+}  \text{ on } \mathcal{C}_2 \right\}, \quad \frac{-i}{\hbar}\hat{H}_\epsilon\Psi:= \frac{-i}{\hbar}\hat{H}^*\Psi
    \end{equation}
    are maximally dissipative extensions of $\frac{-i}{\hbar}\hat{H}$, with associated linear contractions \em $T_\epsilon:\text{Ker}(i+\hat{H}^*)\to \text{Ker}(i-\hat{H}^*)$ \em defined by
    \begin{equation}\label{Contraction Def}
    T_\epsilon\begin{pmatrix}
            0
            \\
            e^{-\frac{\tilde s}{\hbar}} f_1(s_p+\tilde s,s)
            \\
            0
            \\
            e^{-\frac{\tilde s}{\hbar}} f_3(s_p+\tilde s,s+\tilde{s})
            \\
            e^{-\frac{{s}}{\hbar}} f_4(s_p-s,\tilde{s}+s)
            \\
            e^{-\frac{s}{\hbar}} f_5(s_p-{s},\tilde s )
            \\
            0
            \\
            0
        \end{pmatrix}:= \begin{pmatrix}
            0
            \\
            0
            \\
            e^{i \theta_1-\frac{s}{\hbar}}\mu_\epsilon(\tilde s)\left( (1-e^{-\frac{2\tilde s}{\hbar}})f_4(s_p-s,\tilde s)+ e^{i \theta_2-\frac{\tilde s}{\hbar}}\mu_\epsilon(\tilde s)f_1(s_p-s+\tilde s,\tilde s)\right)
            \\
            e^{-\frac{s}{\hbar}}\left( \mu_\epsilon(\tilde s+s) e^{i \theta_1}f_5(s_p-s,\tilde{s}+s)-e^{-\frac{\tilde{s}+s}{\hbar}}f_3(s_p+ \tilde{s}-s,\tilde{s}+s)\right)
            \\
            e^{-\frac{\tilde{s}}{\hbar}}\left( \mu_\epsilon(s+\tilde s) e^{i \theta_2}f_1(s_p+\tilde{s},s+\tilde{s})-e^{-\frac{s+\tilde s}{\hbar}}f_4(s_p-s+\tilde s,s+\tilde s)\right)
            \\
            0
            \\
            e^{i \theta_1- \frac{\tilde s}{\hbar}}\mu_\epsilon( s)\left( (1-e^{-\frac{2\tilde s}{\hbar}})f_3(s_p+\tilde s, s)+ e^{i \theta_1-\frac{s}{\hbar}}\mu_\epsilon(s)f_5(s_p+\tilde s-s,s)\right)
            \\
            0
        \end{pmatrix}
\end{equation}
When $\epsilon=0$ the linear mapping $T_0$ is a unitary operator between the deficiency spaces, so $\hat{H}_0$ is a self-adjoint extension of $\hat{H}$. In addition, the linear contractions $T_\epsilon$ converge to $T_0$ as $\epsilon \to 0$, so $\frac{-i}{\hbar}\hat{H}_0$ is the extended limit operator of the family $\frac{-i}{\hbar}\hat{H}_\epsilon$.
\end{proposition}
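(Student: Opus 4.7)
The overall strategy is to use the parametrization of maximally dissipative extensions of $\tfrac{-i}{\hbar}\hat H$ provided by Theorem \ref{Arendt} together with the explicit description of the deficiency subspaces given in the preceding proposition, and to recognize the leaky boundary conditions along $\mathcal{C}_1 \cup \mathcal{C}_2$ as encoding a contraction $T_\epsilon:\mathrm{Ker}(i+\hat H^*)\to \mathrm{Ker}(i-\hat H^*)$.  Concretely, I would first show that for $\Psi\in D(\hat H^*)$, decomposed as $\Psi=\Psi_0+\Psi_+ +\Psi_-$ with $\Psi_0\in D(\overline H)$ and $\Psi_\pm$ in the respective deficiency spaces, the trace of $\Psi$ on $\mathcal{C}_1\cup \mathcal{C}_2$ is well-defined and controlled entirely by $\Psi_\pm$: since $\Psi_0$ lies in the closure of the compactly supported functions in $\mathcal{S}_1$, its boundary traces vanish, whereas the explicit exponential formulas in the preceding proposition give $L^2$ traces for $\Psi_\pm$ on $\mathcal{C}_1$ (parametrized by $(s_p,\tilde s)$) and on $\mathcal{C}_2$ (parametrized by $(s_p, s)$).

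Next I would compute the four scalar boundary relations for the components $\psi_{-+\varsigma_2}|_{\mathcal{C}_1}$ and $\psi_{+\varsigma_1-}|_{\mathcal{C}_2}$ by substituting the explicit formulas.  For instance, $\psi_{-+-}|_{\mathcal{C}_1}(s_p,\tilde s)= g_2(s_p,\tilde s)$ while $\psi_{+--}|_{\mathcal{C}_1}(s_p,\tilde s) = f_4(s_p,\tilde s)+ e^{-\tilde s/\hbar}g_4(s_p+\tilde s,\tilde s)$; the condition $\psi_{-+-}=e^{i\theta_1}\mu_\epsilon\psi_{+--}$ together with the analogous coupled equations from $\mathcal{C}_1$ and $\mathcal{C}_2$ gives eight linear relations among the eight unknowns $g_2,g_3,g_4,g_6$ (coefficients of $\Psi_+$) in terms of $f_1,f_3,f_4,f_5$ (coefficients of $\Psi_-$).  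Solving this linear system by inspection (the coupling across $\mathcal{C}_1$ and $\mathcal{C}_2$ decouples into pairs, thanks to the fact that some components have support only from $\Psi_+$ or only from $\Psi_-$ on a given boundary) yields precisely the formula displayed in (\ref{Contraction Def}).  This produces $T_\epsilon$ as a genuine bounded linear map.

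I would then establish contractivity (and hence maximal dissipativity via Theorem \ref{Arendt}) by a direct $L^2$ computation, using the orthogonality of the factors $e^{-s/\hbar}$ and $e^{-\tilde s/\hbar}$ after a change of variables in $s_p$.  This reduces $\|T_\epsilon \Psi_-\|^2$ to a sum over components of integrals over $(s_p,s,\tilde s)\in\mathbb{R}\times\mathbb{R}_+\times\mathbb{R}_+$ of squared combinations like $|\mu_\epsilon e^{i\theta_1}f_5 - e^{-(s+\tilde s)/\hbar}f_3|^2$ weighted by exponentials; the identity $(1-e^{-2v/\hbar})\cdot 1 + e^{-2v/\hbar}\cdot 1 = 1$ ensures that when $\mu_\epsilon\equiv 1$ every cross-term combines to reproduce $\|f_1\|^2+\|f_3\|^2+\|f_4\|^2+\|f_5\|^2$, giving unitarity of $T_0$, while $0\le \mu_\epsilon \le 1$ forces $\|T_\epsilon\Psi_-\|\le \|\Psi_-\|$ for general $\epsilon>0$.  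The main obstacle here is the bookkeeping of exponential weights and shifts in arguments when passing between $\mathcal{C}_1$- and $\mathcal{C}_2$-sides; one must verify that the characteristic transport along each component faithfully transports the boundary data across the interior, which is precisely why the exponential factors $e^{-s/\hbar},\,e^{-\tilde s/\hbar}$ appear with the right signs.

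Finally, convergence $T_\epsilon\to T_0$ as $\epsilon\to 0$ in the strong operator topology follows from dominated convergence: the explicit formula (\ref{Contraction Def}) depends on $\epsilon$ only through pointwise multiplication by $\mu_\epsilon$, and $\mu_\epsilon\to 1$ pointwise on $(0,\infty)$, while the integrands are dominated by the ($\epsilon$-independent) $L^2$ expression arising from $T_0$.  Applied together with Theorem \ref{Kurtz} (via the preceding proposition identifying the extended limit operator as $\tfrac{-i}{\hbar}\hat H_{T_0}$), this yields $\lim_{\epsilon\to 0}e^{-\tfrac{it}{\hbar}\hat H_\epsilon}=e^{-\tfrac{it}{\hbar}\hat H_{T_0}}$, and since $T_0$ is unitary, $\hat H_{T_0}$ is the self-adjoint extension $\tilde H$, completing the proof of Theorem \ref{thm:main}.
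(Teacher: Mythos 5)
Your construction of $T_\epsilon$ follows the same route as the paper: decompose $\Psi=\Psi_0+\Psi_++\Psi_-$, discard $\Psi_0$ (vanishing traces), impose the four boundary relations on the explicit deficiency-space representatives, and solve the resulting system — which indeed decouples into the pair $(g_2,g_4)$ in terms of $(f_1,f_4)$ and the pair $(g_3,g_6)$ in terms of $(f_3,f_5)$ — to obtain \eqref{Contraction Def}. Where you diverge is in proving contractivity: you propose a direct $L^2$ computation with Fubini-type weights $\tfrac{\hbar}{2}(1-e^{-2v/\hbar})$, whereas the paper avoids this bookkeeping entirely by using the abstract identity $\mathrm{Re}\langle -i\hat H^*(T_\epsilon\Psi_-+\Psi_-),T_\epsilon\Psi_-+\Psi_-\rangle=\|T_\epsilon\Psi_-\|^2-\|\Psi_-\|^2$ together with the continuity equation, which converts the left-hand side into the boundary flux already computed in \eqref{Cont 4}; that flux is manifestly nonpositive for $\epsilon>0$ and zero for $\epsilon=0$. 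The flux argument is shorter and less error-prone; your direct computation should work but carries substantial unverified cross-term cancellation. On the convergence $T_\epsilon\to T_0$, your dominated-convergence argument (pointwise $\mu_\epsilon\to 1$, giving strong operator convergence) is actually more careful than the paper's appeal to ``$L^\infty$ convergence'' of $\mu_\epsilon$, which fails as literally stated since $\|\mu_\epsilon-1\|_{L^\infty}=1$ for every $\epsilon>0$; strong convergence is what the preceding approximation proposition actually requires, and your argument delivers it.

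There is one genuine gap: you conclude ``unitarity of $T_0$'' from the norm identity alone, but that identity only shows $T_0$ is an isometry. Since the deficiency subspaces are infinite-dimensional, an isometry need not be surjective, and Theorem~\ref{Arendt} requires $T_0$ to be \emph{unitary} — i.e.\ a surjective isometry — to conclude that $\hat H_0$ is self-adjoint rather than merely maximally dissipative. The paper closes this by observing that the boundary relations with $\mu_0\equiv 1$ can be inverted, solving for $f_1,f_3,f_4,f_5$ in terms of $g_2,g_3,g_4,g_6$, so that $T_0$ is a bijection. You should add this invertibility step (it is immediate from the triangular structure of the relations, since the diagonal coefficients $e^{i\theta_j}$ are nonzero), after which your argument is complete.
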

\begin{proof}
We start by considering wave functions of the form $\Psi_0+\Psi_+ +\Psi_- \in D(\hat{H}^*)$ which satisfy the boundary conditions $\psi_{-+\varsigma_2}= e^{i\theta_1}\mu_\epsilon\psi_{+-\varsigma_2}$ on $\mathcal{C}_1$ and $\psi_{+\varsigma_1-}=e^{i \theta_2}\mu_\epsilon\psi_{-\varsigma_1 +}$ on $\mathcal{C}_2$. We will use these boundary conditions to uniquely write $\Psi_+$ as a function of $\Psi_-$, returning an explicit formula for the mapping $T_\epsilon$. We first note that elements $\Psi_0 \in D(\overline{H})$ already satisfy the boundary conditions, so this term makes no contribution and can be ignored. We begin by writing out any element of the form $\Psi_+ + \Psi_-$
\begin{equation}
    \Psi_+ + \Psi_-= \begin{pmatrix}
            0
            \\
            e^{-\frac{\tilde s}{\hbar}}f_1(s_p+\tilde s, s)
            \\
            e^{-\frac{s}{\hbar}} g_2(s_p-s,\tilde{s})
            \\
            e^{-\frac{s}{\hbar}} g_3(s_p-s,\tilde{s}+s)+e^{\frac{-\tilde s}{\hbar}}f_3(s_p+\tilde s, s+\tilde s)
            \\
            e^{-\frac{\tilde{s}}{\hbar}} g_4(s_p+\tilde{s},s+\tilde{s})+e^{-\frac{s}{\hbar}}f_4(s_p-s,\tilde s +s)
            \\
            e^{-\frac{s}{\hbar}}f_5(s_p-s,\tilde s)
            \\
            e^{-\frac{\tilde{s}}{\hbar}} g_6(s_p+\tilde{s},s)
            \\
            0
        \end{pmatrix}
\end{equation}
Requiring that such elements satisfy $\psi_{+--}=e^{i \theta_2}\mu_\epsilon(s) \psi_{--+}$ on $\mathcal{C}_2$ and $\psi_{-+-}=e^{i \theta_1}\mu_\epsilon(\tilde s) \psi_{+--}$ on $\mathcal{C}_1$ returns explicit relations for $g_4$ and $g_2$ in terms of $f_1$ and $f_4$. 
\begin{equation}
    (\Psi_+ + \Psi_-)_{+--}\Big{|}_{\mathcal{C}_2}=e^{i \theta_2}\mu_\epsilon(s)(\Psi_+ + \Psi_-)_{--+}\Big{|}_{\mathcal{C}_2} \quad \iff \quad g_4(s_p,s)=e^{i \theta_2}\mu_\epsilon(s)f_1(s_p,s) - e^{-\frac{ s}{\hbar}}f_4(s_p-s,s)
\end{equation}
\begin{equation}
    (\Psi_+ + \Psi_-)_{-+-}\Big{|}_{\mathcal{C}_1}=e^{i \theta_1}\mu_\epsilon(\tilde s)(\Psi_+ + \Psi_-)_{+--}\Big{|}_{\mathcal{C}_1} \quad \iff \quad g_2(s_p,\tilde s)=e^{i \theta_1}\mu_\epsilon(\tilde s)\left( f_4(s_p,\tilde s)+ e^{-\frac{\tilde s}{\hbar}}g_4(s_p+\tilde s, \tilde s)\right)
\end{equation}
The other set of boundary conditions returns
\begin{equation}
    (\Psi_+ + \Psi_-)_{-++}\Big{|}_{\mathcal{C}_1}=e^{i \theta_1}\mu_\epsilon(\tilde s)(\Psi_+ + \Psi_-)_{+-+}\Big{|}_{\mathcal{C}_1} \quad \iff \quad g_3(s_p,\tilde s)=e^{i \theta_1}\mu_\epsilon(\tilde s)f_5(s_p,\tilde s) - e^{-\frac{\tilde s}{\hbar}}f_3(s_p+\tilde s,\tilde s)
\end{equation}
\begin{equation}
    (\Psi_+ + \Psi_-)_{++-}\Big{|}_{\mathcal{C}_2}=e^{i \theta_2}\mu_\epsilon(s)(\Psi_+ + \Psi_-)_{-++}\Big{|}_{\mathcal{C}_2} \quad \iff \quad g_6(s_p, s)=e^{i \theta_2}\mu_\epsilon( s)\left( f_3(s_p, s)+ e^{-\frac{s}{\hbar}}g_3(s_p- s,  s)\right)
\end{equation}
Since $\Psi_+$ is uniquely defined by $g_2,g_3,g_4$ and $g_6$, the relations above define an expression for $T_\epsilon$. To prove these are contractions we first recall that
\begin{equation}
\begin{split}
    \langle -i\hat{H}^*(T_\epsilon\Psi_-+\Psi_-), T_\epsilon\Psi_- + \Psi_-\rangle_{L^2(\mathcal{S}_1)}&=\langle T_\epsilon \Psi_- - \Psi_-, T_\epsilon \Psi_- +\Psi_-\rangle_{L^2(\mathcal{S}_1)}
    \\
    &=||T_\epsilon \Psi_-||^2_{L^2(\mathcal{S}_1)}- || \Psi_-||^2_{L^2(\mathcal{S}_1)}
\end{split}
\end{equation}
By the continuity equation, we can evaluate the L.H.S as 
\begin{equation}
    \langle -i\hat{H}^*(T_\epsilon\Psi_-+\Psi_-), T_\epsilon\Psi_- + \Psi_-\rangle_{L^2(\mathcal{S}_1)}+{\langle  T_\epsilon\Psi_- + \Psi_-, -i\hat{H}^*(T_\epsilon\Psi_-+\Psi_-)\rangle_{L^2(\mathcal{S}_1)}}= \int_{\mathcal{C}_1 \cup \mathcal{C}_2}\vec{j}_{(T_\epsilon \Psi_- + \Psi_-)}\cdot \vec{n}dS
\end{equation}
where $\vec{j}_{T_\epsilon \Psi_- + \Psi_-}$ denotes the probability current associated to the wave function $T_\epsilon \Psi_- + \Psi_-$. By virtue of the boundary conditions, this flux is non-positive for $\epsilon>0$ and vanishes when $\epsilon=0$, hence $T_\epsilon$ is a linear contraction for $\epsilon>0$ and norm preserving when $\epsilon=0$. $T_0$ is also invertible, which can be seen by inverting the boundary conditions and solving for $f_1, f_3,f_4,f_5$ in terms of $g_2,g_3,g_4,g_6$. Hence $T_0$ is unitary.
\par
Convergence of the linear contractions $T_\epsilon$ follows trivially from the $L^\infty$ convergence of the transition functions $\mu_\epsilon$ to the constant function $\mu_0=1$ as $\epsilon \to 0$.
\end{proof}
\begin{proposition}\label{Final Prop}
    For all $m_\text{e} \geq 0$ and $\epsilon > 0$, the operators associated to the leaky boundary conditions 
    \begin{equation}
        D(\frac{-i}{\hbar}\hat{H}_\epsilon):=\left\{\Psi \in D(\hat{H}^*): \psi_{-+\varsigma_2} = e^{i\theta_1}\mu_\epsilon \psi_{+-\varsigma_2} \text{ on } \mathcal{C}_1, \quad \psi_{+\varsigma_1-} = e^{i\theta_2}\mu_\epsilon \psi_{-\varsigma_1+}  \text{ on } \mathcal{C}_2 \right\}, \quad \frac{-i}{\hbar}\hat{H}_\epsilon\Psi:= \frac{-i}{\hbar}\hat{H}^*\Psi
    \end{equation}
    are maximally dissipative extensions of $\frac{-i}{\hbar}\hat{H}$, and the extended limit operator of these extensions is $\frac{-i}{\hbar}\tilde{H}$, where $\tilde{H}$ is the self adjoint extension of $\hat{H}$ with domain \begin{equation}
    D(\tilde{H}):=\left\{\Psi \in D(\hat{H}^*): \psi_{-+\varsigma_2} = e^{i\theta_1}\psi_{+-\varsigma_2} \text{ on } \mathcal{C}_1, \quad \psi_{+\varsigma_1-} = e^{i\theta_2}\psi_{-\varsigma_1+}  \text{ on } \mathcal{C}_2 \right\}.
\end{equation}
\end{proposition}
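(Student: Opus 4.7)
The plan is to reduce the massive case $m_\text{e} \geq 0$ to the massless case $m_\text{e}=0$ already handled in Proposition~\ref{T sub epsilons} by exploiting the fact that the mass term
\begin{equation*}
M := \bigl(\mathbb{1}\otimes m_\text{e}\gamma^{0}\otimes \mathbb{1}\bigr) + \bigl(\mathbb{1}\otimes \mathbb{1}\otimes m_\text{e}\gamma^{0}\bigr)
\end{equation*}
is a bounded self-adjoint operator on $L^{2}(\mathcal{S}_{1})$, so that $-\tfrac{i}{\hbar}M$ is a bounded skew-adjoint perturbation. Writing $\hat{H}=\hat{H}^{(0)}+M$ where $\hat{H}^{(0)}$ denotes the massless free Hamiltonian, we have $D(\hat{H}^{*})=D((\hat{H}^{(0)})^{*})$ with $\hat{H}^{*}=(\hat{H}^{(0)})^{*}+M$; in particular, the domains defined by the leaky boundary conditions, as well as the candidate domain $D(\tilde H)$, are literally the same in the massive and massless settings. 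This is the one-to-one correspondence alluded to just before Proposition~\ref{T sub epsilons}.

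First I would establish a short general lemma: if $A$ is a densely defined maximally dissipative operator on a Hilbert space and $B$ is a bounded self-adjoint operator, then $A-\tfrac{i}{\hbar}B$ with domain $D(A)$ is also maximally dissipative. Dissipativity is immediate from $\mathrm{Re}\langle -\tfrac{i}{\hbar}B\Psi,\Psi\rangle=0$. Maximality follows because $A-\tfrac{i}{\hbar}B$ generates a $C_{0}$ contraction semigroup via the standard bounded perturbation theorem (the range condition $\mathrm{ran}(\lambda-A+\tfrac{i}{\hbar}B)=L^{2}(\mathcal{S}_{1})$ holds for sufficiently large $\lambda$ by a Neumann series around $\lambda-A$). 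Applying this to $A=-\tfrac{i}{\hbar}\hat{H}^{(0)}_{\epsilon}$ from Proposition~\ref{T sub epsilons} immediately yields that $-\tfrac{i}{\hbar}\hat{H}_{\epsilon}=-\tfrac{i}{\hbar}\hat{H}^{(0)}_{\epsilon}-\tfrac{i}{\hbar}M$ is a maximally dissipative extension of $-\tfrac{i}{\hbar}\hat{H}$ for every $\epsilon>0$ and every $m_\text{e}\geq 0$.

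Next I would show the extended-limit construction transports cleanly across the perturbation. If a sequence $\Psi_{\epsilon}\in D(\hat{H}_{\epsilon}^{(0)})$ satisfies $\Psi_{\epsilon}\to\Psi$ and $-\tfrac{i}{\hbar}\hat{H}^{(0)}_{\epsilon}\Psi_{\epsilon}\to\eta$ in $L^{2}$, then by continuity of the bounded operator $M$ one has $-\tfrac{i}{\hbar}\hat{H}_{\epsilon}\Psi_{\epsilon}\to \eta-\tfrac{i}{\hbar}M\Psi$, and conversely; hence the extended limit of the massive family equals the extended limit of the massless family shifted by $-\tfrac{i}{\hbar}M$ on the same domain. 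Proposition~\ref{T sub epsilons} identifies the massless extended limit with $-\tfrac{i}{\hbar}\tilde H^{(0)}$, where $\tilde H^{(0)}=\hat H^{(0)}_{T_{0}}$ is self-adjoint (since $T_{0}$ is unitary, by Theorem~\ref{Arendt}). Adding $M$ back gives $-\tfrac{i}{\hbar}\tilde H$ with $\tilde H=\tilde H^{(0)}+M$, which is self-adjoint on the same domain by the Kato-Rellich theorem (trivially, since $M$ is bounded). Applying Theorem~\ref{Kurtz} then concludes that $U_{\epsilon}(t)\to e^{-\tfrac{it}{\hbar}\tilde H}$ strongly, completing both the proof of Proposition~\ref{Final Prop} and, via the identification of $D(\tilde H)$ with the boundary conditions stated in \eqref{SA dom}, the proof of Theorem~\ref{thm:main}.

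The main obstacle I anticipate is not analytic but bookkeeping: one must verify that the decomposition $D(\hat H^{*})=D(\overline{\hat H})\oplus \mathrm{Ker}(i-\hat H^{*})\oplus \mathrm{Ker}(i+\hat H^{*})$ and the parametrization of extensions by contractions $T_{\epsilon}$ remain valid in the massive case, so that the appeal to Theorem~\ref{Arendt} for maximal dissipativity and self-adjointness still goes through. Because the adjoint domain, the boundary conditions, and the range conditions are all preserved under the bounded perturbation $M$, this amounts to checking that the bijection between leaky-boundary data and contractions $T_{\epsilon}$ is compatible with the perturbation, which one can do by writing $\hat H=\hat H^{(0)}+M$ at the level of the defect-space decomposition and observing that the contraction/unitarity statements depend only on the boundary traces, which are insensitive to $M$.
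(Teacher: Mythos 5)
Your proposal is correct and follows essentially the same route as the paper's own proof: both reduce the massive case to the massless Proposition~\ref{T sub epsilons} by treating the mass term $M$ as a bounded symmetric perturbation, under which the adjoint domain, maximal dissipativity, the extended limit, and self-adjointness all transfer unchanged. Your write-up simply makes explicit the steps the paper dismisses as ``easy to verify,'' namely the bounded-perturbation lemma for maximally dissipative operators and the compatibility of the extended-limit construction with a bounded perturbation.
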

\begin{proof}
    For all $m_\text{e}\geq 0$, the operator $M:=\mathbb{1}\otimes m_\text{e}\gamma^0 \otimes \mathbb{1}+\mathbb{1}\otimes \mathbb{1}\otimes m_\text{e}\gamma^0$ is bounded and symmetric on $L^2(\mathcal{S}_1,\mathbb{C}^8)$, so $(\hat{H}-M)^*=\hat{H}^*-M$ and $D(\frac{-i}{\hbar}(\hat{H}_\epsilon - M))=D(\frac{-i}{\hbar}\hat{H}_\epsilon)$. By Proposition \ref{T sub epsilons} we have that $\frac{-i}{\hbar}(\hat{H}_\epsilon-M)$ are maximally dissipative extensions of $\frac{-i}{\hbar}(\hat{H}-M)$, hence $\frac{-i}{\hbar}\hat{H}_\epsilon$ are maximally dissipative extensions of $\frac{-i}{\hbar}\tilde{H}$. Since the extended limit operator of $\frac{-i}{\hbar}(\hat{H}_\epsilon-M)$ is $\frac{-i}{\hbar}(\hat{H}-M)$, it is easy to verify that $\frac{-i}{\hbar}\tilde{H}$ is the extended limit operator of $\frac{-i}{\hbar}\hat{H}_\epsilon$. Lastly, $\tilde{H}$ is self-adjoint since $\tilde{H}-M$ is self-adjoint.
\end{proof}
Theorem \ref{thm:main} then immediately follows from Theorem \ref{Kurtz} and Proposition \ref{Final Prop}.
\subsubsection{Extension to Multi-Time Setting}
In the previous section, we proved convergence of the infinite diagram sum depicted in figure \ref{fig:Infinite Diagram Sum} in the equal-time setting. We did this by considering an altered evolution of our wave function which was equivalent to discarding all diagrams in which the photon bounced between the two electrons while they were "too close" (i.e when $|s_{\text{e}_2}-s_{\text{e}_1}|<\epsilon$), and then took a limit as our alteration went to zero. 
\par
A natural way to extend our method to the multi-time setting would be to discard all diagrams in which the photon bounced between the two electrons while $-\eta(\textbf{x}_{\text{e}_1}-\textbf{x}_{\text{e}_2},\textbf{x}_{\text{e}_1}-\textbf{x}_{\text{e}_2})<\epsilon$. This corresponds to the following multi-time IBVP
\begin{equation}\label{Leaky Multi-Time}
\left\{
\begin{array}{rclr} 
-i\hbar D_{\text{ph}} \Psi &= &0  & 
\\
-i\hbar D_{\text{e}_1} \Psi + m_{e} \Psi &= &0 & 
\\ 
-i\hbar D_{\text{e}_2} \Psi + m_\text{e} \Psi &= &0  &\text{in } \mathcal{S}_1 
\\
\Psi &= &\mathring \Psi  &\text{on } \mathcal{I} 
\\
\psi_{-+\varsigma_2} &= &e^{i\theta_1}\mu_\epsilon \psi_{+-\varsigma_2} \hspace{0.5cm}&\text{on } \mathcal{C}_1
\\
\psi_{+\varsigma_1-} &= &e^{i\theta_2} \mu_\epsilon \psi_{-\varsigma_1+}  &\text{on } \mathcal{C}_2
\end{array}
\right.
\end{equation}
where here, $\mu_\epsilon=\mu_\epsilon(-\eta(\textbf{x}_{\text{e}_1}-\textbf{x}_{\text{e}_2},\textbf{x}_{\text{e}_1}-\textbf{x}_{\text{e}_2}))$ is a smooth transition function. Unfortunately, the multi-time IBVP above is not well-posed. To see this, consider the dynamics of the wave function on $\mathcal{C}_1$. Both $\psi_{-+\varsigma_2}$ and $\psi_{+-\varsigma_2}$ satisfy the electron $2$ evolution equation on $\mathcal{C}_1$
\begin{equation}
    (\partial_{t_{\text{e}_2}}-\varsigma_2 \partial_{s_{\text{e}_2}}) \psi_{\varsigma_0 \varsigma_1 \varsigma_2} + i\omega \psi_{\varsigma_0 \varsigma_1 \overline{\varsigma}_2}=0.
\end{equation}
For this to be consistent with the boundary condition on $\mathcal{C}_1$, we would need 
\begin{equation}
    (\partial_{t_{\text{e}_2}}-\varsigma_2 \partial_{s_{\text{e}_2}}) \mu_\epsilon=0
\end{equation}
which cannot hold for non-constant functions of $\eta(\textbf{x}_{\text{e}_1}-\textbf{x}_{\text{e}_2},\textbf{x}_{\text{e}_1}-\textbf{x}_{\text{e}_2})$. This makes it impossible to generalize the equal-time "leaky boundary" problem to the multi-time setting.
\par
Luckily, the equal-time evolution is enough for our purposes, so long as we combine it with our previous multi-time results. Let $(\textbf{x}_\text{ph},\textbf{x}_{\text{e}_1},\textbf{x}_{\text{e}_2})$ be a Coulomb configuration. We wish to write down the solution for $\Psi$ satisfying the original multi-time IBVP (\ref{three body problem}) at this configuration in terms of the initial data $\mathring{\Psi}$. First, let $t=\min (t_\text{ph},t_{\text{e}_1},t_{\text{e}_2})$. We may evolve the initial data up to the equal-time hypersurface in $\mathcal{S}_1$ written as $\{t_\text{ph}=t_{\text{e}_1}=t_{\text{e}_2}=t \}$ using the equal-time evolution operator. Let $\Psi \big{|}_t =e^{\frac{-it}{\hbar}\hat{H}_\theta}\mathring{\Psi}$. To solve for $\Psi$ at the Coulomb configuration, we only need to evolve our new initial data $\Psi \big{|}_t$ further in only two of the time variables. For $\textbf{x}\in \mathbb{R}^{1,1}$, let $\textbf{x}^t:=(x^0-t,x^1)$. There are only three possible cases:
\newline
\textbf{Free Evolution:} If $(\textbf{x}^t_\text{ph},\textbf{x}^t_{\text{e}_1},\textbf{x}^t_{\text{e}_2})\in \mathcal{F}$, then
\begin{equation}
    \Psi(\textbf{x}_\text{ph},\textbf{x}_{\text{e}_1},\textbf{x}_{\text{e}_2})= \big(\mathcal{P}(\textbf{x}^t_\text{ph})\otimes \mathcal{E}(\textbf{x}^t_{\text{e}_1}) \otimes  \mathcal{E}(\textbf{x}^t_{\text{e}_2}) \big) \Psi \big{|}_t.
\end{equation}
\textbf{One more bounce:} If $(\textbf{x}^t_\text{ph},\textbf{x}^t_{\text{e}_1},\textbf{x}^t_{\text{e}_2})\in \mathcal{X}$, then either $(\textbf{x}^t_\text{ph},\textbf{x}^t_{\text{e}_1}) \in \mathcal{N}$ or $(\textbf{x}^t_\text{ph},\textbf{x}^t_{\text{e}_2}) \in \mathcal{N}$. Suppose the former. Then
\begin{equation}
    \Psi(\textbf{x}_\text{ph},\textbf{x}_{\text{e}_1},\textbf{x}_{\text{e}_2})=\big(C_{\theta_1}(\textbf{x}^t_\text{ph},\textbf{x}^t_{\text{e}_1}) \otimes  \mathcal{E}(\textbf{x}^t_{\text{e}_2}) \big) \Psi \big{|}_t
\end{equation}
while an analogous equation holds for the other case. This completes our construction of the solution to the initial-boundary value problem (\ref{three body problem}). 
\section{Summary and Outlook}
Using a relativistic quantum-\textit{mechanical} treatment of the Compton effect in $1+1$ dimensions, we have shown that placing a photon between two identical massive spin-half Dirac particles generates an effective interaction which can be decomposed into many individual two-particle interactions. 

Our next goal is to further study this effective interaction, and to tease out the connection that we believe is there between the arbitrary phase shifts $\theta_1,\theta_2$ and an effective ``charge" for the massive particles in this model.

Ultimately, our goal is to find out whether interactions between charged particles can be reproduced in terms of relativistic quantum mechanics of a \textit{fixed} finite number of photons and  spin-half particles, at least in $1+1$ dimensions. Extending our work to the physical $3+1$ dimensional setting in a way that accurately reproduces empirical electromagnetism would require us to overcome many challenges, the most significant of which is that in more than one space dimension, the coincidence hypersurfaces will have a higher co-dimension in the multi-time multi-particle configuration space.  We have reasons to be optimistic however that a geometric approach to the problem could provide a way out of this dilemma. 
\section{Acknowledgment} The authors are grateful to the anonymous referees for extremely helpful suggestions.
\section*{\centering APPENDIX}
\appendix 
\section{Solution formulas for the Klein-Gordon and Dirac equation}
The initial value problem for the Dirac equation is given by
\begin{equation}
    \left\{\begin{array}{rcl}
        -i\hbar D_\text{e} \Psi_\text{e} + m_\text{e} \Psi_\text{e} &=&0
        \\
        \Psi_\text{e} \big{|}_{t_\text{e}=0}&=&\mathring{\Psi}_\text{e}
    \end{array}\right..
    \end{equation}
    where $\Psi_\text{e}=\begin{pmatrix}
        \psi_-
        \\
        \psi_+
    \end{pmatrix}$. To solve the Dirac IVP, we decouple the components of $\Psi_\text{e}$ and write down the second order initial value problems
    \begin{equation}\label{KG IVP}
       \left\{ \begin{array}{rcl}
            \Box_\text{e} \psi_\varsigma+\omega^2 \psi_\varsigma&=&0
            \\
            \psi_\varsigma \big{|}_{t_\text{e}=0}&=&\mathring{\psi}_\varsigma(s_\text{e})
            \\
            \partial_{t_\text{e}} \psi_\varsigma\big{|}_{t_\text{e}=0}&=&\varsigma \partial_{s_{\text{e}}}\mathring{\psi}_\varsigma(s_{\text{e}})-i\omega\mathring{\psi}_{\overline{\varsigma}}(s_{\text{e}})
        \end{array}\right..
    \end{equation}
    Recall that the general solution to the Cauchy problem for the Klein-Gordon Equation
\begin{equation}\label{cauchy problem original}
w_{tt}-w_{ss}+\omega^2 w = 0,
\hspace{0.2cm}
w(0,s) = A(s),
\hspace{0.2cm}
w_{t}(0,s) = B(s),
\end{equation}
is
\begin{equation} 
\begin{split}
w(t,s) = \frac{1}{2} \{ A(s-t) + A(s+t) \} - &\frac{\omega t}{2} \int_{s-t}^{s+t} \frac{J_{1}(\omega\sqrt{t^2-(s-\sigma)^2})}{\sqrt{t^2-(s-\sigma)^2}} A(\sigma) d\sigma \\
+ &\frac{1}{2} \int_{s-t}^{s+t} J_{0}(\omega\sqrt{t^2-(s-\sigma)^2}) B(\sigma) d\sigma
\end{split}
\end{equation}
where $J_{\nu}$ is the Bessel function of index $\nu$. Applying this to the Dirac equation and  using integration by parts, our Dirac evolution can be written as
\begin{equation}\label{Dirac Solution}
\begin{split}
\psi_{\varsigma}(t_\text{e},s_\text{e})&=\mathring{\psi}_{\varsigma}(s_{\text{e}}+\varsigma t_{\text{e}})
\\
&-\frac{\omega}{2}\int_{s_{\text{e}}-t_{\text{e}}}^{s_{\text{e}}+t_{\text{e}}} J_{1}(\omega\sqrt{t_{\text{e}}^2-(s_{\text{e}}-\sigma)^2})\frac{\sqrt{t_{\text{e}}-\varsigma(s_{\text{e}}-\sigma)}}{\sqrt{t_{\text{e}}+\varsigma(s_{\text{e}}-\sigma)}}\mathring{\psi}_{\varsigma}(\sigma) d\sigma
\\
&-\frac{i\omega}{2}\int_{s_{\text{e}}-t_{\text{e}}}^{s_{\text{e}}+t_{\text{e}}} J_{0}(\omega\sqrt{t_{\text{e}}^2-(s_{\text{e}}-\sigma)^2}) \mathring{\psi}_{\overline{\varsigma}}(\sigma) d\sigma
\end{split}
\end{equation}
We condense such equations by introducing the electronic time evolution operator $\mathcal{E}(t_\text{e},s_\text{e})$ defined by $\Psi_\text{e}(t_\text{e},s_\text{e})=\mathcal{E}(t_\text{e},s_\text{e})\mathring{\Psi}_\text{e}$. 
\par
Also relevant is the Goursat problem for the 1-dimensional Klein-Gordon equation,
\begin{equation}
\left\{\begin{array}{rcl}
             U_{tt}-U_{ss}+\omega^2 U&=&0 \quad \text{for } |s|<t
            \\
            U(b,b)&=&F(b)
            \\
            U(c,-c) &=&G(c)
        \end{array}\right.,
\end{equation}
which has the solution 
\begin{equation}\label{goursat solution}
\begin{split}
U(t,s) = &F(\frac{t+s}{2}) + G(\frac{t-s}{2}) - \frac{1}{2}(F(0) + G(0))J_{0}(\omega\sqrt{t^2-s^2})
\\
&-\omega(t-s)\int_{0}^{\frac{t+s}{2}}F(b)\frac{J_{1}(\omega\sqrt{(t-s)(t+s-2b)})}{\sqrt{(t-s)(t+s-2b)}}db
\\
&-\omega(t+s)\int_{0}^{\frac{t-s}{2}}G(c)\frac{J_{1}(\omega\sqrt{(t+s)(t-s-2c)})}{\sqrt{(t+s)(t-s-2c)}}dc.
\end{split}
\end{equation}
By linearity, we may introduce Goursat evolution operators $G^R(t,s)$ and $G^L(t,s)$, defined so that
\begin{equation}
    U(t,s)=G^R(t,s)F + G^L(t,s)G.
\end{equation}
Explicitly, these operators are written as
\begin{equation}\label{GR Def}
G^R(t,s)F = F(\frac{t+s}{2}) - \frac{F(0)}{2} J_{0}(\omega\sqrt{t^2-s^2})
-\omega(t-s)\int_{0}^{\frac{t+s}{2}}F(b)\frac{J_{1}(\omega\sqrt{(t-s)(t+s-2b)})}{\sqrt{(t-s)(t+s-2b)}}db
\end{equation}
\begin{equation}\label{GL Def}
G^L(t,s)G = G(\frac{t-s}{2}) - \frac{G(0)}{2} J_{0}(\omega\sqrt{t^2-s^2})-\omega(t+s)\int_{0}^{\frac{t-s}{2}}G(c)\frac{J_{1}(\omega\sqrt{(t+s)(t-s-2c)})}{\sqrt{(t+s)(t-s-2c)}}dc.
\end{equation}

\section{Well-Posedness by Fixed-Point Argument}\label{AppendixLN}
In this section we outline the proof of \cite[Theorem 4.3]{LiNi2020} for the special case of two particles, one photon and one electron.  The proof relies on a fixed-point argument. For the photon-electron system, the fixed-point argument may not seem as valuable as the proof of the same result in \cite[Theorem 5.1]{KLTZ}, because it does not construct an explicit algorithm for the time evolution of $\Psi$. But the fixed-point-based proof is easily generalized to the many-particle case, as was done in full generality in \cite{LiNi2020}. We will now state the theorem.
\begin{theorem}\label{thm:2body}
Let $\theta \in [0,2\pi)$ be a constant phase. Let $\mathcal{S}_1$ denote the set of space-like configurations in $\mathcal{M}$, as was defined in (\ref{2body S1 definitionL}), and let $\mathcal{C} \subset \partial \mathcal{S}_1$ be the coincidence set defined by $\{ \textbf{x}_\text{ph}=\textbf{x}_\text{e}\}$, and let $\mathcal{I}$ be the initial surface. Let  the function $\mathring \Psi : \mathcal{I} \rightarrow \mathbb{C}^4$ be $C^1$ and compactly supported in the half-space $\mathcal{I} \cap \overline{\mathcal{S}_1}$, and suppose that the initial-value function also satisfies the boundary conditions given by (\ref{2 body psi boundaryL}). Then the following initial-boundary value problem for the wave function $\Psi: \mathcal{M} \rightarrow \mathbb{C}^4$,
\begin{equation}
\left\{
\begin{array}{rclr} 
-i\hbar D_{\text{ph}} \Psi &= &0  & 
\\
-i\hbar D_{\text{e}} \Psi + m_\text{e} \Psi &= &0  &\text{in } \mathcal{S}_1 
\\
\Psi &= &\mathring \Psi  &\text{on } \mathcal{I} 
\\
\psi_{+-} &= &e^{i\theta}\psi_{-+} \hspace{0.5cm}&\text{on } \mathcal{C}
\end{array}
\right.
\end{equation}
has a unique global-in-time solution that is supported in $\overline{\mathcal{S}_1}$, and depends continuously on the initial data $\mathring \Psi$.
\end{theorem}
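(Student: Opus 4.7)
The plan is to recast the IBVP as a system of Volterra-type integral equations along the null characteristics of the photonic transport and the timelike characteristics of the Dirac transport, and then apply the Banach fixed-point theorem in an appropriate function space. I first work on a finite time slab and then iterate to get a global-in-time solution. Setting up the integral equations: the photonic components $\psi_{-\varsigma_1}$ are constant along right-moving null rays $s_{\text{ph}}-t_{\text{ph}}=\text{const}$ up to a sourcing term $-i\omega\psi_{-\overline{\varsigma}_1}$ from the electron equation, and $\psi_{+\varsigma_1}$ are constant along left-moving rays $s_{\text{ph}}+t_{\text{ph}}=\text{const}$ up to the analogous source. Integrating each equation along its characteristic from the point $(\textbf{x}_{\text{ph}},\textbf{x}_{\text{e}})\in\mathcal{S}_1$ backward in time either reaches the initial surface $\mathcal{I}$ (contributing an explicit term from $\mathring\Psi$) or, for the left-moving photonic characteristic, first hits the coincidence set $\mathcal{C}$, where the boundary condition $\psi_{+-}=e^{i\theta}\psi_{-+}$ reflects it into a right-moving characteristic which is then followed back to $\mathcal{I}$. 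This yields a representation formula of the form $\Psi=\mathcal{F}[\mathring\Psi]+\mathcal{T}[\Psi]$ in which $\mathcal{F}$ is the free-propagation-plus-reflection term depending only on $\mathring\Psi$ and $\mathcal{T}$ is a linear integral operator given by one-dimensional integrals of the components of $\Psi$ along characteristic segments of total length $O(t)$.

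Fix $T>0$ and let $\mathcal{D}_T:=\overline{\mathcal{S}_1}\cap\{t_{\text{ph}},t_{\text{e}}\in[0,T]\}$, intersected with the forward characteristic cone of the compact support of $\mathring\Psi$, which by finite speed of propagation is itself compact. Choose the Banach space $X_T:=C^0(\mathcal{D}_T,\mathbb{C}^4)$ with the sup norm, and consider the affine map $\Phi\mapsto \mathcal{F}[\mathring\Psi]+\mathcal{T}[\Phi]$ on $X_T$. Because the photonic boundary reflection contributes a factor of unit modulus and the Dirac source contributes factors bounded by $\omega=m_{\text{e}}/\hbar$ along characteristic segments of length at most $2T$, one obtains a Lipschitz estimate of the form
\begin{equation}
\|\mathcal{T}[\Phi_1]-\mathcal{T}[\Phi_2]\|_{X_T}\leq C\,\omega\,T\,\|\Phi_1-\Phi_2\|_{X_T},
\end{equation}
with $C$ a universal constant. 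Choosing $T=T_0$ small enough so that $C\omega T_0<1$ makes the map a contraction on $X_{T_0}$, and Banach's theorem yields a unique fixed point $\Psi\in X_{T_0}$, which by construction solves the IBVP on the time slab $\{0\leq t_{\text{ph}},t_{\text{e}}\leq T_0\}$. Since $T_0$ depends only on $\omega$, the argument can be repeated using $\Psi(\cdot,T_0)$ as new initial data on each successive slab $[nT_0,(n+1)T_0]$, piecing together a unique global-in-time solution. Uniqueness on all of $\mathcal{S}_1$ follows because two solutions would give two fixed points of the same contraction on every slab. The support property $\mathrm{supp}\,\Psi\subset\overline{\mathcal{S}_1}$ is automatic from the construction, since every characteristic segment used in $\mathcal{F}$ and $\mathcal{T}$ stays inside $\overline{\mathcal{S}_1}$ once the boundary reflection is incorporated. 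Continuous dependence on the initial data is immediate from the explicit linear dependence of $\mathcal{F}[\mathring\Psi]$ on $\mathring\Psi$ together with the uniform contraction constant.

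The main obstacle is bookkeeping the reflected characteristics and verifying $C^1$ regularity of the limit. When the left-moving photonic characteristic through $(\textbf{x}_{\text{ph}},\textbf{x}_{\text{e}})$ meets $\mathcal{C}$ at some point $(\textbf{y},\textbf{y})$, the reflected right-moving characteristic from $(\textbf{y},\textbf{y})$ continues to the initial surface; one must check that the composite characteristic depends Lipschitz-continuously on the initial point, that its total length is bounded uniformly on $\mathcal{D}_T$, and that the integral over this composite segment preserves the contraction estimate. The $C^1$ smoothness and the boundedness of first derivatives of $\Psi$ then follow by differentiating the integral equation, noting that the boundary-compatibility assumption on $\mathring\Psi$ (namely $\mathring\psi_{+-}=e^{i\theta}\mathring\psi_{-+}$ at coincident points of $\overline{\mathcal{I}\cap\mathcal{S}_1}$) ensures no spurious singularity appears when the characteristic is reflected for the first time, so that the fixed-point iteration can be carried out in the smaller space of functions whose derivatives are also uniformly bounded and continuous up to $\mathcal{C}$.
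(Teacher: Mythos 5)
Your proposal follows essentially the same route as the paper's Appendix~B proof: recast the IBVP as a system of integral equations along characteristics, with the left-moving photonic characteristic reflected at $\mathcal{C}$ via the boundary condition, and apply Banach's fixed-point theorem to the resulting affine map. The only technical difference is that the paper obtains the contraction on an arbitrary time slab in one step by working in $C^1_b$ with the exponentially weighted norm $\|e^{\gamma t_{\text{e}}}\,\cdot\,\|_{L^\infty}$ for $\gamma>4\omega$, whereas you use the plain sup norm on a slab of length $T_0\lesssim 1/\omega$ and then continue slab by slab; both variants are standard and equivalent here.
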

\begin{proof}
    We specialize the proof of \cite{LiNi2020}, who established the result for a much more general class of IBVPs, one that  allowed data to transfer from the $N$ particle sector to other sectors. Solving the initial-boundary value problem above is equivalent to solving the following system of integral equations
\begin{equation}\label{FP1}
    \psi_{-\varsigma_1}(\textbf{x}_\text{ph},\textbf{x}_\text{e})= \mathring{\psi}_{-\varsigma_1}(s_\text{ph}-t_\text{ph},s_\text{e}+\varsigma_2 t) - i\omega \int_{0}^{t_\text{e}}\psi_{-\overline{\varsigma}_1}(0,s_\text{ph}-t_\text{ph},\tau, s_\text{e}+\varsigma_1 (t-\tau))d\tau
\end{equation}
\begin{equation}\label{FP2}
    \psi_{++}(\textbf{x}_\text{ph},\textbf{x}_\text{e})= \mathring{\psi}_{++}(s_\text{ph}+t_\text{ph},s_\text{e}+ t) - i\omega \int_{0}^{t_\text{e}}\psi_{+-}(0,s_\text{ph}+t_\text{ph},\tau, s_\text{e}+ (t-\tau))d\tau
\end{equation}
\begin{equation}\label{FP3}
    \psi_{+-}(\textbf{x}_\text{ph},\textbf{x}_\text{e})= \left\{\begin{array}{ll}
    \mathring{\psi}_{+-}(s_\text{ph}+t_\text{ph},s_\text{e}- t) - i\omega \int_{0}^{t_\text{e}}\psi_{++}(\textbf{x}_\text{ph},\tau, s_\text{e}- (t-\tau))d\tau \quad &\mbox{ on }\mathcal{F},
    \\ 
    e^{i\theta} \left[\mathring{\psi}_{-+}(s_\text{e}-t_\text{e},s_\text{ph}+t_\text{ph})-i\omega \int_{0}^{T} \psi_{--}(T,S,\tau, s_\text{ph}+ (t_\text{ph}-\tau))d\tau \right] \\
    \hspace{1in}\mbox{}-i\omega\int_{T}^{t_\text{e}}\psi_{++}(\textbf{x}_\text{ph},\tau, s_\text{e}- (t-\tau))d\tau \quad &\mbox{ on }\mathcal{N}.
    \end{array}\right.
\end{equation}
Here
\begin{equation}
    T:=\frac{1}{2}(s_\text{ph}+t_\text{ph}-s_\text{e}+t_\text{e})>0, \quad S:=\frac{1}{2}(s_\text{ph}+t_\text{ph}+s_\text{e}-t_\text{e})
\end{equation}
is the space-time location where the backwards light-cones of the photon and electron cross. To prove existence of a solution to the integral equations we use the standard fixed-point method. Let
\begin{equation}
    \mathcal{S}^T:=\left\{ (\textbf{x}_\text{ph}, \textbf{x}_{\text{e}})\in\mathcal{S}_1 | t_\text{ph},t_\text{e} \in [0,T]\right\}
\end{equation}
and
\begin{equation}
    \mathcal{D}:=\left\{\Psi \in C^1_b(\mathcal{S}^T,\mathbb{C}^4)| \Psi=\mathring{\Psi} \text{ on } \mathcal{I}\right\}.
\end{equation}
Let $\gamma>0$. We equip the function space $\mathcal{D}$ with the weighted norm
\begin{equation}
    ||\Psi||_\gamma := ||e^{\gamma t_\text{e}}\Psi||_{L^\infty(\mathcal{S}^T)} + \max_{\mu \in \{t_\text{ph},s_\text{ph},t_\text{e},s_\text{e} \}}||e^{\gamma t_\text{e}}\partial_\mu \Psi||_{L^\infty(\mathcal{S}^T)}.
\end{equation}
Clearly $\mathcal{D}$ equipped with this norm is a Banach space. Let $F$ denote the mapping defined by equations (\ref{FP1},\ref{FP2},\ref{FP3}), so that these equations reduce to $\Psi=F(\Psi)$. It suffices to show that $F$ is a well-defined map from $\mathcal{D}\to \mathcal{D}$, and that 
\begin{equation}
        ||F(\Psi)-F(\Phi)||_\gamma \leq \frac{4 \omega}{\gamma} ||\Psi-\Phi||_\gamma.
    \end{equation}
Setting $\gamma>4\omega$, it follows that $F$
contraction mapping on $\mathcal{D}$, so that by Banach's fixed point theorem there exists a unique wave function fixed by the integral mapping.
\end{proof}
This technique is applied in \cite{LiNi2020} to systems with many massive particles, and so also holds for the case of a 3-body wave function of one photon between two electrons.
\section{Data Availability and Conflict of Interest Statement} This manuscript does not use or contain any data.  On behalf of all authors, the corresponding author states that there is no conflict of interest.

\bibliographystyle{plain}

\end{document}